\theoremstyle{plain}
\newtheorem{theorem}{Theorem}[section]
\newtheorem{lemma}[theorem]{Lemma}
\newtheorem{claim}[theorem]{Claim}
\theoremstyle{definition}
\newtheorem{definition}[theorem]{Definition}
\newcommand {\br} [1] {\ensuremath{ \left( #1 \right) }}
\newcommand {\ket}[1] {\ensuremath{ \left| #1 \right\rangle }}
\DeclareMathOperator{\DL}{DL}
\DeclareMathOperator*{\Rob}{Rob}
\newcommand{\cRef}[1]{Ref.~\cite{#1}}
\newcommand {\suppress}[1]{}
\newcommand {\eps} {\varepsilon}
\newcommand{\SR}{\mathrm{SR}}
\newcommand{\prt}{\xi}
\newcommand{\EqDef}{\stackrel{\mathrm{def}}{=}}
\def\cG{\mathcal{G}}
\newcommand{\Eq}[1]{Eq.~(\ref{#1})}
\newcommand{\Fig}[1]{Fig.~\ref{#1}}
\newcommand{\Sec}[1]{Sec.~\ref{#1}}
\newcommand{\Thm}[1]{Theorem~\ref{#1}}
\newcommand{\Lem}[1]{Lemma~\ref{#1}}
\newcommand{\footremember}[2]{%
    \footnote{#2}
    \newcounter{#1}
    \setcounter{#1}{\value{footnote}}%
}
\newcommand{\ignore}[1]{}
\title{An area law for 2D frustration-free spin systems}
\author{%
  Anurag Anshu \footremember{ucb}{School of Engineering and Applied Sciences, Harvard University, Cambridge, USA} 
   \and Itai Arad\footremember{technion}{Physics Department, Technion, Israel.}%
  \and David Gosset \footremember{iqc}{Institute for Quantum Computing, University of Waterloo, Canada.} \footremember{co}{Department of Combinatorics and Optimization, University of Waterloo, Canada.}%
  }
\date{}
\begin{document} 
\maketitle
\begin{abstract}
  We prove that the entanglement entropy of the ground state of a
  locally gapped frustration-free 2D lattice spin system satisfies
  an area law with respect to a vertical bipartition of the lattice
  into left and right regions. We first establish that the ground state projector of
  any locally gapped frustration-free 1D spin system can be
  approximated to within error $\epsilon$ by a degree
  $O(\sqrt{n\log(\epsilon^{-1})})$ multivariate polynomial in the
  interaction terms of the Hamiltonian. This generalizes the optimal bound on the approximate degree of
  the boolean AND function, which corresponds to the special case of
  commuting Hamiltonian terms. For 2D spin systems we then
  construct an approximate ground state projector (AGSP) that
  employs the optimal 1D approximation in the vicinity of the
  boundary of the bipartition of interest. This AGSP has
  sufficiently low entanglement and error to establish
  the area law using a known technique.
\end{abstract}

\section{Introduction} 

The information-theoretic view on quantum matter has had widespread
impact in physics. For instance, tools from quantum Shannon theory
have provided insights into the black-hole paradox \cite{HaydenP07}
and the notion of topological entanglement entropy has been crucial
for understanding and classifying phases of matter \cite{KitaevP06}.
This viewpoint has also permeated the computational side of
condensed matter physics, and has led to the identification of
entropic properties known as \textit{the area laws}, which are
hallmarks of classical simulability in many physically relevant
settings. A state of a lattice spin system is said to satisfy an
area law if its entanglement entropy with respect to any bipartition
scales with the size of its boundary. This restricts the quantum
correlations arising in the state, and enables an efficient
classical representation of the state for one-dimensional (1D)
lattice systems \cite{Vidal03}. A breakthrough result of Hastings
\cite{Hastings07} established an area law for gapped ground states
in one dimension. Subsequent improvements were obtained in
Refs.~\cite{AradLV12, AradKLV13} using new tools from combinatorics
and approximation theory. This led to an efficient classical
algorithm for computing ground states \cite{LandauVV13, AradLVV17},
providing a rigorous justification for the success of the DMRG
algorithm \cite{White93}. Recently, area law was also shown for
the ground states of 1D long range hamiltonians \cite{KuwaharaS20}.

The area law conjecture for two or higher dimensional systems has
remained a significant open question, see e.g.,
Refs.~\cite{eisert2008area, PhysRevA.80.052104,
PhysRevLett.113.197204, PhysRevB.92.115134}. It can be motivated by
the following ``locality intuition": 

\vspace{0.05in}

\textit{Locality of correlations in the vicinity of the boundary of
a region \underline{should} imply an area law for the region}. 

\vspace{0.05in}

In particular, this suggests that the area law should hold for
gapped ground states since they possess a finite correlation length
\cite{Hastings04}. Whether or not this intuition can be made
rigorous remains to be seen \cite{Hastings07a}. While
correlation decay has been shown to imply an area law in 1D
\cite{BrandaoH13}, the locality intuition has no formal support in
higher dimensions.

In this work, we prove that the unique ground state of any
\textit{frustration-free, locally gapped} 2D lattice spin system
satisfies an area law scaling of entanglement entropy with respect
to a vertical cut that partitions the system into left and right
regions, see \Fig{fig:set-up}.
\begin{theorem}[Informal]
  Consider a locally gapped, frustration-free Hamiltonian with
  geometrically local interactions in 2D and a unique
  ground state. The ground state entanglement entropy with respect
  to a vertical bipartition of length $n$ is at most
  $n^{1+O\left(\frac{1}{\mathrm{poly}{\log(n)}}\right)}$. 
  \label{thm:informalarea}
\end{theorem}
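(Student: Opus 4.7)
The plan is to follow the AGSP (approximate ground state projector) strategy pioneered by Arad, Kitaev, Landau, and Vazirani and refined by Arad, Landau, and Vidick. Concretely, I would construct an operator $K$ on the full 2D system that (i) fixes the unique ground state, $K\ket{\Omega}=\ket{\Omega}$, (ii) satisfies $\|K - \ketbra{\Omega}\| \le \Delta$ on its orthogonal complement, and (iii) has Schmidt rank at most $D$ across the vertical cut. The standard AGSP-to-area-law bootstrap then yields $S(\Omega)=O(\log D)$ whenever $D\cdot\Delta \lesssim 1$, so the quantitative goal is $\log D = n^{1+O(1/\mathrm{polylog}(n))}$ together with a matching $\Delta$.

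The engine of the construction is the 1D polynomial approximation result asserted in the abstract: for any locally gapped, frustration-free chain of length $L$ with local terms $h_1,\dots,h_L$, the ground state projector is $\epsilon$-approximated by a multivariate polynomial of degree $d=O(\sqrt{L\log(1/\epsilon)})$ in the $h_i$. The locality intuition tells us that only a strip of modest width $w$ around the vertical cut contributes to entanglement across that cut; far from the cut, any Schmidt rank blow-up is unconstrained. I would therefore tile a width-$w$ strip along the cut into $\sim n$ short horizontal segments (one per row of the lattice), apply the 1D polynomial approximation to each segment — using that the local gap of the 2D Hamiltonian descends to a local gap of the induced 1D chain on that segment — and multiply the resulting per-row polynomial AGSPs together to form the part of $K$ that crosses the cut. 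Away from the strip, a wide bulk annihilator with no Schmidt rank cost across the cut cleans up the residual energy.

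The budget works out roughly as follows. Each per-row polynomial is a sum of monomials of degree $d$ in $w$-supported operators, contributing at most $q^{O(wd)}$ to the Schmidt rank at the cut (with $q$ the local Hilbert space dimension); stacking $n$ rows gives $\log D = O(n\,w\,d)$ up to polylogarithmic factors. The per-row errors add, so $\Delta = O(n\epsilon) + (\text{bulk error})$. Taking $\epsilon = 1/\mathrm{poly}(n)$ and $w = \mathrm{polylog}(n)$ makes $d = \mathrm{polylog}(n)$ and yields $\log D$ scaling as $n \cdot \mathrm{polylog}(n)$, matching the theorem's $n^{1+o(1)}$ target after feeding the result into the AGSP bootstrap.

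The main obstacle is controlling multiplicative error and Schmidt rank growth when composing the per-row AGSPs with each other and with the bulk annihilator: naive composition would yield errors multiplicative in the number of patches and Schmidt ranks that multiply unfavorably. Obtaining additive error bounds requires a detectability-lemma style ordering, in which each factor annihilates its own local energy and nearly commutes with distant factors — both properties resting on frustration-freeness. A secondary subtlety is verifying that the local gap assumption on the full 2D Hamiltonian implies a local gap of every induced sub-chain appearing in the construction, so that the 1D polynomial approximation is actually applicable throughout the strip and not merely in isolation.
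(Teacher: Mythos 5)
Your high-level frame (AGSP bootstrap plus the 1D optimal polynomial) is the right one, but the way you deploy the 1D result does not close, and your budget is off by an exponential factor. The bootstrap requires $\mathrm{SR}(K)\cdot\|K-\ketbra{\Omega}\|\le 1/2$, and any AGSP crossing a cut of length $n$ in your construction has $\log \mathrm{SR}(K)=\Omega(n)$ (you yourself estimate $n\,\mathrm{polylog}(n)$), so the error must be $e^{-\Omega(n)}$ at the very least; your $\Delta=O(n\epsilon)$ with $\epsilon=1/\mathrm{poly}(n)$ misses this by an exponential margin. Worse, it cannot be repaired within your decomposition: a polynomial supported on a single row segment of width $w=\mathrm{polylog}(n)$ can never approximate even that row's projector to error better than roughly $e^{-\Theta(w)}$ (the optimal $e^{-s^2/w}$ tradeoff is only available up to degree $s\sim w^{1-\delta}$), so stacking $n$ rows leaves total error no better than $n\,e^{-\mathrm{polylog}(n)}$, nowhere near $e^{-\Omega(n)}$. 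There is also no mechanism in your sketch for composing the per-row approximations: even exact per-row projectors multiplied together are not close to the global projector, because adjacent rows share qudits, do not nearly commute, and the merge error between consecutive blocks is controlled by the size of their overlap, which is $O(1)$ in a row-by-row product; the detectability-lemma ordering you invoke gives only constant shrinkage per pass and does not fix this.

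The paper runs the 1D machinery in the other direction and adds one further essential ingredient. Each row $j$ of the width-$w$ strip is coarse-grained into a single operator $H_j$ consisting of the in-strip plaquettes of that row together with projectors $(I-\Pi_{L,j})$ and $(I-\Pi_{R,j})$ that absorb everything outside the strip; the resulting chain of $n$ operators, ordered \emph{along} the cut, satisfies the gap and merge properties with $\Delta=\Theta(\gamma/w)$ (Lemma~\ref{lem:2dgapmerge}). The near-linear-degree approximation (Theorem~\ref{thm:almostlin}) — itself built by a recursive robust-polynomial merge over overlapping intervals with overlap $n/2r$, which is precisely what controls the composition error you flagged as the ``main obstacle'' — is then applied to this vertical chain, giving error exponentially small in (nearly) the cut length $n$, as the bootstrap demands. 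Finally, the naive estimate $\log\mathrm{SR}\sim \mathrm{degree}\times\log$ is still too large once $P$ is raised to the power $w^2$ to suppress the error; the Schmidt-rank amortization lemma (Lemma~\ref{lem:srpab}), proved by interpolation in column variables, shows the rank is exponential only in $(\mathrm{degree})/w + wn$, and the final choice is $w=\Theta(n^{(\log n)^{-1/5}})$, not $\mathrm{polylog}(n)$, which is exactly where the exponent $n^{1+O(1/\mathrm{polylog}(n))}$ comes from. These three elements — applying the 1D approximation along the cut rather than across the strip, the recursive merge, and rank amortization — are the ideas missing from your proposal.
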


\begin{figure}
\centering
\begin{tikzpicture}[xscale=0.45,yscale=0.45]

\draw [fill=blue!10!white, thick] (0.5,0.5) rectangle (10.5, 8.5);
\draw [fill=blue!30!white] (0.5, 0.5) rectangle (5.5,8.5);
\draw [blue, ultra thick] (5.5, 8.9)--(5.5,0.1);
\node at (5, 9){\footnotesize $c$};
\node at (6.5, 9){\footnotesize $c+1$};
\foreach \i in {1,...,10}
{
\foreach \j in {1,...,8}
   \draw (\i, \j) node[circle, fill=black, scale=0.25]{};
}

\draw [fill=white!80!green, thick] (1,1) rectangle (2,2);
\node at (1.5, 1.5) {\footnotesize $h_{ij}$};

\draw [<->] (0.5,0) -- (10.5, 0);
\node at (5.5,-0.5) {\footnotesize $L$};
\draw [<->] (0,0.5) -- (0, 8.5);
\node at (-1, 4.5) {\footnotesize $n+1 $};
\end{tikzpicture}
\caption{\small $L\times (n+1)$ lattice, with local terms $h_{ij}$
acting on plaquettes as depicted in green. A vertical
cut (blue line) partitions the system into left and right
regions, and intersects $n$ plaquettes. }
\label{fig:set-up}

\end{figure}
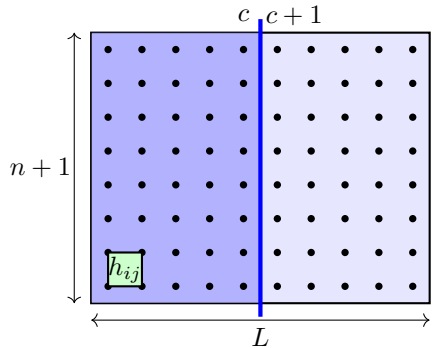
A frustration-free quantum spin system has the property that its
ground state has minimal energy for each term in the Hamiltonian.
Such a system is said to be locally gapped if there exists a
positive constant that lower bounds the spectral gap of any
subset of the local Hamiltonian terms.  
We believe that our techniques readily generalized to rectangular bi-partitions
on the lattice. This can then be used to prove area laws for other bi-partitions via appropriate tiling, including those
featuring gapless edge excitations \cite{BachmannHN15}. 
Using the techniques introduced in \cite{AradLVV17} and further
developed in \cite{Abrahamsen2020}, Theorem~\ref{thm:informalarea}
readily extends to degenerate ground states. We note that a previous
work~\cite{BeaudrapOE10} established the area law for the special
case of spin-$1/2$ frustration-free systems, using an exact
characterization of the ground space from
\cRef{bravyi2011efficient}. The area law is known to be false
on general graphs \cite{aharonov2014local, AHS20}.

The proof of \Thm{thm:informalarea} is obtained via new insights in
the approximation theory of quantum ground states. For a Hamiltonian
with unique ground state $|\Omega\rangle$, an $\epsilon$-approximate
ground state projector (AGSP) is an Hermitian operator $K$ such that
$K|\Omega\rangle=|\Omega\rangle$ and $\|K-|\Omega\rangle\langle
\Omega|\|\leq \epsilon$. In \cRef{AradLV12} it has been shown that
an AGSP with small error $\epsilon$ and low entanglement with
respect to a given bipartition of the lattice --- as measured by its
Schmidt rank $\mathrm{SR}(K)$\footnote{The Schmidt rank $\SR(K)$ of
an operator $K$ with respect to a bipartition of the system is the
minimal number $R$ of tensor product operators $A_\alpha\otimes
B_\alpha$ such that $K=\sum_{\alpha=1}^R A_\alpha\otimes B_\alpha$.}
--- implies an upper bound on the entanglement of the ground state
itself across the bipartition. In particular, \cRef{AradLV12} shows
that if $\epsilon\cdot \mathrm{SR}(K)\leq 1/2$ then the entanglement
of the ground state is at most $O(1)\cdot
\mathrm{log(\mathrm{SR}(K))}$, see \Thm{thm:agsp} for a precise
statement. In this way the study of entanglement in quantum ground
states can be reduced to the study of entanglement in low-error
AGSPs. 

\paragraph*{Approximate degree of quantum ground states}

To describe our techniques, consider a collection
$\{H_j\}_{j=1}^{n}$ of Hermitian operators satisfying $0\leq H_j\leq
I$ for all $j$.  Suppose the Hamiltonian $H\EqDef\sum_{j=1}^{n}H_j$
has a unique ground state $|\Omega\rangle$ satisfying
$H_j|\Omega\rangle=0$ for all $j$. 

We consider AGSPs which are multivariate polynomial functions 
\begin{align}
  K=P(H_1,H_2,\ldots, H_n).
\label{eq:multipol}
\end{align}
In general, this kind of polynomial is a linear combination of
monomials of the form
\begin{align*}
  H_{j_1}H_{j_2}\ldots H_{j_m} \qquad \quad j_k\in [n].
\end{align*}
As discussed above, in order to bound the entanglement of the ground
state, it suffices to construct an AGSP with sufficiently small
error and sufficiently small entanglement. Moreover, the techniques of Ref.~\cite{AradKLV13} \textit{suggest}
that polynomial degree can be taken as a proxy for entanglement in
AGSPs of this type. Thus, we ask: what is the minimal polynomial
degree $s$ needed to approximate the ground state projector to
within a given error $\epsilon$? 

Following \cRef{AradKLV13}, it is instructive to consider the
special case in which our AGSP \eqref{eq:multipol} can be
expressed as $K=p(H)$ where $p$ is a \textit{univariate} polynomial.
This kind of AGSP has the nice feature that it commutes with $H$ and
can therefore be diagonalized in the same basis. Using this fact we
see that such a polynomial is an $\epsilon$-AGSP iff
\begin{align}
  p(0)=1 \qquad \text{and} \qquad 
    \max_{x\in \mathrm{\mathrm{Spec}_{+}(H)}} |p(x)|\leq \epsilon.
\label{eq:univariate}
\end{align}
Here $\mathrm{Spec}_{+}(H) $ is the set of nonzero eigenvalues of
$H$. Note that since $0\leq H_j\leq I$ we have $\|H\|\leq n$ and
therefore $\mathrm{Spec}_{+}(H) \subseteq [\gamma,n]$ where $\gamma$
is the smallest nonzero eigenvalue or spectral gap of $H$. By
choosing $p$ to be a rescaled and shifted Chebyshev polynomial of
degree $s$ one obtains an AGSP with \cite{AradKLV13}
\begin{align}
  \epsilon=e^{-\Omega(s\sqrt{\frac{\gamma}{n}})}.
\label{eq:chebtradeoff}
\end{align}
This scaling of error with degree is optimal, a consequence of
the extremal property of Chebyshev polynomials \cite[Proposition
2.4]{SachdevaV14}. Here we did not use any properties of the
Hamiltonian except the fact that $\mathrm{Spec}(H) \subseteq
[\gamma,n]$. We see that a spectral gap lower bounded by a positive
constant ensures a good $\epsilon=O(1)$ approximation by a
$O(\sqrt{n})$-degree polynomial. This form of locality in the ground
state is somewhat distinct from finite correlation length. 

Remarkably, the tradeoff \Eq{eq:chebtradeoff} between polynomial
degree and error can be improved in certain important special cases.
For example, suppose the Hamiltonian terms are commuting projectors,
i.e., $[H_i,H_j]=0$ and $H_i^2=H_i$.  In that case the problem of
approximating the ground state is formally equivalent to the problem
of approximating the multivariate AND function of $n$ binary
variables (equivalently, the OR function), see \Sec{sec:commute} for
details\footnote{The $\epsilon$-approximate degree of AND has the remarkable low-error behaviour
$\widetilde{\mathrm{deg}}_{\epsilon}(\mathrm{AND}) =
O(\sqrt{n\log(\epsilon^{-1})})$ \cite{BuhrmanCWZ99, de2008note}.
The log under the square root reflects the fact that the error
probability of Grover's search algorithm can be reduced using a
better strategy than the naive parallel amplification.}.  The
distinguishing feature of the commuting case for our purposes is
that, crucially, all eigenvalues of $H$ are integers between $0$ and
$n$ and by exploiting the fact that
$\mathrm{Spec}_{+}(H)\subseteq\{1,2,\ldots, n\}$ one can construct a
suitable univariate polynomial $p$ that achieves \Eq{eq:univariate}
with 
\begin{align}
\epsilon=e^{-\Omega(s^2/n)}.
\label{eq:better}
\end{align}
This improves upon \Eq{eq:chebtradeoff} in the low-error regime
$s\gg\sqrt{n}$ and is known to be optimal in the commuting case
\cite{KahnLS96}.

More generally, for a collection of possibly non-commuting operators
$\{H_j\}_{j=1}^{n}$ let us call a degree-$s$ multivariate polynomial
AGSP \eqref{eq:multipol} \textit{optimal} if the approximation error
matches \Eq{eq:better}. Our first result establishes that
one-dimensional frustration-free locally-gapped ground
states can be optimally approximated in this sense. 
\begin{theorem}[Optimal approximation of 1D ground states, informal]
  For any constant $\delta\in (0,1/2)$ and $s\in (\sqrt{n}, 
  n^{1-\delta})$, there is a degree $O(s)$ polynomial which 
  approximates the ground state projector of a locallycally gapped
  1D frustration-free quantum spin system to within error
  \Eq{eq:better}. \label{thm:informal1d}
\end{theorem}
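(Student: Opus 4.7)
The target scaling $\epsilon = e^{-\Omega(s^2/n)}$ at degree $O(s)$ is exactly the approximate degree of $\mathrm{AND}_n$, so my plan is to lift Sherstov's robust-polynomial composition of $\mathrm{AND}$ to the non-commuting 1D setting. I would proceed in three stages.

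First, I would partition the chain into $k$ blocks $B_1,\ldots,B_k$ of size $m = n/k$ and build an approximate block projector on each. By the local gap hypothesis, the block Hamiltonian $H^{(i)} = \sum_{j \in B_i} H_j$ has a constant spectral gap $\gamma > 0$, so a shifted-and-rescaled Chebyshev polynomial $q_i$ of degree $d_1 = O(\sqrt{m})$ gives $\tilde{\Pi}_i := q_i(H^{(i)})$ with $q_i(0) = 1$ (hence $\tilde{\Pi}_i \ket{\Omega} = \ket{\Omega}$ by frustration-freeness) and $\norm{\tilde{\Pi}_i (I - \Pi_i)} \le \eta$ for a small constant $\eta < 1/3$, where $\Pi_i$ is the exact block ground projector.

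Second, I would compose the block AGSPs via a \emph{robust} multivariate polynomial. Sherstov's robust approximate degree result provides a polynomial $P(y_1,\ldots,y_k)$ of degree $d_2 = O(\sqrt{k \log(1/\epsilon)})$ with $P(1,\ldots,1) = 1$ and $|P(y) - \mathrm{AND}_k(\mathrm{round}(y))| \le \epsilon$ whenever each $y_i \in [0, 1/3] \cup [2/3, 1]$. The candidate AGSP is $K := P(\tilde{\Pi}_1,\ldots,\tilde{\Pi}_k)$ with a chosen operator ordering, a multivariate polynomial in $\{H_j\}$ of total degree $d_1 d_2 = O(\sqrt{n \log(1/\epsilon)})$, which equals $O(s)$ when $\log(1/\epsilon) = s^2/n$; the hypothesis $s \le n^{1-\delta}$ keeps all parameters in range.

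Third, I would verify that $K$ is an $\epsilon$-AGSP. By frustration-freeness and $q_i(0) = 1$, one has $K\ket{\Omega} = P(1,\ldots,1)\ket{\Omega} = \ket{\Omega}$ exactly. For $\ket{\phi} \perp \ket{\Omega}$, the 1D detectability lemma --- together with the uniqueness of the ground state and the local gap --- ensures that a constant fraction of blocks ``detect'' $\ket{\phi}$ in the sense that $\norm{\tilde{\Pi}_i \ket{\phi}}^2 \le 1 - \Omega(1)$; the robust-polynomial bound then propagates this to $\norm{K\ket{\phi}} \le \epsilon$.

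The main obstacle is Stage 3: Sherstov's robust approximation is phrased for scalar inputs, but our ``inputs'' $\tilde{\Pi}_i$ are non-commuting operators. Lifting the analysis to the operator setting requires an operator analog of Sherstov's bound, which I would establish by combining an ALVV-style operator robustness framework with the 1D detectability lemma. Concretely, one must show that the operator-valued $P(\tilde{\Pi}_1,\ldots,\tilde{\Pi}_k)$ applied to a state orthogonal to $\ket{\Omega}$ behaves, in an appropriate expectation sense, like the scalar polynomial evaluated at inputs incompatible with robust $\mathrm{AND} = 1$, yielding $e^{-\Omega(s^2/n)}$ suppression. Handling the operator ordering in $P$ and controlling cross-terms from non-commutation of the $\tilde{\Pi}_i$'s without losing the $\sqrt{\log(1/\epsilon)}$ factor is where the bulk of the technical work lies.
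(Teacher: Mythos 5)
Your Stages 1--2 set up the right kind of object (Chebyshev on blocks, composed by a robust AND-type polynomial), but the proposal has a genuine gap exactly where you place "the bulk of the technical work": Stage 3 is not a technicality to be deferred, it is the theorem, and the specific route you sketch would fail. With \emph{disjoint} blocks the exact block projectors of adjacent blocks do not commute, and nothing in your argument controls how close an AND-type composition of the $\tilde{\Pi}_i$ (or even of the exact $\Pi_i$) is to the global projector. Your key claim that for every $\ket{\phi}\perp\ket{\Omega}$ a \emph{constant fraction} of blocks detect it is false: AND-composition only needs one block to reject, and worse, $\ket{\phi}$ can have overlap $1-o(1)$ with every block ground space while remaining orthogonal to $\ket{\Omega}$; the local gap alone gives no quantitative handle on such states. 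The paper's resolution is structural, not analytic: the intervals are taken \emph{overlapping}, with overlap comparable to their length, and the detectability lemma yields the merge property $\|\Pi_{AB}\Pi_{BC}-\Pi_{ABC}\|\le 2e^{-|B|\sqrt{\Delta}}$ (\Eq{eq:merge}, \Lem{lem:1Dmerge}). That exponentially-small-in-overlap error is the small parameter that makes the operator-valued composition work (the robust product of the block approximations is compared to $C^\dagger C$ with $C=\Pi^{(1)}\cdots\Pi^{(\xi)}$, and the merge property pulls $C^\dagger C$ to $\Pi$); it has no analogue in a disjoint-block decomposition.

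The second missing ingredient is that the paper never needs an operator analogue of Sherstov's square-root-degree robust AND, which is precisely the object your plan presupposes and does not construct. Instead it uses the one-sided robust product $\widetilde{\Rob}$ of degree only \emph{linear} in the number of inputs, whose error bound (\Lem{lem:roblem}) requires the inputs' spectra to lie in $[0,\epsilon]\cup\{1\}$ --- a condition the block approximations satisfy because $P^{(j)}\Pi^{(j)}=\Pi^{(j)}$ holds exactly. Because this robust polynomial has only linear degree, a single level of blocking cannot reach the full claimed range: balancing the merge error $\xi e^{-\Omega(\sqrt{\Delta}\,n/r)}$ against the target $e^{-\Omega(s^2/n)}$ caps a one-level scheme at roughly $s\lesssim n^{3/4}$, and the paper therefore recurses to depth $b\approx 1+1/(2\delta)$ with half-length overlaps at every scale, Chebyshev only at the base (\Lem{lem:nrb}, then \Thm{thm:1Dpoly} and \Thm{thm:almostlin}). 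So to repair your proposal you would either have to actually prove an operator-valued, exponentially-small-error, $O(\sqrt{k\log(1/\epsilon)})$-degree robust composition for non-commuting inputs --- which is left entirely open in your write-up --- or switch to overlapping intervals, the merge property, and recursion, at which point you have reconstructed the paper's argument.
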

We emphasize that the AGSP in the above theorem is a multivariate
polynomial of the form \eqref{eq:multipol}, and as far as we know
cannot be expressed as a univariate polynomial function of $H$. This
is because we may have $[H_i,H_{i+1}]\neq 0$, and --- in contrast
with the commuting case --- the spectrum $\mathrm{Spec}_{+}(H)$ does
not appear to have a nice characterization that allows us to improve
upon \Eq{eq:chebtradeoff} by a suitable choice of univariate
polynomial $p$. We construct our AGSP via a recursive application of
the robust polynomial method of \cRef{AAG19} (where a subvolume law
for the same class of systems was shown). The resulting polynomial,
which is detailed in \Sec{sec:mergprop}, has a structure which is
reminiscent of a renormalization group flow.

Although it concerns 1D systems, \Thm{thm:informal1d} turns out to
be just what we need to establish the area law in two dimensions.
The key insight is captured by the following modified locality
intuition that we propose, which asserts a direct link between
linear-degree optimal polynomial approximations and area laws:

\vspace{0.05in}

\textit{A linear-degree optimal polynomial
approximation for the ground state in the vicinity of the boundary
of a region \underline{should} imply an area law for the region}. 

\vspace{0.05in}

Here we mean linear in $n$, the number of inputs of the multivariate
polynomial (cf. Eq.~\eqref{eq:multipol}). To understand where this
comes from, suppose we can construct an optimal linear-degree
polynomial $P$ that approximates the ground state projector and is
localized in a width $\sim w$ neighborhood of the boundary of the
bipartition of interest (here we are intentionally vague about the
meaning of localized, see \Sec{sec:2d} for details). Thus, the
degree of $P$ is $\sim w\cdot \mathrm{area}$ and its error is
$\epsilon\leq e^{-\Omega(w\cdot \mathrm{area})}$, where
`$\mathrm{area}$' is the size of the boundary. Now consider an AGSP
$K=P^q$ for some positive integer $q$. The total polynomial degree
of $K$ is thus $D=qw\cdot \mathrm{area}$, and its error is
$\epsilon'=\epsilon^q\leq e^{-\Omega(qw\cdot \mathrm{area})}$. Now
we shall assume that the polynomial $K^q$ is nicely behaved in a
certain sense first identified in \cRef{AradLV12}. In particular, we
assume that its Schmidt rank is \textit{amortized} over the width
$w$ neighborhood of the boundary, in that it scales as
$\mathrm{SR}\sim e^{O(D/w+w\cdot \mathrm{area})}$. Choosing
$q=\Omega(w)$ (say) and letting $w$ be a large constant, we can
ensure $\epsilon'\cdot \mathrm{SR}\leq 1/2$, with
$\log(\mathrm{SR})=O(\mathrm{area})$. Thus, applying the
aforementioned method from \cRef{AradLV12} we would obtain the
desired upper bound $O(\mathrm{area})$ on the ground state
entanglement entropy.

Since the boundary of a region on a 2D lattice is one-dimensional,
the above argument suggests that the 2D area law should follow from
optimal linear-degree polynomial approximations in 1D. To make this
work, in \Sec{sec:2d} we show how our 1D approximation can be used
``in the vicinity of the boundary of the region" and we relate the
entanglement of the resulting AGSP to the polynomial degree. The
area law is then established using the aforementioned method from
\cRef{AradLV12}. The astute reader may note that
\Thm{thm:informal1d} does not quite provide a linear-degree optimal
polynomial as the degree must be $n^{1-\delta}$ for some $\delta\in
(0,1/2)$; a careful treatment of the $\delta\rightarrow 0$ limit
leads to the slight deviation $n^{1+o(1)}$ from area law behaviour
in \Thm{thm:informalarea}, see \Sec{sec:AGSP} for details.

\paragraph*{Discussion} 
There are at least three significant questions left open by our
work. Firstly, one may ask if the assumption of a local spectral gap
can be removed or replaced with one concerning the global spectral
gap of the Hamiltonian. We believe that this could lead to a
generalization of our techniques to frustrated systems. To make
progress here may require a deeper understanding of the interplay
between the local spectral gap and the gap of the full hamiltonian.
Secondly, it is natural to ask if ground states of locally gapped
frustration-free systems can be approximated by efficiently
representable tensor networks such as PEPS of small bond dimension
\cite{verstraete2008matrix}. While it is known that a 2D area law
does not imply such a representation \cite{GeE16}, a more detailed
study of the optimal polynomial approximations considered here may
provide insight into this question. Finally, a natural open question
is to extend our results to local hamiltonian systems on higher
dimensional lattices. As mentioned earlier, this is closely related
to the question of approximating ground states by linear-degree
optimal polynomials.

The rest of the paper is organized as follows. In \Sec{sec:toolkit}
we review some basic approximation tools, the Chebyshev and robust
polynomials. In \Sec{sec:AGSP} we deploy them to construct optimal
polynomial AGSPs for a family of quantum systems that includes 1D
quantum spin systems and we establish Theorem \ref{thm:informal1d}.
Finally, in \Sec{sec:2d} we adapt our methods to the 2D setting and
prove the area law Theorem \ref{thm:informalarea}.

\section{Polynomial approximation toolkit}
\label{sec:toolkit}

In this section we describe methods for approximating multivariate
functions by polynomials. We first describe polynomial
approximations with real-valued variable inputs. Then we generalize
these methods to the local Hamiltonian setting by allowing
operator-valued inputs.

\subsection{Approximation of functions}
\label{sec:approxfunc}

Following \cRef{AAG19} we shall build polynomial approximations by
combining two well-known ingredients: the univariate Chebyshev
polynomials and a robust polynomial \cite{Sherstov12}.  

We will use a rescaled and shifted Chebyshev polynomial defined as
follows. For every $s\in \mathbb{R}_{\geq 0}$ and $\eta\in (0,1)$ we
define a polynomial $T_{\eta,s}:[0,1]\rightarrow \mathbb{R}$ of
degree $\lceil s\rceil$ by
\begin{align}
  T_{\eta,s}(x) \EqDef \frac{T_{\lceil s\rceil}
    \left(\frac{2(1-x)}{1-\eta}-1\right)}{T_{\lceil s\rceil}
      \left(\frac{2}{1-\eta}-1\right)},
\label{eq:chebydef}
\end{align}
where $T_{j}$ is the Chebyshev polynomial of the first kind. To ease
notation later on, the parameter $s$ which determines the degree is
not required to be an integer.  The polynomial \Eq{eq:chebydef} has
the following property which is a direct consequence of Lemma 4.1 of
Ref.~\cite{AradKLV13}.
\begin{lemma}[\cite{AradKLV13}]
  For every $s\in \mathbb{R}_{\geq 0}$ and $\eta\in (0,1)$ we have
  $T_{\eta,s}(0)=1$ and
  \begin{align*}
    |T_{\eta,s}(x)|\leq 2 e^{-2s\sqrt{\eta}} \qquad \eta\leq x\leq 1.
  \end{align*}
\label{lem:cheby}
\end{lemma}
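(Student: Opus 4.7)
The plan is to verify the two claims directly from the definition \eqref{eq:chebydef}, using only elementary properties of Chebyshev polynomials; the result is essentially a restatement of Lemma~4.1 of \cite{AradKLV13} with the parameters specialised to our rescaling, so the only real work is tracking the constants.

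First, the normalisation $T_{\eta,s}(0)=1$ is immediate: substituting $x=0$ into \eqref{eq:chebydef} makes the numerator and denominator literally identical.

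For the upper bound on $[\eta,1]$, I would introduce the affine change of variables $y=\tfrac{2(1-x)}{1-\eta}-1$. As $x$ runs over $[\eta,1]$, $y$ runs over $[-1,1]$ (with $x=\eta\mapsto y=1$ and $x=1\mapsto y=-1$), so the defining property $T_{\lceil s\rceil}(\cos\phi)=\cos(\lceil s\rceil\phi)$ immediately gives $|T_{\lceil s\rceil}(y)|\le 1$ on this range. Thus the numerator in \eqref{eq:chebydef} is bounded in absolute value by $1$ whenever $x\in[\eta,1]$, and it remains to obtain a matching lower bound on the denominator $T_{\lceil s\rceil}(z_0)$ where $z_0=\tfrac{2}{1-\eta}-1=\tfrac{1+\eta}{1-\eta}>1$.

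The main step — and the only real obstacle — is showing that $T_{\lceil s\rceil}(z_0)\ge \tfrac12 e^{2s\sqrt{\eta}}$. For this I would use the hyperbolic representation $T_j(\cosh\theta)=\cosh(j\theta)$ valid for $\theta\ge0$. Writing $z_0=\cosh\theta$, the half-angle identity $\cosh\theta=\tfrac{1+\tanh^2(\theta/2)}{1-\tanh^2(\theta/2)}$ identifies $\tanh(\theta/2)=\sqrt{\eta}$, hence $\theta=2\,\mathrm{arctanh}(\sqrt{\eta})\ge 2\sqrt{\eta}$ by the elementary bound $\mathrm{arctanh}(u)\ge u$ on $[0,1)$. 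Then
\begin{equation*}
T_{\lceil s\rceil}(z_0)=\cosh(\lceil s\rceil\,\theta)\ge \tfrac12 e^{\lceil s\rceil\,\theta}\ge \tfrac12 e^{2s\sqrt{\eta}}.
\end{equation*}

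Combining the two estimates, for $x\in[\eta,1]$ we get $|T_{\eta,s}(x)|\le 1/T_{\lceil s\rceil}(z_0)\le 2e^{-2s\sqrt{\eta}}$, which is the claimed bound. The argument works uniformly in $s\in\mathbb{R}_{\ge0}$ because $\lceil s\rceil\ge s$ is exactly what is needed in the exponent, so no integrality assumption on $s$ enters.
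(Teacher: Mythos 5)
Your proof is correct. Note that the paper itself gives no argument for this lemma --- it is stated as a direct consequence of Lemma~4.1 of \cite{AradKLV13} --- so your write-up supplies the standard proof behind that citation: $|T_{\lceil s\rceil}|\le 1$ on $[-1,1]$ after the affine substitution, plus the growth estimate $T_{\lceil s\rceil}\big(\tfrac{1+\eta}{1-\eta}\big)=\cosh(\lceil s\rceil\,\theta)\ge\tfrac12 e^{2s\sqrt{\eta}}$ via $\tanh(\theta/2)=\sqrt{\eta}$ and $\mathrm{arctanh}(u)\ge u$. All steps check out, including the edge cases $s=0$ (where the bound $2e^{0}\ge 1$ is trivially met) and the use of $\lceil s\rceil\ge s$ to handle non-integer $s$; this is essentially the same Chebyshev-growth argument as in the cited reference, just written out explicitly.
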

Next, we describe the robust polynomial. Our starting point is 
the function $B: [0,1]\rightarrow\{0,1\}$ defined by
\begin{align}
  B(x)\EqDef \begin{cases} 1, & x=1\\ 0, & 0\leq x<1.\end{cases}.
\label{eq:bit}
\end{align}
This function rounds $x$ to a bit in a one-sided fashion. Using
\eqref{eq:bit} we define a (one-sided) ``robust product" that takes
real inputs $x_1,x_2,\ldots, x_m\in [0,1]$ and outputs $1$ if and
only if they are all equal to 1:
\begin{align}
  \Rob(x_1,x_2,\ldots, x_m)\EqDef B(x_1)B(x_2)\ldots B(x_m).
\label{eq:and}
\end{align}
We note that since $B(x_j)^2=B(x_j)$ we may also express \Eq{eq:and} as 
\begin{align}
\Rob(x_1,x_2,\ldots, x_m) = \big(B(x_m)B(x_{m-1})\ldots B(x_1)\big)
  \big( B(x_1)B(x_2)\ldots B(x_m)\big).
\label{eq:and2}
\end{align}
The left-right symmetric expression will be useful to us momentarily
when we extend the definition of the function to allow
operator-valued inputs.

The robust polynomial of interest is an approximation to
\Eq{eq:and}. To this end, let
\begin{align}
  B_i(x)\EqDef \begin{cases} 
                  x, & i=1\\ 
                  x^{i-1}(x-1), &  2\leq i\leq \infty.
               \end{cases}
\label{eq:bi}
\end{align}
Note that for any $x\in [0,1]$ we have
$B(x)=\mathrm{lim}_{i\rightarrow \infty} \sum_{j=1}^{i} B_j(x)$.
Define
\begin{align}
  \widetilde{\Rob}(x_1,x_2,\ldots, x_m)
    \EqDef \sum_{(i_1+i'_1)+\ldots+(i_m+i'_m)\leq 3m}
      \!\!\!\!\!\!\!\!\!\!\!\!\!\!\!\!
      \left(B_{i'_m}(x_m)\ldots, B_{i'_1}(x_1) \right)
      \left(B_{i_1}(x_1)\ldots, B_{i_m}(x_m)\right).
\label{eq:robdefclass}
\end{align}
The above expression is obtained by starting with \Eq{eq:and2},
substituting $B\leftarrow \sum_{j=1}^{\infty} B_j$, and then
truncating the summation so that the total degree of the polynomial
is at most $3m$ (this is somewhat arbitrary). This
polynomial is a good approximation to $\Rob$ in the
following sense.
\begin{lemma}[Special case of Lemma \ref{lem:roblem}]
\label{lem:roblemclass} 
  Suppose that $x_1,x_2,\ldots, x_m\in [0,\eps]\cup \{1\}$ for some
  $\epsilon\leq 1/10$. Then $|\widetilde{\Rob}(x_1,\ldots, x_m) - \Rob(x_1,\ldots ,x_m)|\leq \br{10 \eps}^{m}.$
\end{lemma}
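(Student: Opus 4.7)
The plan is to partition the inputs according to their value. Define $T \EqDef \{j \in [m] : x_j = 1\}$, $T^c \EqDef [m]\setminus T$, and let $k \EqDef |T^c|$. If $k = 0$, every $x_j = 1$ and $B_i(x_j) = \mathbbm{1}[i=1]$, so the unique surviving tuple in the definition of $\widetilde{\Rob}$ is $i_j = i_j' = 1$ for all $j$ (total index sum $2m \leq 3m$); this gives $\widetilde{\Rob} = 1 = \Rob$ and the bound is trivial. I would therefore focus on $k \geq 1$, in which case $\Rob(x_1,\ldots,x_m) = 0$ and the task reduces to showing $|\widetilde{\Rob}| \leq (10\eps)^m$.

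The key observation is that $\widetilde{\Rob}$ is the truncation of an absolutely convergent formal series that vanishes. Since $\sum_{i\geq 1}|B_i(x)| \leq 3\eps$ for $x \in [0,\eps]$, Fubini gives
\begin{equation*}
\sum_{(i_j,i_j')_{j=1}^m:\, i_j, i_j' \geq 1}\ \prod_{j=1}^m B_{i_j}(x_j)B_{i_j'}(x_j) \;=\; \prod_{j=1}^m B(x_j)^2 \;=\; 0,
\end{equation*}
so $\widetilde{\Rob}$ equals (up to sign) the tail of this series over tuples with $\sum_j(i_j+i_j') > 3m$. Moreover $B_i(1) = \mathbbm{1}[i=1]$ forces $i_j = i_j' = 1$ for every $j \in T$ in any surviving term, which further reduces the tail to a sum over tuples $(i_j,i_j')_{j\in T^c}$ with $i_j,i_j' \geq 1$ and $\sum_{j \in T^c}(i_j+i_j') > 3m - 2|T| = m + 2k$.

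To bound this tail I would use the pointwise estimate $|B_i(x)| \leq \eps^{\max(1,i-1)}$ for $x \in [0,\eps]$. Writing $S \EqDef \sum_{j \in T^c}(i_j+i_j')$ and applying $\max(1,i-1) \geq i-1$, every term has absolute value at most $\eps^{S-2k}$; a stars-and-bars count gives $\binom{S-1}{2k-1} \leq 2^{S-1}$ tuples with a given $S$. Substituting $W \EqDef S - 2k$, the tail is therefore bounded by $2^{2k-1}\sum_{W > m}(2\eps)^W$, which for $\eps \leq 1/10$ collapses via the geometric series to a quantity of order $4^k \cdot \eps \cdot (2\eps)^m$. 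Using $k \leq m$, this is at most $\tfrac{1}{8}(8\eps)^m$, comfortably below $(10\eps)^m$.

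The main technical care is in the bookkeeping: the discrepancy between $\max(1,i-1)$ and $i-1$ together with the stars-and-bars count contributes an extra factor of $4^k$ that must be absorbed by the geometric decay, which uses both $k \leq m$ and the hypothesis $\eps \leq 1/10$. No deeper idea is needed beyond careful bounding; the constant $10$ in the target bound is in fact quite loose.
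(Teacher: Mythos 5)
Your proof is correct and follows essentially the same route as the paper's proof of the operator-valued Lemma~\ref{lem:roblem}: view $\widetilde{\Rob}$ as the truncation of the series in the $B_i$'s whose full sum is $\Rob$, bound each omitted term via $|B_i(x)|\le \epsilon^{i-1}$ (which you sharpen to $\epsilon^{S-2k}$ by splitting off the coordinates equal to $1$), count tuples with compositions $\binom{S-1}{\cdot}\le 2^{S-1}$, and finish with a geometric series using $\epsilon\le 1/10$. The only real difference is cosmetic: the paper truncates the series at a finite level $J$ and compares to both $\widetilde{\Rob}$ and $\Rob$ before letting $J$ grow, a formulation that carries over verbatim to noncommuting operator inputs, whereas your direct infinite-series (Fubini) argument exploits the scalar setting.
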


\subsection{Approximation of operators}

Let us now extend our definitions from the previous section to allow
operator-valued inputs. Suppose $O$ is a Hermitian operator with all
eigenvalues in the interval $[0,1]$. The operator-valued Chebyshev
polynomial $T_{\eta,s}(O)$ is defined in the usual way by
substituting $x\leftarrow O$ in \Eq{eq:chebydef}.  By applying Lemma
\ref{lem:cheby} to each eigenvalue of $O$, we obtain the following.

\begin{lemma}
  Let $s\in \mathbb{R}_{\geq 0}$ and $\eta\in (0,1)$. Suppose that
  $O$ is an Hermitian operator with eigenvalues in the interval
  $\{0\}\cup [\eta, 1]$ and let $\Pi$ be the projector onto the
  nullspace of $O$. Then $T_{\eta,s}(O)\Pi=\Pi$ and $ \|T_{\eta,s}(O)-\Pi\|\leq 2 e^{-2s\sqrt{\eta}}.$ 
  \label{lem:chebyop}
\end{lemma}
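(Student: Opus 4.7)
The plan is to reduce the operator statement to the scalar statement of Lemma~\ref{lem:cheby} via the spectral theorem, noting that the functional calculus for a Hermitian operator commutes with polynomial evaluation. This is the standard ``eigenvalue-by-eigenvalue'' argument, and no deeper ideas are needed here because the polynomial $T_{\eta,s}$ is univariate and $O$ is Hermitian.

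First I would invoke the spectral theorem to write $O = \sum_{\lambda} \lambda\, P_\lambda$, where the sum runs over the distinct eigenvalues of $O$ and $P_\lambda$ denotes the spectral projector onto the corresponding eigenspace; by hypothesis, each such $\lambda$ lies in $\{0\} \cup [\eta, 1]$, and $\Pi = P_0$ is the projector onto the nullspace. Since $T_{\eta,s}$ is a polynomial, functional calculus yields $T_{\eta,s}(O) = \sum_\lambda T_{\eta,s}(\lambda) P_\lambda$.

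Next, I would split this sum into the $\lambda = 0$ contribution and the $\lambda \in [\eta,1]$ contributions. Lemma~\ref{lem:cheby} tells us that $T_{\eta,s}(0) = 1$, so $T_{\eta,s}(O)\,\Pi = P_0 = \Pi$, giving the first claim. For the second claim, I would subtract $\Pi = P_0$ from $T_{\eta,s}(O)$ to get
\begin{equation*}
T_{\eta,s}(O) - \Pi \;=\; \sum_{\lambda \in [\eta,1]} T_{\eta,s}(\lambda)\, P_\lambda .
\end{equation*}
Because the $P_\lambda$'s are mutually orthogonal projectors and the operator above is Hermitian, its operator norm equals $\max_{\lambda \in [\eta,1]} |T_{\eta,s}(\lambda)|$. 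Applying the scalar bound from Lemma~\ref{lem:cheby} to each such $\lambda$ gives the promised bound $2 e^{-2 s \sqrt{\eta}}$.

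There is no real obstacle: the only thing to be careful about is that the eigenvalues lie in the set $\{0\} \cup [\eta,1]$ on which the bounds of Lemma~\ref{lem:cheby} are valid (no eigenvalue falls in the ``forbidden gap'' $(0,\eta)$), and that polynomial evaluation commutes with the spectral decomposition of a Hermitian operator. Both are immediate from the hypotheses.
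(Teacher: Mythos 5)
Your proof is correct and matches the paper's approach exactly: the paper simply states that Lemma~\ref{lem:cheby} is applied to each eigenvalue of $O$, which is precisely the spectral-decomposition argument you spell out.
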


For the robust polynomial, we start by defining $B(O)$ to be the
projector onto the eigenspace of $O$ with eigenvalue $1$.  For
Hermitian operators $O_1,O_2,\ldots, O_m$ such that each $O_i$ has
eigenvalues in the interval $[0,1]$, we define a Hermitian robust
product which generalizes \Eq{eq:and2}:
\begin{align*}
  \Rob(O_1,O_2,\ldots, O_m)
    \EqDef C^{\dagger}C \qquad \text{where} 
      \qquad C\EqDef B(O_1)B(O_2)\ldots B(O_m).
\end{align*}
Note that due to the possible non-commutativity of the $\{O_i\}$
operators, $\Rob(O_1,O_2,\ldots, O_m)$ is generally \emph{not} the
projector onto the intersection of the $+1$ eigenspaces of these operators.

We also define the operator-valued polynomial $B_i(O)$ for positive
integers $i$ by substituting $x\leftarrow O$ in \Eq{eq:bi}. The
robust polynomial is defined in parallel with
\eqref{eq:robdefclass}, i.e., 
\begin{align}
  \widetilde{\Rob}(O_1,O_2,\ldots, O_m)
    \EqDef \!\!\!\!\!\!\!\!\!\!\!\!\!\!\!\!
    \sum_{(i_1+i'_1)+\ldots+(i_m+i'_m)\leq 3m}
    \!\!\!\!\!\!\!\!\!\!\!\!\!\!\!\!
      \left(B_{i'_m}(O_m)\ldots, B_{i'_1}(O_1) \right)
      \left(B_{i_1}(O_1)\ldots, B_{i_m}(O_m)\right).
\label{eq:robdef}
\end{align}
One can easily check that the operator in \Eq{eq:robdef} is Hermitian.
Let us now establish the following error bound.
\begin{lemma}
\label{lem:roblem} 
  Suppose that the eigenvalues of all operators $\{O_i\}_{i=1}^m$
  lie in the range $[0,\eps]\cup \{1\}$ for some $\epsilon\leq
  1/10$. Then
  \begin{align*}
    \|\widetilde{\Rob}(O_1,\ldots ,O_m)-\Rob(O_1,\ldots ,O_m)\|
      \leq \br{10 \eps}^{m}.
  \end{align*}
\end{lemma}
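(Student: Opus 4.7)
The plan is to express $\Rob(O_1,\ldots,O_m)$ itself as an absolutely convergent infinite operator series whose initial segment (indexed by $(i_1+i'_1)+\cdots+(i_m+i'_m)\leq 3m$) is precisely $\widetilde{\Rob}$, and then bound the omitted tail monomial-by-monomial. The computational backbone is the telescoping identity
\begin{equation*}
\sum_{i=1}^{N} B_i(x) \;=\; x + \sum_{i=2}^{N} x^{i-1}(x-1) \;=\; x^N,
\end{equation*}
which follows directly from \eqref{eq:bi} and lifts verbatim to $\sum_{i=1}^{N} B_i(O) = O^N$ for Hermitian $O$. Under the spectral hypothesis $\mathrm{Spec}(O_k)\subseteq[0,\epsilon]\cup\{1\}$, the operator $O_k^N$ agrees with the projector $B(O_k)$ on the $1$-eigenspace and has norm at most $\epsilon^N$ on the complementary eigenspace, so $\sum_{i=1}^{\infty} B_i(O_k) = B(O_k)$ norm-converges.

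Substituting this expansion into $C^{\dagger}C = B(O_m)\cdots B(O_1)\,B(O_1)\cdots B(O_m)$ and invoking continuity of operator multiplication, I expect to obtain
\begin{equation*}
\Rob(O_1,\ldots,O_m) \;=\; \sum_{i_1,\ldots,i_m\geq 1}\;\sum_{i'_1,\ldots,i'_m\geq 1}\bigl(B_{i'_m}(O_m)\cdots B_{i'_1}(O_1)\bigr)\bigl(B_{i_1}(O_1)\cdots B_{i_m}(O_m)\bigr),
\end{equation*}
with absolute convergence in operator norm. Subtracting \eqref{eq:robdef}, the difference $\Rob-\widetilde{\Rob}$ is the same series restricted to tuples with $(i_1+i'_1)+\cdots+(i_m+i'_m)>3m$.

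For the monomial estimate, inspection of \eqref{eq:bi} gives $\|B_i(O)\|\leq \epsilon^{i-1}$ for every $i\geq 1$ (with the convention $\epsilon^0=1$): on the $1$-eigenspace $B_i(1)=\delta_{i,1}$, and on the $[0,\epsilon]$ part of the spectrum $|B_i(x)|=|x|^{i-1}|x-1|\leq \epsilon^{i-1}$. Hence each monomial has operator norm at most $\epsilon^{\sum_k(i_k+i'_k)-2m}$. Setting $s=\sum_k(i_k+i'_k)-2m$, the tail condition reads $s\geq m+1$, and the number of positive-integer $2m$-tuples with $\sum_k(i_k+i'_k)=2m+s$ equals $\binom{s+2m-1}{2m-1}\leq 2^{s+2m-1}$. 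A geometric series computation then yields
\begin{equation*}
\|\Rob-\widetilde{\Rob}\| \;\leq\; \sum_{s\geq m+1}\binom{s+2m-1}{2m-1}\epsilon^s \;\leq\; 2^{2m-1}\sum_{s\geq m+1}(2\epsilon)^s \;=\; \frac{2^{3m}\epsilon^{m+1}}{1-2\epsilon},
\end{equation*}
and an elementary check using $\epsilon\leq 1/10$ shows the right side is bounded by $(10\epsilon)^m$.

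The argument is largely bookkeeping once the telescoping identity $\sum_{i=1}^{N}B_i(O)=O^N$ is in hand. The only point requiring genuine care is the expansion of the noncommuting product $C^{\dagger}C$ into an absolutely convergent series of monomials, which is justified by the per-factor bound $\|B_i(O)\|\leq \epsilon^{i-1}$, the resulting estimate $\sum_{i,i'\geq 1}\epsilon^{(i-1)+(i'-1)} = (1-\epsilon)^{-2}<\infty$ for absolute convergence, and continuity of operator multiplication in the norm topology.
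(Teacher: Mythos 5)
Your proof is correct and follows essentially the same route as the paper: the same monomial decomposition of $C^{\dagger}C$ into products of $B_i(O_k)$'s, the same per-factor bound $\|B_i(O)\|\leq\epsilon^{i-1}$, the same composition count $\binom{q-1}{2m-1}\leq 2^{q}$, and the same geometric-series tail estimate. The only difference is presentational: the paper truncates each expansion at a finite level $J$ (using $\|\sum_{i\leq J}B_i(O)-B(O)\|\leq\epsilon^{J}$) and lets $J$ be large, whereas you pass directly to the absolutely convergent infinite series via the telescoping identity $\sum_{i=1}^{N}B_i(O)=O^{N}$ — both are valid and the estimates coincide.
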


\begin{proof}
  To ease notation in this proof, we use the shorthand
  $\vec{i}\EqDef (i_1,i'_1,i_2,i'_2,\ldots, i_m,i'_m)$ for a tuple
  of $2m$ positive integers, $\mathrm{sum}(\vec{i})\EqDef
  i_1+i'_1+i_2+i'_2+\ldots + i_m+i'_m$ for their sum , and 
  \begin{align*}
    M(\vec{i})\EqDef B_{i'_m}(O_m)\ldots B_{i'_1}(O_1)B_{i_1}(O_1)
      \ldots B_{i_m}(O_m)
  \end{align*}
  for the product that appears in \Eq{eq:robdef}. Using \Eq{eq:bi}
  we see that for any Hermitian operator $O$ with eigenvalues in
  $[0,\eps]\cup \{1\}$ we have $\|B_i(O)\|\leq \epsilon^{i-1}$ for
  all $i\geq 1$. Therefore $\|M(\vec{i})\|\leq
  \epsilon^{\mathrm{sum}(\vec{i})-2m}$ and, for any integer $q\geq
  2m$, 
  \begin{align}
\bigg\|\sum_{\vec{i}:\mathrm{sum}(\vec{i})=q} 
      M(\vec{i})\bigg\|\leq \sum_{\vec{i}:\mathrm{sum}(\vec{i})=q} 
        \epsilon^{q-2m}={{q-1}\choose{2m-1}}  
        \epsilon^{q-2m}\leq 2^q\epsilon^{q-2m}.
  \label{eq:qtail}
  \end{align}
  Here we used the fact that each component of $\vec{i}$ is a
  \textit{positive} integer and so $\vec{i}$ is a composition of $q$
  with exactly $2m$ parts; the number of compositions of an integer
  $n$ with $k$ parts is ${{n-1}\choose{k-1}}$. In the last
  inequality in \Eq{eq:qtail} we used the upper bound ${{a}\choose
  {b}}\leq 2^a$. Now let $J>3m$ be a positive integer to be fixed
  later. We have
  \begin{equation*}
    \sum_{ i_1,i'_1,\ldots, i_m, i'_m =1}^J\!\!\!\!\!\!\!\!
      M(\vec{i})-\widetilde{\Rob}(O_1,\ldots, O_m)
      =  \sum_{q=3m+1}^{2mJ} \sum_{\substack{\vec{i}:\mathrm{sum}(\vec{i})=q 
            \\ i_1,i_1',\ldots, i_m, i_m' \leq J} } M(\vec{i})
  \end{equation*}
  and therefore
  \begin{equation}
    \bigg\| \sum_{ i_1,i'_1,\ldots, i_m, i'_m =1}^J\!\!\!\!\!\!\!\!
      M(\vec{i})-\widetilde{\Rob}(O_1,O_2,\ldots, O_m)\bigg\|
      \leq \epsilon^{-2m}\sum_{q=3m+1}^{2mJ} 
          \left(2\epsilon\right)^{q}
      \leq (8\epsilon)^m \big(\frac{2\epsilon}{1-2\epsilon}\big)
        \leq \left(8\epsilon\right)^m,
  \label{eq:m1}
  \end{equation}
  where in the last step we used $\epsilon\leq1/10$. Using
  \Eq{eq:bi} gives $\|\sum_{1\leq i\leq J} B_i(O)-B(O)\|\leq
  \epsilon^J$ for any Hermitian $O$ with eigenvalues in
  $[0,\eps]\cup \{1\}$. Applying this bound $2m$ times and using the
  triangle inequality gives
  \begin{align}
    \bigg\| \sum_{ i_1,i'_1,\ldots, i_m, i'_m =1}^J M(\vec{i})
      - \Rob(O_1,\ldots, O_m)\bigg\|\leq 2m\epsilon^J.
  \label{eq:m2}
  \end{align}
  Let us choose $J$ to be large enough so that the right-hand-side
  of \Eq{eq:m2} is at most $(10\epsilon)^m-(8\epsilon)^m$. Combining
  Eqs.~(\ref{eq:m1}, \ref{eq:m2}) using the triangle inequality
  completes the proof. 
\end{proof}

We also use the following claims which summarize simple properties
of $\widetilde{\Rob}$.
\begin{claim}
  Let $O_1, O_2\ldots ,O_m$ be Hermitian, with eigenvalues in the
  range $[0,1]$. Suppose there exists a projector $\Pi$ such that
  $O_j\Pi=\Pi$ for all $j\in [m]$. Then $\widetilde{\Rob}(O_1,
  \ldots O_m)\Pi=\Pi$. \label{clm:robground} 
\end{claim}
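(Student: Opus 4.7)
The plan is to exploit the fact that the condition $O_j\Pi=\Pi$ means that $\Pi$ projects onto a subspace contained in the $+1$ eigenspace of each $O_j$, together with the observation that the constituent polynomials $B_i$ with $i\geq 2$ all vanish at $x=1$. This should make every term in the definition \eqref{eq:robdef} collapse except one.

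First I would compute the action of a single factor $B_i(O_j)$ on the range of $\Pi$. From $O_j\Pi=\Pi$ we have $O_j^k\Pi=\Pi$ for all $k\geq 0$, so using \Eq{eq:bi},
\begin{equation*}
    B_1(O_j)\Pi = O_j\Pi = \Pi,
    \qquad
    B_i(O_j)\Pi = O_j^{i-1}(O_j-I)\Pi = 0 \quad \text{for } i\geq 2.
\end{equation*}

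Next I would apply the operator product $M(\vec i)=B_{i'_m}(O_m)\cdots B_{i'_1}(O_1)B_{i_1}(O_1)\cdots B_{i_m}(O_m)$ appearing in \Eq{eq:robdef} to $\Pi$, working from the right. The rightmost factor $B_{i_m}(O_m)$ acts on the range of $\Pi$; by the display above, it returns $\Pi$ if $i_m=1$ and returns $0$ if $i_m\geq 2$. In the surviving case $i_m=1$, the next factor $B_{i_{m-1}}(O_{m-1})$ again acts on $\Pi$, with the same dichotomy. Iterating this observation through all $2m$ factors shows that $M(\vec i)\Pi\neq 0$ only when every component of $\vec i$ equals $1$, in which case $M(\vec i)\Pi=O_m\cdots O_1 O_1\cdots O_m\Pi=\Pi$.

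Finally I would check that the all-ones tuple is included in the summation range of \Eq{eq:robdef}: it satisfies $\mathrm{sum}(\vec i)=2m\leq 3m$, so it contributes, while every other tuple in the sum has at least one component $\geq 2$ and hence contributes $0$. Summing yields $\widetilde{\Rob}(O_1,\ldots,O_m)\Pi=\Pi$. There is no real obstacle here; the only thing to be careful about is the ordering of factors when arguing that the nullifying $B_i(O_j)\Pi=0$ propagates through the product, which is guaranteed because $\Pi$ is a \emph{projector} onto a common $+1$ eigenspace and thus $M(\vec i)\Pi$ can be evaluated by acting one factor at a time on the range of $\Pi$.
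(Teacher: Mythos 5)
Your proof is correct and is essentially the same as the paper's: both rest on the observations $B_1(O_j)\Pi=\Pi$ and $B_i(O_j)\Pi=0$ for $i\geq 2$, so that only the all-$B_1$ term (which lies in the truncated sum since $2m\leq 3m$) survives when \Eq{eq:robdef} is applied to $\Pi$. Your version just spells out the right-to-left propagation through $M(\vec i)$ more explicitly than the paper does.
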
 

\begin{proof}
  Note that for all $j\in [m]$, $B_1(O_j)\Pi=O_j\Pi=\Pi$ and for all
  $i\geq 2$, $B_i(O_j)\Pi=0$. Thus,
  \begin{align*}
    \widetilde{\Rob}(O_1, \ldots O_m)\Pi
      = B_1(O_m)\ldots B_1(O_1)B_1(O_1)\ldots B_1(O_m)\Pi
      = \Pi.
  \end{align*}
\end{proof}

\begin{claim}
\label{claim:unwrap} 
  The polynomial $\widetilde{\Rob}(O_1, \ldots ,O_m)$ can be
  expressed as a linear combination of at most $2^{5m}$ monomials of
  the form 
  \begin{align}
    O_{i_1}^{a_1}O_{i_2}^{a_2}\ldots O_{i_{2m}}^{a_{2m}}.
  \label{eq:monom}
  \end{align}
  where $i_1,i_2,\ldots, i_{2m}\in [m]$ and $\{a_i\}$ are positive
  integers satisfying $\sum_{j=1}^{2m} a_{j}\leq 3m$.
\end{claim}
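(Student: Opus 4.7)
The plan is to expand $\widetilde{\Rob}(O_1, \ldots, O_m)$ directly according to the definition in Eq.~(\ref{eq:robdef}) and count the resulting monomials by separately bounding (a) the number of tuples indexing the outer sum and (b) the number of monomials produced by expanding each summand.

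First, I would observe that the outer summation in Eq.~(\ref{eq:robdef}) is indexed by tuples $\vec{i} = (i_1, i_1', \ldots, i_m, i_m')$ of $2m$ positive integers with $\mathrm{sum}(\vec{i}) \leq 3m$. For a fixed total $q$, the number of such tuples is the number of compositions of $q$ into $2m$ positive parts, namely $\binom{q-1}{2m-1}$. Summing over $q \in \{2m, 2m+1, \ldots, 3m\}$ and applying the hockey-stick identity gives
\begin{align*}
  \sum_{q=2m}^{3m} \binom{q-1}{2m-1} \;=\; \binom{3m}{2m} \;\leq\; 2^{3m}.
\end{align*}

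Next, for each fixed tuple $\vec{i}$, the corresponding summand is the product of exactly $2m$ factors $B_{i'_m}(O_m)\cdots B_{i'_1}(O_1)\,B_{i_1}(O_1)\cdots B_{i_m}(O_m)$, whose $O_j$-index sequence follows the fixed pattern $m, m-1, \ldots, 1, 1, 2, \ldots, m$. From Eq.~(\ref{eq:bi}), each factor $B_i(O_j)$ equals $O_j$ when $i=1$ and equals $O_j^{i} - O_j^{i-1}$ when $i \geq 2$, so it expands into at most $2$ monomials in the single operator $O_j$, each with a \emph{strictly positive} exponent. Distributing over all $2m$ factors therefore yields at most $2^{2m}$ monomials, each of the form $O_{i_1}^{a_1} O_{i_2}^{a_2} \cdots O_{i_{2m}}^{a_{2m}}$ with positive exponents and total degree $\sum_k a_k \leq \mathrm{sum}(\vec{i}) \leq 3m$.

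Combining the two counts, the total number of monomials appearing in the full expansion of $\widetilde{\Rob}(O_1, \ldots, O_m)$ is at most $\binom{3m}{2m}\cdot 2^{2m} \leq 2^{3m}\cdot 2^{2m} = 2^{5m}$, which is the claimed bound. The argument is essentially bookkeeping and I do not anticipate any serious obstacle; the only points requiring mild care are verifying that every exponent $a_k$ is strictly positive (so that the monomial has exactly $2m$ visible factors as stated) and the clean application of the hockey-stick identity to the truncated sum.
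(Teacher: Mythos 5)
Your proposal is correct and follows essentially the same route as the paper's proof: bound the number of tuples indexing the outer sum, then expand each factor $B_i(O_j)\in\{O_j,\,O_j^{i}-O_j^{i-1}\}$ into at most two monomials, giving $2^{2m}$ monomials per summand and $2^{5m}$ overall. The only cosmetic difference is in the first count, where you sum the compositions over all totals $q\leq 3m$ via the hockey-stick identity to get $\binom{3m}{2m}\leq 2^{3m}$, while the paper simply quotes $\binom{3m-1}{2m-1}\leq 2^{3m-1}$; both bounds comfortably yield the claimed $2^{5m}$.
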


\begin{proof}
  The definition \Eq{eq:robdef} expresses $\widetilde{\Rob}(O_1,
  \ldots ,O_m)$ as a sum of ${ {3m-1} \choose {2m-1}}\leq
  2^{3m-1}\leq 8^m$ operators 
  \begin{align}
    B_{i'_m}(O_m)\ldots B_{i'_1}(O_1)B_{i_1}(O_1)\ldots B_{i_m}(O_m) 
      \qquad (i_1+i'_1)+\ldots (i_m+i'_m)\leq 3m.
  \label{eq:bterms}
  \end{align}
  Now observe from \Eq{eq:bi} that $B_i(O)\in \{O, O^{i}-O^{i-1}\}$.
  Therefore each term \Eq{eq:bterms} can be expanded as a sum of at
  most $2^{2m}$ monomials of the form \Eq{eq:monom}. Therefore
  $\widetilde{\Rob}(O_1, \ldots O_m)$ expands into at most $8^m\cdot
  2^{2m}= 2^{5m}$ terms of the form \Eq{eq:monom}.
\end{proof}

\section{Optimal ground state approximations}
\label{sec:AGSP}

Throughout this section we consider the following scenario. We are
given a set of Hermitian operators $\{H_j\}_{j=1}^{n}$ such that
\begin{align}
  0\leq H_j \leq I \qquad \text{for all $j\in [n]$},
\label{eq:noncom}
\end{align}
which act on some finite-dimensional Hilbert space
$\mathcal{H}$. We are interested in the nullspace of the operator 
\begin{align*}
H=\sum_{j=1}^{n} H_j.
\end{align*}
Let us write $\Pi$ for the projector onto the nullspace of $H$. In
other words, $\Pi$ projects onto the intersection of the nullspaces
of all operators $H_j$ (we are interested in the case where $\Pi$ is
nonzero). Our goal is to approximate $\Pi$ by a low-degree
polynomial in the operators $\{H_j\}$.

In \Sec{sec:commute} and \Sec{sec:mergprop} we work in a general
setting and in particular we do not assume a tensor product
structure of the Hilbert space $\mathcal{H}$ or geometric locality
of the operators $\{H_j\}$. In \Sec{sec:commute} we consider the
simplest case in which all operators $H_j$ are mutually commuting
and we describe the known optimal tradeoff between approximation
degree and error. Then, in \Sec{sec:mergprop} we show that optimal
approximations can be obtained more generally for noncommuting
operators which satisfy certain \textit{gap} and \textit{merge}
properties. These properties themselves assert a kind of
one-dimensional structure with respect to the given ordering $1\leq
j\leq n$ of the operators. In \Sec{sec:1D} we describe how a direct
application of these results provides optimal ground state
approximations for one-dimensional, locally-gapped, frustration-free
qudit Hamiltonians. Later we will see how the results of 
\Sec{sec:mergprop} can provide low-entanglement approximations of
ground states in the 2D setup.

\subsection{Commuting projectors}
\label{sec:commute}

We begin with the easy case in which all $\{H_i\}$ are commuting
projectors:
\begin{align}
  H_i^2=H_i \qquad \text{and} \qquad [H_i, H_j]=0 
    \qquad \text{for all} \qquad i,j\in [n].
\label{eq:comm}
\end{align}
In this case $(I-H_i)$ is the projector onto the nullspace of $H_i$,
and due to the commutativity \Eq{eq:comm} we may express $\Pi$
exactly as the degree-$n$ polynomial $\Pi=\prod_{i=1}^{n} \left(I-H_i\right)$. Our goal is to construct a lower degree polynomial $P$ that
approximates $\Pi$. Since all operators $\{H_i\}$ commute and have
$\{0,1\}$ eigenvalues, we may work in a basis in which they are
simultaneously diagonal and the problem reduces to that of
approximating the product of binary variables $x\in \{0,1\}^n$ which
label the eigenvalues of $\{I-H_i\}_{i=1}^{n}$. (Note that here we
do not require any properties of the basis which simultaneously
diagonalizes these operators, only that it exists). In other words, the problem of approximating the ground space
projector for a Hamiltonian which is a sum of commuting projectors,
reduces to the problem of approximating the boolean AND function
\begin{align*}
  \mathrm{AND}(x_1,x_2,\ldots, x_n)
    = \begin{cases}1 & \text{if }x_1=x_2=\ldots =x_n=1\\ 
    0 & \text{otherwise}\end{cases}.
\end{align*}
We are faced with the task of constructing a multilinear polynomial
$p$ which $\epsilon$-approximates $\mathrm{AND}$ in the sense that
$|p(x)-\mathrm{AND}(x)|\leq \epsilon$ for each $x\in \{0,1\}^n$.
Remarkably, it is possible to achieve an arbitrarily small constant
error $\epsilon=O(1)$ using a polynomial of degree $O(\sqrt{n})$
\cite{linial1990approximate}. For example, one can use the Chebyshev polynomial $T_{1/n,
s}\big(\frac{1}{n}\sum_{i=1}^{n} x_i\big)$ of degree $\lceil
s\rceil$ which achieves an approximation error $\epsilon=
e^{-\Omega\left( s/\sqrt{n}\right)}$ as can be seen from Lemma
\ref{lem:cheby}. Similarly, the acceptance probability of the
standard Grover search algorithm \cite{grover1996fast}, viewed as a
function of the input bit string $x$ provided as an oracle,
constructs such an approximating polynomial \cite{beals2001quantum}.
However, neither of these polynomials has optimal degree in the
low-error regime where $\epsilon$ decreases with $n$. In that regime
an optimal polynomial can be constructed via a low-error refinement
of Grover search \cite{BuhrmanCWZ99, de2008note} (see also
\cRef{KahnLS96}).

Here we provide a different family of polynomials that give an
optimal approximation to the $\mathrm{AND}$ function. These
polynomials are obtained in a simple way by combining the Chebyshev
polynomial $T_{\eta,s}$ and the robust polynomial $\widetilde{\Rob}$
from \Sec{sec:approxfunc}.  Soon we will see how this
construction can be extended to the non-commuting case. It is
unclear to us whether one can alternatively extend the known optimal
polynomials constructed in Refs.~\cite{KahnLS96, BuhrmanCWZ99,
de2008note}.

\begin{theorem}[\textbf{Optimal approximation of AND}]
\label{thm:ANDapprox}
  Let $n$ be a positive integer. For every real number $s \in
  \br{\sqrt{n}, n}$, there exists a polynomial $P(x)$ of degree $O(s)$
  such that 
  \begin{align*}
    |P(x)-\mathrm{AND}(x)|=e^{-\Omega(\frac{s^2}{n})} 
      \quad \text{for all} \quad x\in \{0,1\}^n.
  \end{align*}
\end{theorem}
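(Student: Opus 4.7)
The plan is to build $P$ as a two-level construction that combines the Chebyshev amplifier $T_{\eta,s}$ with the robust polynomial $\widetilde{\Rob}$ from Section \ref{sec:approxfunc}. First, I would partition $[n]$ into $m$ blocks $B_1,\ldots,B_m$ of size $k=n/m$, where $m$ will be tuned at the end. For each block, introduce the linear polynomial $y_j(x)\EqDef \frac{1}{k}\sum_{i\in B_j} x_i$, which takes values in $\{0,\tfrac{1}{k},\ldots,1\}$. The block-AND equals $1$ exactly when $y_j=1$; otherwise $1-y_j\geq 1/k$.

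Next, I would amplify each block with a Chebyshev polynomial. Choose $s_1=\lceil \tfrac{\ln 40}{2}\sqrt{k}\rceil$ and define
\begin{equation*}
z_j(x)\EqDef \bigl(T_{1/k,\, s_1}(1-y_j(x))\bigr)^2.
\end{equation*}
The square ensures the output is nonnegative (so that Lemma \ref{lem:roblemclass} applies). Applying Lemma \ref{lem:cheby}, we get $z_j(x)=1$ if $B_j$ is all-ones, and $z_j(x)\in[0,1/400]$ otherwise. Since $y_j$ is linear in $x$, each $z_j(x)$ is a polynomial of degree $O(s_1)=O(\sqrt{n/m})$ in $x$. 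Now define
\begin{equation*}
P(x)\EqDef \widetilde{\Rob}(z_1(x),\ldots,z_m(x)).
\end{equation*}
By Lemma \ref{lem:roblemclass} with $\epsilon=1/400$, we have $|P(x)-\Rob(z_1(x),\ldots,z_m(x))|\leq (1/40)^m$. Moreover $\Rob(z_1,\ldots,z_m)=\prod_j B(z_j)$, which equals $1$ iff every $z_j=1$, iff every block is all-ones, iff $\mathrm{AND}(x)=1$; so $\Rob(z_1(x),\ldots,z_m(x))=\mathrm{AND}(x)$ exactly.

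Finally, I would do the degree bookkeeping and tune $m$. By Claim \ref{claim:unwrap}, $\widetilde{\Rob}$ has degree $O(m)$ in its inputs, so the total degree of $P(x)$ in $x$ is $O(m\cdot s_1)=O(\sqrt{mn})$. Given $s\in(\sqrt{n},n)$, choosing $m=\lfloor c s^2/n\rfloor$ for a suitable constant $c>0$ yields $1\leq m\leq n$, total degree $O(s)$, and error $(1/40)^m=e^{-\Omega(s^2/n)}$, as required. The main obstacle is really just parameter matching: we need the Chebyshev inner approximation to drive its outputs inside the small regime $[0,\epsilon]\cup\{1\}$ demanded by the outer robust polynomial, and the degree budget $O(s)$ must split between the $m$ outer invocations of $\widetilde{\Rob}$ and the per-block Chebyshev degree $\sqrt{n/m}$ so that the two errors $(1/40)^m$ and the Chebyshev tail are simultaneously exponentially small in $s^2/n$. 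The squaring step addresses a minor sign issue (Chebyshev polynomials may take negative values) at the cost of a constant factor in the degree.
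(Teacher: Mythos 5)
Your proposal is correct and follows essentially the same route as the paper's proof: partition $[n]$ into $\Theta(s^2/n)$ blocks, compress each block with a Chebyshev polynomial $T_{1/k,O(\sqrt{k})}$ applied to the block average, and feed the outputs into the robust polynomial $\widetilde{\Rob}$, with identical degree bookkeeping ($O(m\cdot\sqrt{n/m})=O(s)$) and error $e^{-\Omega(m)}$. The only (harmless, arguably tidier) deviation is squaring the Chebyshev output to force the inputs of $\widetilde{\Rob}$ into $[0,\epsilon]\cup\{1\}$, whereas the paper parametrizes by block size $t=\lceil n^2/s^2\rceil$ and applies Lemma~\ref{lem:roblemclass} directly to the (possibly signed) values $p(\cdot)$ with $|p(\cdot)|\leq 1/20$.
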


\begin{proof}
  Define the positive integer $t=\lceil \frac{n^2}{s^2}\rceil$ and
  note that $1\leq t\leq n$ due to the specified bounds on $s$. Let
  $p(y)\EqDef T_{\frac{1}{t},2\sqrt{t}}\br{y}$. From Lemma
  \ref{lem:cheby} we see that 
  \begin{align}
    p(0)=1 \qquad \text{and} \qquad  \left|p(y)\right|
      \leq 2\cdot e^{-4}\leq \frac{1}{20}
        \qquad \text{ for all } \quad \frac{1}{t}\leq y\leq 1.
  \label{eq:pprop}
  \end{align}

  Since $t\leq n$, we may construct a partition $[n]=I_1\cup
  I_2\cup\ldots \cup I_{\prt}$ where $\prt\EqDef \lceil n/t\rceil$
  and $|I_k|\leq t$ for all $1\leq k\leq \prt$. Our polynomial
  approximation to $\mathrm{AND}$ is defined as
  \begin{align*}
73A72EA3    P(x)=\widetilde{\Rob}\Bigg(p\bigg(1-\frac{1}{|I_1|}
      \sum_{j\in I_1}x_j\bigg), p\bigg(1-\frac{1}{|I_2|}
        \sum_{j\in I_2}x_j\bigg),\ldots, p\bigg(1-\frac{1}{|I_\prt|}
          \sum_{j\in I_{\prt}}x_j\bigg)\Bigg).
  \end{align*}
  Now we observe that the $k$th input to the $\widetilde{\Rob}$
  function on the RHS approximates the AND of all bits in the set
  $I_k$. To see this, note
  that $1-\frac{1}{|I_k|} \sum_{j\in I_k}x_j=0$ when $x_j=1$ for all
  $j\in I_k$, and $1-\frac{1}{|I_k|} \sum_{j\in I_k}x_j\ge 1/t$ if
  one or more $x_j=0$.  Using this fact and \Eq{eq:pprop}, we see that for each $1\leq
  k\leq \prt$ we have
  \begin{align}
    \bigg|p\bigg(1-\frac{1}{|I_k|}\sum_{j\in I_k}x_j\bigg)
      - \prod_{j\in I_k} x_j\bigg|\leq \frac{1}{20}.
  \end{align}
  Now applying Lemma \ref{lem:roblemclass} with $\eps= 1/20$, and
  noting that $\Rob(x)=\mathrm{AND}(x)$ we see that, for each $x\in
  \{0,1\}^n$,
  \begin{align*}
    |P(x)-\mathrm{AND}(x)|\leq 2^{-\prt}\leq 2^{-n/t}\leq 2^{-s^2/n}.
  \end{align*}
  The degree of the polynomial is at most $3\prt\cdot
  2\sqrt{t}=O(s)$.
\end{proof}

\subsection{Operators with gap and merge properties}
\label{sec:mergprop}

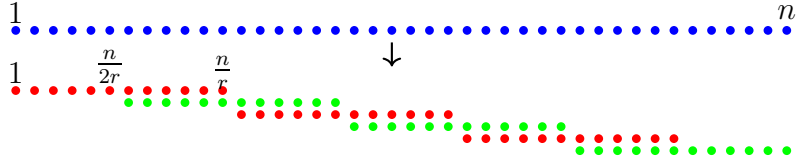
\begin{figure}
\centering
\begin{tikzpicture}[xscale=1,yscale=0.8]

\foreach \i in {1,...,42}
{
   \draw (\i/4, 0) node[circle, fill=blue, scale=0.3]{};
}
\draw [thick,->] (5.25,-0.2) -- (5.25,-0.6);

\node at (1/4,0.3) {$1$};

\node at (42/4,0.3) {$n$};

\foreach \i in {1,...,12}
{
   \draw (\i/4, -1) node[circle, fill=red, scale=0.3]{};
}

\node at (1/4,0.3-1) {$1$};

\node at (6/4,0.4-1) {$\frac{n}{2r}$};

\node at (12/4,0.3-1) {$\frac{n}{r}$};

\foreach \i in {7,...,18}
{
   \draw (\i/4, -1.2) node[circle, fill=green, scale=0.3]{};
}
 
\foreach \i in {13,...,24}
{
   \draw (\i/4, -1.4) node[circle, fill=red, scale=0.3]{};
}

\foreach \i in {19,...,30}
{
   \draw (\i/4, -1.6) node[circle, fill=green, scale=0.3]{};
}

\foreach \i in {25,...,36}
{
   \draw (\i/4, -1.8) node[circle, fill=red, scale=0.3]{};
}

\foreach \i in {31,...,42}
{
   \draw (\i/4, -2) node[circle, fill=green, scale=0.3]{};
}
\end{tikzpicture}
  \caption{\small An interval of length $n$ is decomposed into
  smaller intervals of length $n/r$ each. The overlap between
  consecutive intervals is exactly $\frac{n}{2r}$. The number of
  intervals is $\prt=2r-1$.} 
\label{fig:recursechain}
\end{figure}

We now consider a more general case in which the operators
$\{H_j\}_{j=1}^{n}$ still satisfy \eqref{eq:noncom}, but may not be
projectors and are not assumed to commute. For any subset
$S\subseteq [n]$ of the operators, we define the corresponding
Hamiltonian 
\begin{align*}
  H_S\EqDef \sum_{j\in S} H_j
\end{align*}
and the projector $\Pi_S$ onto its nullspace. Similarly, we define
$\mathrm{gap}(H_S)$ to be the smallest nonzero eigenvalue of $H_S$
\footnote{We use the convention that $\mathrm{gap}(h)=1$ if $h=0$.}.
A crucial difference between our setting here and the commuting
setting considered previously, is that a product $\Pi_S \Pi_T$ is
not in general equal to $\Pi_{S\cup T}$.

We require our operators to satisfy two properties which are
defined with respect to the given ordering $1\leq j\leq n$. To
describe these properties it will be convenient to define an
$\textit{interval}$ as a contiguous subset $\{j,j+1,\ldots, k-1,
k\}\subseteq [n]$. The \textit{gap property} states a lower bound
$\Delta$ on the smallest nonzero eigenvalue of any interval
Hamiltonian $H_S$. The \textit{merge property} asserts that $\Pi_S
\Pi_T\approxeq \Pi_{S\cup T}$ for overlapping intervals $S,T$, with
 error decreasing exponentially in the size of the overlap
region.  We now state these properties more precisely.
\begin{definition}
  Operators $\{H_j\}_{j=1}^{n}$ satisfy the gap and merge properties
  if, for some $\Delta\in (0,1]$, the following conditions hold for
  all intervals $S\subseteq [n]$ and any partition $S=ABC$ into three consecutive intervals:
  \begin{align}
    \mathrm{gap}(H_S)&\geq \Delta \qquad &\textbf{[Gap property]}
    \label{eq:locgap}\\
    \|\Pi_{AB}\Pi_{BC}-\Pi_{S}\|
      &\leq  2 e^{-|B|\sqrt{\Delta}} &\textbf{[Merge property]}.
  \label{eq:merge}
  \end{align}
\end{definition}
Note that the parameter $\Delta$ in this definition appears in both
the gap and merge properties. One could alternatively consider a
more general definition where each of these properties has its own
parameter, though we will not need to.

In the following we show that the optimal scaling
$e^{-\Omega\br{\frac{s^2}{n}}}$ of error with degree $s$ can be
recovered in this noncommutative setting, by a recursive use of the
robust polynomial, with one use of the Chebyshev polynomial and gap
property in the base level of the recursion. The analysis uses the
merge property to bound the error in the recursion.  The following
theorem describes our results for the case where the approximation
degree scales less than linearly in $n$.
\begin{theorem}[\textbf{Less than linear degree}]
\label{thm:1Dpoly} 
  Suppose $\{H_j\}_{j=1}^{n}$ satisfy Eqs.~(\ref{eq:noncom},
  \ref{eq:locgap},\ref{eq:merge}) for some $\Delta\in (0,1]$. Let
  $\delta\in (0,1/4)$ be fixed and let $s$ be a real number
  satisfying 
  \begin{align}
    2\sqrt{n}\Delta^{-1/2}\leq s \leq (1/4)n^{1-\delta}\Delta^{-1/4}.
  \label{eq:dbounds}
  \end{align}
  There is a degree $O(s)$ Hermitian multivariate polynomial $P$ in
  the operators $\{H_j\}_{j=1}^{n}$ such that
  \begin{align}
    P\Pi=\Pi \qquad \text{and} \qquad \|P-\Pi\| 
      = e^{-\frac{s^2 \Delta}{4n}}.
  \label{eq:papprox}
  \end{align}
\end{theorem}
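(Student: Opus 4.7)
The plan is to build $P$ as the outer $\widetilde{\Rob}$ of a recursive hierarchy of operators whose base case is the squared Chebyshev polynomial. I would fix a recursion depth $L = L(\delta)$ and a schedule of interval lengths and overlaps $(\ell_l, w_l)_{l=0}^{L-1}$, and recursively define an operator $O^{(l)}(J)$ for each length-$\ell_l$ ``level-$l$ interval'' $J$. At the base, $O^{(0)}(J) = T_{\Delta/\ell_0,\, s_0}(H_J/|J|)^2$; by Lemma~\ref{lem:chebyop} combined with the gap property applied to $H_J/|J|$ (eigenvalues in $\{0\}\cup[\Delta/\ell_0, 1]$), this has spectrum in $\{1\}\cup[0,\epsilon_0^2]$ for $\epsilon_0 = 2e^{-2s_0\sqrt{\Delta/\ell_0}}$, and satisfies $O^{(0)}(J)\Pi_J = \Pi_J$. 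For $l\ge 1$, I decompose $J$ into $m_l$ overlapping sub-intervals $J_1,\dots,J_{m_l}$ of length $\ell_{l-1}$ with overlap $w_{l-1}$ as in Figure~\ref{fig:recursechain}, and set $O^{(l)}(J) = \widetilde{\Rob}(O^{(l-1)}(J_1),\dots,O^{(l-1)}(J_{m_l}))^2$. The output polynomial is $P = \widetilde{\Rob}(O^{(L-1)}(J_1),\dots,O^{(L-1)}(J_{m_L}))$ at the top level where the $J_i$ tile $[n]$ (the final square is omitted since only $\|P-\Pi\|$ small and $P\Pi=\Pi$ are needed).

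Next I would establish two inductive invariants at each level. The identity $O^{(l)}(J)\Pi_J = \Pi_J$ follows from $H_J \ge H_{J_i}$ for $J_i\subset J$ (since each $H_j\ge 0$), which gives $\Pi_J\subseteq \Pi_{J_i}$ and hence $O^{(l-1)}(J_i)\Pi_J = \Pi_J$; then Claim~\ref{clm:robground} transfers this through $\widetilde{\Rob}$, and it is preserved under squaring. For the spectrum invariant $\{1\}\cup[0,\epsilon_l^2]$, I use the structural fact that any Hermitian $M$ with $M\Pi_J = \Pi_J$ and $\|M - \Pi_J\|\le \epsilon$ block-decomposes as $\Pi_J$ on $\mathrm{range}(\Pi_J)$ plus a Hermitian block on its orthogonal complement of norm $\le\epsilon$, so squaring yields a $\{1\}\cup[0,\epsilon^2]$ spectrum and $B(O^{(l)}(J)) = \Pi_J$ exactly (whenever $\epsilon<1$). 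The key quantitative estimate is the error recurrence
\[
\epsilon_l \;\le\; (10\,\epsilon_{l-1}^2)^{m_l} \;+\; 4 m_l\, e^{-w_{l-1}\sqrt{\Delta}},
\]
where the first term is Lemma~\ref{lem:roblem} applied to the inputs $O^{(l-1)}(J_i)$, and the second comes from iterating the merge property~\eqref{eq:merge} along the product $\Pi_{J_1}\Pi_{J_2}\cdots\Pi_{J_{m_l}} = B(O^{(l-1)}(J_1))\cdots B(O^{(l-1)}(J_{m_l}))$ combined with the contractive identity $\|Q^\dagger Q - \Pi_J\|\le 2\|Q-\Pi_J\|$. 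Claim~\ref{claim:unwrap}, together with the factor $2$ from squaring, gives the degree recurrence $\deg O^{(l)} \le 12 m_l \deg O^{(l-1)}$, so $\deg P \le 12^L\, s_0\, \prod_{l=1}^{L} m_l$.

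The main obstacle will be the combinatorial parameter selection that simultaneously enforces $\epsilon_L \le e^{-s^2\Delta/(4n)}$ and $\deg P = O(s)$ throughout the full range of $s$. A single-level instantiation ($L=1$) is easily seen to succeed only for $s\lesssim n^{3/4}\Delta^{-3/8}$: once $\ell_0$ is pinned down by the degree constraint, the forced overlap $w_0\sim\ell_0$ is too small for the merge term to beat $e^{-s^2\Delta/(4n)}$. Adding levels relaxes this bottleneck, and by choosing $L = L(\delta)$, a constant depending only on $\delta$, one can balance at every level the two coupled constraints $m_l\,\ln(\epsilon_{l-1}^{-2})\gtrsim s^2\Delta/n$ (to beat the robust term) and $w_{l-1}\sqrt{\Delta}\gtrsim s^2\Delta/n$ (to beat the merge term), subject to the telescoping degree bound $12^L s_0 \prod m_l = O(s)$ and the geometric constraint $\ell_l\le n$. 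The upper bound $s\le (1/4)n^{1-\delta}\Delta^{-1/4}$ is exactly what makes these constraints simultaneously satisfiable for constant $L(\delta)$, while the lower bound $2\sqrt{n}\,\Delta^{-1/2}\le s$ ensures $s_0\ge 1$ and $m_l\ge 2$ at every level. The conclusion $P\Pi = \Pi$ is then immediate from the top-level inductive invariant, and $\|P - \Pi\|\le \epsilon_L\le e^{-s^2\Delta/(4n)}$ by construction.
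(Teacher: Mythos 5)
Your architecture is the same as the paper's (Chebyshev at the base, a recursion of $\widetilde{\Rob}$ over overlapping intervals, the merge property to control $\Pi_{J_1}\cdots\Pi_{J_m}$ versus $\Pi_J$, and constant depth depending on $\delta$), and your squaring trick is a legitimate, even slightly cleaner, way to meet the spectral hypothesis of \Lem{lem:roblem}; your diagnosis that a single level only reaches $s\lesssim n^{3/4}\Delta^{-3/8}$ is also correct. The problem is the step you defer as "the main obstacle": the per-level constraint system you write down does not actually escape that bottleneck. You require at \emph{every} level both $m_l\ln(\epsilon_{l-1}^{-2})\gtrsim s^2\Delta/n$ and $w_{l-1}\sqrt{\Delta}\gtrsim s^2\Delta/n$, i.e.\ the full final exponent already from the bottom level. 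Since the base overlap satisfies $w_0\le \ell_0$, the merge constraint at $l=1$ forces $\ell_0\gtrsim s^2\sqrt{\Delta}/n$, while the degree budget $s\gtrsim s_0\prod_l m_l\gtrsim s_0\, n/\ell_0$ together with the level-1 robust constraint forces $\ell_0\lesssim n^2/(s^2\Delta)$; these are compatible only when $s\lesssim n^{3/4}\Delta^{-3/8}$ — exactly the one-level barrier you claim depth removes. So for, say, $\delta=0.1$ and $s\sim n^{0.9}$ your stated balancing is infeasible no matter how the schedule $(\ell_l,w_l,m_l)$ is chosen, and the theorem's range \Eq{eq:dbounds} is not covered.

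The missing idea is that intermediate levels need only \emph{geometrically weaker} error targets, because whatever error (merge or robust) is incurred at level $l$ is subsequently raised to roughly the power $m_{l+1}m_{l+2}\cdots$ by the robust combinations above it. Concretely, with uniform branching $r$ and base length $t$, the level-$l$ target should be $\sim\exp(-\Gamma r^{l-1})$, so the per-level merge requirement becomes ``(overlap $\sim t r^{l-2}/2$)$\cdot\sqrt{\Delta}\gtrsim \Gamma r^{l-1}$'', i.e.\ the level-independent condition $t\sqrt{\Delta}\gtrsim r\Gamma$ (this is \Eq{eq:n1r} in \Lem{lem:nrb}), with $r\approx (n/t)^{1/(b-1)}$ — far weaker than $t\sqrt{\Delta}\gtrsim s^2\Delta/n$. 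Only the top level needs the full exponent $\Gamma r^{b-1}\sim s^2\Delta/n$, and there the overlap is $\sim n/(2r)$, which is large. With this corrected bookkeeping your recursion, invariants, and degree accounting go through and reproduce the theorem; without it, the quantitative heart of the proof is not established.
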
 
In the above, the big-O notation hides a constant which depends only
on $\delta$. We shall also be interested in a case where $\delta$ is
taken very close to $0$ and the degree is close to linear. This
almost-linear degree approximation will be used to establish the
area law for two-dimensional spin systems. For that application it
will be useful to describe the structure of the polynomial $P$ in
more detail. To this end, we first define certain families
$P(\alpha,\beta)$ of elementary polynomials as follows.
\begin{definition}
\label{def:pab} 
  For $\alpha,\beta >0$, let $\mathcal{P}(\alpha,\beta)$ denote the
  set of polynomials of the form
  \begin{align*}
    (H_{S_1})^{j_1} (H_{S_2})^{j_2}\ldots (H_{S_k})^{j_k} 
      \qquad \quad j_1+j_2+\ldots +j_k\leq \alpha 
      \quad \text{and} \quad k\leq \beta.
  \end{align*}
  where $j_1,j_2,\ldots, j_k$ are positive integers and each set
  $S_1,S_2,\ldots, S_k\subseteq [n]$ is an interval.
\end{definition}

\begin{theorem}[\textbf{Near-linear degree}]
\label{thm:almostlin}
  Suppose $\{H_j\}_{j=1}^{n}$ satisfy
  Eqs.~(\ref{eq:noncom},\ref{eq:locgap},\ref{eq:merge}) for some
  $\Delta\in (0,1]$ and that $n\geq C\Delta^{-1}$, where $C>0$ is
  some absolute constant. There exist real numbers
  \begin{align}
    \alpha\leq n\Delta^{-1/4} \qquad \text{and} \qquad 
      \beta=\Delta^{1/2}n^{1-O\left((\log n)^{-1/4}\right)}
  \label{eq:albet}
  \end{align}
  such that the following holds. There exists a Hermitian
  multivariate polynomial $P$ in the operators $\{H_i\}_{i=1}^{n}$
  of degree at most $\alpha$ such that
  \begin{align}
    P\Pi=\Pi \qquad \text{and} \qquad \|P-\Pi\| 
      \leq \exp{\left(-\beta e^{\sqrt{\log(n)}}\right)},
  \label{eq:pconds}
  \end{align}
  and such that $P$ can be expressed as a linear combination of at
  most $(2\alpha)^\beta$ elements of $\mathcal{P}(\alpha,\beta)$.
\end{theorem}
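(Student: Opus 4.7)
The plan is to obtain $P$ via $L$ levels of recursive composition of the robust polynomial $\widetilde{\Rob}$, with a Chebyshev polynomial at the base --- exactly the scheme underlying Theorem~\ref{thm:1Dpoly}, but tuned more aggressively so that the total degree approaches linear in $n$. At the base level I would fix some size $n_0$ (to be chosen) and, for each length-$n_0$ sub-interval $S_0$, set
\[
  Q^{(0)}_{S_0} \EqDef T_{\Delta/n_0,\,s_0}\!\left(H_{S_0}/n_0\right),
\]
with $s_0 = \Theta(\sqrt{n_0/\Delta})$ chosen just large enough that $\|Q^{(0)}_{S_0} - \Pi_{S_0}\| \leq \varepsilon_0$ for a small absolute constant $\varepsilon_0$; Lemma~\ref{lem:chebyop} together with the gap property also guarantees $Q^{(0)}_{S_0}\Pi = \Pi$. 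At each level $\ell \geq 1$, I would cover the current interval of size $n_\ell$ by $m_\ell \EqDef 2r_\ell-1$ overlapping sub-intervals of length $n_\ell/r_\ell$ with overlaps of length $n_\ell/(2r_\ell)$ as in Figure~\ref{fig:recursechain}, and set
\[
  Q^{(\ell)}_S \EqDef \widetilde{\Rob}\bigl(Q^{(\ell-1)}_{I_1},\ldots,Q^{(\ell-1)}_{I_{m_\ell}}\bigr).
\]
The final polynomial is $P = Q^{(L)}_{[n]}$.

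Ground-space invariance $P\Pi=\Pi$ would follow immediately by iterating Claim~\ref{clm:robground}. For the error $\varepsilon_\ell \EqDef \|Q^{(\ell)}_S-\Pi_S\|$ I expect a recursion
\[
  \varepsilon_\ell \;\leq\; (10\,\varepsilon_{\ell-1})^{m_\ell} \;+\; 2m_\ell\,e^{-(n_\ell/2r_\ell)\sqrt{\Delta}},
\]
where the first summand comes from a mild generalization of Lemma~\ref{lem:roblem} --- the inputs $Q^{(\ell-1)}_{I_k}$ have spectra in $\{1\}\cup[-\varepsilon_{\ell-1},\varepsilon_{\ell-1}]$ rather than strictly in $\{1\}\cup[0,\varepsilon_{\ell-1}]$, which only changes the constants in the proof --- and the second comes from telescoping the merge property \eqref{eq:merge} across the $m_\ell$ sub-intervals. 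For the degree and monomial structure I would invoke Claim~\ref{claim:unwrap}: each use of $\widetilde{\Rob}$ at level $\ell$ multiplies the input degree by at most $3m_\ell$ and expands into at most $2^{5m_\ell}$ monomials, each a product of at most $2m_\ell$ factors. Unwinding the recursion, $P$ has degree at most $\alpha = s_0\prod_\ell 3m_\ell$ and is a linear combination of at most $\prod_\ell 2^{5m_\ell}$ monomials, all lying in $\mathcal{P}(\alpha,\beta)$ with $\beta = \prod_\ell 2m_\ell$, since at the bottom of the recursion each Chebyshev is a univariate polynomial in a single interval Hamiltonian $H_{S_0}$.

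To match the targets in \eqref{eq:albet} I would take $\prod_\ell r_\ell \sim n/n_0$ with a slowly growing choice such as $r_\ell \sim \exp((\log n_\ell)^{1/4})$, so that $\log(3m_\ell)/\log r_\ell = 1+o(1)$ and the constant-factor blow-up per level is absorbed into the $n^{o(1)}$ slack; this forces $L = \Theta((\log n)^{3/4})$ levels. Choosing $n_0$ sub-polynomial in $n$ of order $\exp(\Theta((\log n)^{3/4}))$ would then yield $\alpha \leq n\Delta^{-1/4}$ and $\beta = \Delta^{-1/2}n^{1-O((\log n)^{-1/4})}$ after a short calculation, while ensuring that the robust-polynomial term of the error recursion dominates the merge term at every level, so that $\varepsilon_\ell$ decays doubly exponentially and the final error works out to $\exp(-\beta\,e^{\sqrt{\log n}})$.

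The hard part will be carrying out this parameter choice so that the three desiderata --- near-linear degree $\alpha$, sub-$n$ monomial count $\beta$, and the stated error --- hold \emph{simultaneously}. The degree blow-up factor $3m_\ell$ and the structural blow-up factor $2m_\ell$ per level differ only by constants, and balancing $L$ against $r_\ell$ so that $\prod 3m_\ell$ is only slightly larger than $\prod r_\ell \sim n$ requires $r_\ell$ to be superconstant and $L$ to be sub-logarithmic in a precise way. At the same time, the merge contribution $m_\ell e^{-(n_\ell/2r_\ell)\sqrt{\Delta}}$ must stay well below $(10\varepsilon_{\ell-1})^{m_\ell}$ at every single level so that the error recursion actually converges. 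This delicate bookkeeping is what produces the $n^{-O((\log n)^{-1/4})}$ slack in $\beta$ relative to a hypothetical clean linear-in-$n$ bound, and obtaining exactly this slack (and not something worse) is essentially the whole content of the theorem.
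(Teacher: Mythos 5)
Your scheme is the same one the paper uses: a recursion in which $\widetilde{\Rob}$ is applied to approximations on overlapping sub-intervals (as in Fig.~\ref{fig:recursechain}), a Chebyshev polynomial of an interval Hamiltonian at the base, Claim~\ref{clm:robground} for $P\Pi=\Pi$, Lemma~\ref{lem:roblem} together with the merge property for the error recursion, and Claim~\ref{claim:unwrap} for the structural bookkeeping. The paper packages this as Lemma~\ref{lem:nrb} with a uniform branching factor $r\sim e^{\Theta((\log n)^{3/4})}$ and only $b-1=\lceil(\log n)^{1/4}\rceil$ levels, while you take many shallow levels with $r_\ell\sim e^{(\log n)^{1/4}}$; that difference in schedule is inessential. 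What is essential, and missing, is the accuracy of the base step. With the base Chebyshev run only to a constant error $\varepsilon_0$, the recursion $\varepsilon_\ell\le(10\varepsilon_{\ell-1})^{m_\ell}+\mathrm{merge}_\ell$ gives at best $\|P-\Pi\|\approx\exp\bigl(-\Theta(1)\cdot\prod_\ell m_\ell\bigr)$, while your structural parameter is $\beta=\prod_\ell\Theta(m_\ell)$. So the error exponent per unit of $\beta$ is $O(1)$ (in fact $O(2^{-L})$ with your $\beta=\prod_\ell 2m_\ell$), whereas the theorem requires $e^{\sqrt{\log n}}$ per unit of $\beta$: the bound $\|P-\Pi\|\le\exp(-\beta e^{\sqrt{\log n}})$ must hold with the \emph{same} $\beta$ that appears in the decomposition into $(2\alpha)^\beta$ elements of $\mathcal{P}(\alpha,\beta)$, and this coupling is precisely what the 2D argument later uses to beat the $O(\log n)$ Schmidt-rank cost per factor. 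You cannot simply shrink $\beta$ to make the error bound true, because then the structural claim fails.

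The paper's fix is the parameter $\Gamma$ in Lemma~\ref{lem:nrb}: the base Chebyshev is run to accuracy $e^{-\Gamma}$ with $\Gamma=e^{4\sqrt{\log n}}$, which costs only a multiplicative $\Gamma$ in the base degree (affordable because the base block has size $t=\Theta(\Delta^{-1/2}n^{2\delta})$ and the target degree is near-linear), and then the error exponent is $\Gamma r^{b-1}\ge 6^{b-1}r^{b-1}e^{\sqrt{\log n}}=\beta e^{\sqrt{\log n}}$. Your schedule could be repaired in the same way (you would need $\Gamma\gtrsim 2^{L}e^{\sqrt{\log n}}=e^{\Theta((\log n)^{3/4})}$, still $n^{o(1)}$ and absorbable by enlarging $n_0$ and adjusting the degree bound), but as written the assertion that the final error ``works out to'' $\exp(-\beta e^{\sqrt{\log n}})$ does not follow; this boosted base accuracy is a key idea of the theorem, not a constant-factor detail. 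Two smaller bookkeeping slips: in Claim~\ref{claim:unwrap} a monomial is a product of at most $2m$ \emph{powers} with total exponent up to $3m$, so when the inputs are themselves sums of elementary polynomials the number of interval factors multiplies by $3m_\ell$ per level (not $2m_\ell$), and the number of monomials compounds roughly as $(\text{inner count})^{3m_\ell}$ per level rather than by a factor $2^{5m_\ell}$; both are still compatible with the targets $\mathcal{P}(\alpha,\beta)$ and $(2\alpha)^\beta$, but your stated $\beta$ and monomial count are undercounts.
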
 

Theorems~\ref{thm:1Dpoly} and \ref{thm:almostlin} are obtained as
consequences of the following Lemma, which treats the special case
where $n$ has the form $tr^{b-1}$ for suitably chosen positive
integers $t, r,b$. It constructs an approximating polynomial $P$
recursively, with $b$ levels of recursion. 
\begin{lemma}
  Suppose $n=tr^{b-1}$ for positive integers $t,r,b$ such that $t$ is even and
  \begin{align}
  \frac{16r}{t\sqrt{\Delta}}\leq 1/\Gamma\leq 1
  \label{eq:n1r}
\end{align}
for some real positive number $\Gamma$.  There is a Hermitian
polynomial $P$ of the operators $\{H_i\}_{i=1}^{n}$ of degree at
most $s$, where
\begin{align}
  s\EqDef \frac{4n\cdot 6^{b-1} \Gamma}{\sqrt{t\Delta}} 
    \qquad \text{and} \qquad P\Pi=\Pi \qquad \text{and} \qquad 
    \|P-\Pi\|\leq \frac{1}{200}\exp{\left(-\Gamma r^{b-1}\right)}
\label{eq:degeps}
\end{align}
Moreover, $P$ can be expressed as a linear combination of at most
$(2s)^{(6r)^{b-1}}$ polynomials from the set
$\mathcal{P}(s,(6r)^{b-1})$. \label{lem:nrb}
\end{lemma}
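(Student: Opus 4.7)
The plan is to prove \Lem{lem:nrb} by induction on the recursion depth $b$, with a single Chebyshev polynomial handling the base case and the robust polynomial $\widetilde{\Rob}$ stitching together inductive solutions on overlapping subintervals. For the base case $b=1$ (so $n=t$), I will take $P := T_{\Delta/n,\, s}(H/n)$ where $H = \sum_{j=1}^n H_j$ and $s = 4\sqrt{n}\,\Gamma/\sqrt{\Delta}$ matches the degree formula with $b=1$. By the gap property the spectrum of $H/n$ lies in $\{0\}\cup[\Delta/n,1]$, so \Lem{lem:chebyop} gives $P\Pi=\Pi$ and $\|P-\Pi\|\leq 2e^{-8\Gamma}$, which is at most $\frac{1}{200}e^{-\Gamma}$ because $\Gamma\geq 1$ (from \Eq{eq:n1r}). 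As a degree-$s$ polynomial in $H_{[n]}$, $P$ is a linear combination of at most $2s$ monomials $(H_{[n]})^k \in \cP(s,1)$.

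For the inductive step $b\geq 2$, I will partition $[n]$ into $\xi:=2r-1$ overlapping intervals $I_1,\ldots,I_\xi$, each of length $n/r$ with consecutive overlap $n/(2r)$, as in \Fig{fig:recursechain}. Applying the inductive hypothesis (with parameters $n\mapsto n/r$, $b\mapsto b-1$, same $t,r,\Gamma,\Delta$) to each $I_k$ yields a polynomial $P_k$ of degree $\leq s':=s/(6r)$ satisfying $P_k\Pi_{I_k}=\Pi_{I_k}$ and $\|P_k-\Pi_{I_k}\|\leq \frac{1}{200}e^{-\Gamma r^{b-2}}$, expressible as a linear combination of $(2s')^{(6r)^{b-2}}$ elements of $\cP(s',(6r)^{b-2})$ supported on $\{H_j:j\in I_k\}$. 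The combined polynomial is
\[
P := \widetilde{\Rob}(P_1, P_2, \ldots, P_\xi).
\]
The easy properties follow by routine arguments: $P\Pi=\Pi$ holds by Claim \ref{clm:robground} since $\Pi_{I_k}\Pi=\Pi$ implies $P_k\Pi=\Pi$; the total degree is $\leq 3\xi\cdot s'\leq 6r\cdot s' = s$; and combining Claim \ref{claim:unwrap} with expansion of each $P_k^{a_i}$ (whose exponents sum to at most $3\xi$) shows that $P$ is a linear combination of at most $2^{5\xi}(2s')^{3\xi(6r)^{b-2}}\leq (2s)^{(6r)^{b-1}}$ elements of $\cP(s,(6r)^{b-1})$.

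The error will be controlled via the triangle inequality
\[
\|P-\Pi\|\leq \|\widetilde{\Rob}(P_1,\ldots,P_\xi)-\Rob(P_1,\ldots,P_\xi)\| + \|\Rob(P_1,\ldots,P_\xi)-\Pi\|.
\]
For the second term, the observation $\|P_k-\Pi_{I_k}\|<1$ implies that the $+1$ eigenspace of $P_k$ is exactly $\mathrm{range}(\Pi_{I_k})$, so $B(P_k)=\Pi_{I_k}$ and $\Rob(P_1,\ldots,P_\xi)=\Pi_{I_\xi}\cdots\Pi_{I_1}\Pi_{I_2}\cdots\Pi_{I_\xi}$. Iterating the merge property on the $\xi-1$ overlaps (each of size $n/(2r)$) and symmetrizing bounds this by $O(r)\cdot e^{-(n/(2r))\sqrt{\Delta}}$. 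For the first term, I will use a minor variant of \Lem{lem:roblem} adapted to spectra in $[-\epsilon,\epsilon]\cup\{1\}$ (the proof only uses $\|B_i(P_k)\|=O(\epsilon^{i-1})$, which still holds up to a factor of $2$), giving $\leq(O(1)\cdot\epsilon)^\xi$ with $\epsilon=\frac{1}{200}e^{-\Gamma r^{b-2}}$. Since $\xi\geq r$, this first term becomes $\lesssim e^{-\Gamma r^{b-1}}$ with a small coefficient, and \Eq{eq:n1r} yields $(n/(2r))\sqrt{\Delta}=(t/2)r^{b-2}\sqrt{\Delta}\geq 8\Gamma r^{b-1}$, making the merge error negligible by comparison. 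Both contributions sum to at most $\frac{1}{200}e^{-\Gamma r^{b-1}}$, closing the induction.

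The hardest part will be the constant-chasing in this error analysis: the prefactor $1/200$ must survive through every recursive level, so the constant $16$ in \Eq{eq:n1r} must be tuned so that the merge error (with its accumulated $O(r)$ factor from iterating $\xi-1$ merges) is dominated by $e^{-\Gamma r^{b-1}}$, and the $(O(1)\cdot\epsilon)^\xi$ bound from \Lem{lem:roblem} must absorb the growth factor from the $\xi=2r-1$ inputs while preserving the target exponent. Once the recursion is set up correctly, the remaining properties (degree, structure, and counting of monomials) are routine bookkeeping.
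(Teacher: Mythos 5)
Your proposal follows essentially the same route as the paper's proof: a Chebyshev polynomial via the gap property at the base level, overlapping length-$n/r$ intervals glued with $\widetilde{\Rob}$, the error split into the robust-polynomial error plus the iterated-merge error (with the same constant-chasing making both terms fit under $\frac{1}{200}e^{-\Gamma r^{b-1}}$), and the same monomial bookkeeping; your explicit treatment of the possibly negative spectrum of the inputs to $\widetilde{\Rob}$ is a point the paper passes over silently. The only detail you omit is the degenerate case $r=1$ (where $\xi=1$ and the $(O(1)\cdot\epsilon)^{\xi}$ bound alone is too weak), which the paper dispatches in a footnote by observing it coincides with the base case $b=1$.
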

\Lem{lem:nrb} is the main contribution of this section and its
proof (below) is the conceptual heart of our method. From its
statement we can already see how the optimal tradeoff between degree
and approximation error is obtained. In particular, setting
$\Gamma=1$ and $b=O(1)$ in Lemma \ref{lem:nrb}, we see that
$r^{b-1} = \Theta\big(\frac{s^2\Delta}{n}\big)$ and so
\Eq{eq:degeps} has the desired form \eqref{eq:papprox} of
\Thm{thm:1Dpoly}. To get \Thm{thm:almostlin} we will ultimately take
$b$ to grow mildly (polylogarithmically) with $n$ which allows us to
approach linear degree.  There is also the slightly cumbersome
$\Gamma$ parameter in Lemma \ref{lem:nrb} which controls the
approximation error at the base level of the recursive construction
of the polynomial. Ultimately we take $\Gamma>1$ in the proof of
\Thm{thm:almostlin}; this ensures (for technical reasons) that the
polynomial has the claimed structure as a sum of elementary
polynomials.

\begin{proof}[Proof of Lemma \ref{lem:nrb}]
  Let us fix $t$ and $r$ satisfying \Eq{eq:n1r}. We show the claim
  by induction on $b$. 

  First consider the base case $b=1$.  In this case we have $n=t$
  and we take
  \begin{align*}
    P&\EqDef T_{\frac{\Delta}{t}, s}
      \left(\frac{1}{t} \sum_{j=1}^{t} H_j\right), &
      s &= 4\Gamma\sqrt{t/\Delta}
  \end{align*}
  which is a polynomial of degree at most $s=4\Gamma
  \sqrt{t/\Delta}=4n\Gamma/(\sqrt{t\Delta})$ as claimed. By
  construction, $P$ is Hermitian, and applying Lemma
  \ref{lem:chebyop} and using the gap property we get $P\Pi=\Pi$ and
  $\|P-\Pi\|\leq 2e^{-8\Gamma}\leq 1/200e^{-\Gamma}$ as required.
  Finally, note that $P$ is a univariate polynomial of degree at
  most $s$ in $H_S$, where $S=\{1,2,\ldots, t\}$. Thus $P$ is a
  linear combination of at most $s+1\leq 2s$ elements of
  $\mathcal{P}(s,1)$.

  Next let $b\geq 2$ and suppose the claim is true for $b-1$. Let us
  subdivide our operator labels $[n]$ into $\prt=2r-1$ overlapping
  intervals of length $n/r$ as depicted in Fig.
  \ref{fig:recursechain}. Consecutive intervals overlap in $n/(2r)$
  places (this is an integer as $t$ is even). Let us write these
  intervals as $[n]=I_1\cup I_2\cup \ldots I_\prt$. We apply the
  inductive hypothesis to obtain an approximation $P^{(j)}$ to the
  ground state projector $\Pi^{(j)}$ of each interval $I_j$ . The
  inductive hypothesis states that $P^{(j)}$ is a Hermitian
  polynomial of degree at most 
  \begin{align}
    \frac{4(n/r)\cdot 6^{b-2} \Gamma }{\sqrt{t\Delta}}
  \label{eq:bminus1}
  \end{align}
  and satisfies
  \begin{align*}
    P^{(j)}\Pi^{(j)}=\Pi^{(j)} \qquad \text{and} 
      \qquad \|P^{(j)}-\Pi^{(j)}\|
      \leq \frac{1}{200}e^{-\Gamma r^{b-2}}.
  \end{align*}
  We then define our polynomial approximation to the ground space of
  the whole chain as
  \begin{align}
    P=\widetilde{\Rob}(P^{(1)},P^{(2)},\ldots, P^{(\prt)}).
  \label{eq:P}
  \end{align}
  Since $\widetilde{\Rob}$ is a polynomial of degree at most
  $3\prt\leq 6r$, and each input is Hermitian and has degree at most
  \Eq{eq:bminus1}, $P$ is Hermitian and has degree upper bounded as
  in \Eq{eq:degeps}.  Applying Claim~\ref{clm:robground} we see that
  $P\Pi=\Pi$. Applying \Lem{lem:roblem} we get 
  \begin{align}
    \|P-\Rob(P^{(1)},P^{(2)},\ldots, P^{(\prt)})\|
      \leq \left(\frac{1}{20} e^{-\Gamma r^{b-2}}\right)^\prt.
  \label{eq:p1}
  \end{align}
  Using the merge property \eqref{eq:merge} and the triangle
  inequality we get
  \begin{align}
    \|\Pi^{(1)}\Pi^{(2)}\ldots \Pi^{(\prt)}-\Pi\|
      \leq 2\prt \exp{\left(-\frac{\sqrt{\Delta} n}{2r}\right)}.
  \label{eq:p2}
  \end{align}
  Now recall that $\Rob(P^{(1)},P^{(2)},\ldots,
  P^{(\prt)})=C^\dagger C$ where $C=\Pi^{(1)}\Pi^{(2)}\ldots
  \Pi^{(\prt)}$. Substituting in \Eq{eq:p1}, applying the triangle
  inequality, and using \Eq{eq:p2}, we arrive at
  \begin{align*}
    \|P-\Pi\|\leq \|P-C^\dagger C\| + \|C^\dagger C-\Pi\|
    \leq \left(\frac{1}{20}e^{-\Gamma r^{b-2}}\right)^{\prt}
      + 4\prt\exp{\left(-\frac{\sqrt{\Delta} n}{2r}\right)}
  \end{align*}

  To complete the proof we show that each of terms on the RHS is at
  most $1/400 \exp{(-\Gamma r^{b-1})}$. The first term is bounded in
  this way since $\prt\geq r$ and $r\geq 2$ \footnote{If $r=1$ then
  $b>1$ is the same as $b=1$ which is handled above.}. For the
  second term, we write
  \begin{align*}
    4\prt \exp{\left(-\frac{\sqrt{\Delta} n}{2r}\right)}
      = 4\prt \exp{\left(-\frac{tr^{b-2}\sqrt{\Delta}}{2}\right)}
        \leq 8r \exp{\left(-8\Gamma r^{b-1}\right)}
        \leq 1/400 \exp{(-\Gamma r^{b-1})}.
  \end{align*}
  where we used \Eq{eq:n1r} and in the last inequality we used the
  fact that $r,b\geq 2$ and $\Gamma\geq 1$.

  Finally, let us show that $P$ has the claimed structure. The
  inductive hypothesis states that each $P^{(j)}$ is a sum of at
  most $(2s/6r)^{(6r)^{b-2}}$ polynomials from the set
  $\mathcal{P}(s/6r, (6r)^{b-2})$. Using this fact and Claim
  \ref{claim:unwrap} with $m\leftarrow \prt$ we see that \Eq{eq:P}
  is a sum of at most 
  \begin{align}
    2^{5\prt} \left((2s/6r)^{(6r)^{b-2}}\right)^{3\prt}
  \label{eq:taurec}
  \end{align}
  polynomials from the set $\mathcal{P}(3\prt s/6r,
  (6r)^{b-2}3\prt)$. Using the upper bound $\prt\leq 2r$ we see that
  each of the latter elementary polynomials is in the set
  $\mathcal{P}(s, (6r)^{b-1})$, and that the number of them
  \Eq{eq:taurec} is at most $2^{10r}(2s/6r)^{(6r)^{b-1}}\leq
  (2s)^{(6r)^{b-1}}$, where we used   $2^{10r}/(6r)^{6r}\leq 1$.
\end{proof}

Let us now see how to obtain Theorems~\ref{thm:1Dpoly},
\ref{thm:almostlin} from \Lem{lem:nrb}. The proofs are along the
same lines so we handle them both below. As noted above, the key ideas are all contained in Lemma
\ref{lem:nrb} and all that is left is to choose the parameters
$t,r,b$ in a suitable manner. The analysis is somewhat tedious as
several parameters are required to be integers.

\begin{proof}[Proof of \Thm{thm:1Dpoly} and \Thm{thm:almostlin}]
  Suppose $\{H_j\}_{j=1}^{n}$ satisfy Eqs.~(\ref{eq:noncom},
  \ref{eq:locgap},\ref{eq:merge}), $\delta\in (0,1/4)$, and that $s$
  is a real number satisfying \Eq{eq:dbounds}. We shall specify an
  integer $n'\geq n$ for which a suitable polynomial approximation
  $P'$, of a certain degree $s'$, can be constructed using Lemma
  \ref{lem:nrb}. By a simple padding argument, this implies a
  polynomial approximation $P$ with degree $s'$ and the same
  approximation error for the original system of $n$
  operators\footnote{It suffices to set $H_j=0$ for all $n+1\leq
  j\leq n'$. This does not change the nullspace projector $\Pi$.
  Moreover, the new system also satisfies the gap and merge
  properties for the same $\Delta$.}. In particular, we choose
  $n'=tr^{b-1}$ where 
  \begin{align}
    b=1+\lceil \left(2\delta\right)^{-1} \rceil \qquad t
      = 2\lceil \frac{n^2}{\Delta s^2}\rceil \qquad r
      =\lceil \left(n/t\right)^{1/(b-1)}\rceil.
  \label{eq:brt}
  \end{align}
  Clearly $t,b,r$ are positive integers and $t$ is even.  Note that 
  the lower bound from \Eq{eq:dbounds} implies $t\leq n$ and
  therefore $r\leq 2(n/t)^{\frac{1}{b-1}}$. Using this fact we see
  that
  \begin{align}
    n'=tr^{b-1}\leq 2^{b-1}n.
  \label{eq:nprime}
  \end{align}
   We now show that the condition \Eq{eq:n1r} is satisfied as long as
  $\Gamma\leq n^{4\delta^2}$.
  \begin{align}
    \frac{16r\Gamma }{t\sqrt{\Delta}}
      &\leq \frac{32\Gamma}{n\sqrt{\Delta}} 
        \left(n/t\right)^{1+\frac{1}{b-1}}
      \leq  \frac{32\Gamma}{n\sqrt{\Delta}} 
        \left(\Delta s^2/2n\right)^{1+2\delta}
      \leq  \frac{32}{2^{1+2\delta}4^{2(1+2\delta)}}
        \Delta^{\delta}n^{-4\delta^2}\Gamma \leq 1 
  \label{eq:satcond}
  \end{align}
  where in the second-to-last inequality we upper bounded $s$ using
  \Eq{eq:dbounds} and in the last inequality we used the fact that
  $\Delta\leq 1$ and $\Gamma\leq n^{4\delta^2}$.  

  Let us now bound the degree $s'$ and approximation error of the
  polynomial $P'$ obtained from Lemma \ref{lem:nrb} with the choices
  \Eq{eq:brt}.  Using the fact that $t\geq 2n^2/(s^2\Delta)$ we get
  \begin{align}
    s'= \frac{4n'\cdot 6^{b-1}\Gamma }{\sqrt{t\Delta}} 
      \leq 2\sqrt{2}s\cdot (n'/n)\cdot 6^{b-1}\Gamma 
      \leq 2\sqrt{2}\cdot (12)^{b-1}s \Gamma
  \label{eq:sprime}
  \end{align}
  where we used \Eq{eq:nprime}. The approximation error satisfies
  \begin{align}
    \|P'-\Pi'\|\leq (1/200)e^{-\Gamma r^{b-1}}\leq e^{-n\Gamma/t}
      \leq e^{-\Delta s^2\Gamma/4n},
  \label{eq:approxerror}
  \end{align}
  where we used $n\leq n'$ and in the last inequality we used the
  fact that $t\leq \frac{4n^2}{\Delta s^2}$. Theorems
  \ref{thm:1Dpoly} and \ref{thm:almostlin} are obtained as special
  cases of the above.

  \Thm{thm:1Dpoly} is obtained in the special case that $\delta$ is
  a fixed constant and with the choice $\Gamma=1$. In this case we
  have $b=O(1)$ and using \Eq{eq:sprime} we see that our polynomial
  has degree $s'=O(s)$.  The approximation error \Eq{eq:approxerror}
  has the desired form since $\Gamma=1$.

  Now let us prove \Thm{thm:almostlin}. This is obtained by
  specializing to the case
  \begin{align}
    s=(1/4)\Delta^{-1/4}n^{1-\delta} \qquad \qquad 
      \delta=(\log n)^{-1/4} \qquad \qquad 
      \Gamma=n^{4\delta^2}=e^{4\sqrt{\log n}}.
  \label{eq:schoice}
  \end{align}
  Note that with these choices we have $b-1=\lceil (\log
  n)^{1/4}\rceil$. Here we have chosen $s$ at the upper limit of
  \Eq{eq:dbounds}. We also need to verify that the lower bound in
  \Eq{eq:dbounds} holds (that is, the range of allowed degrees
  \Eq{eq:dbounds} is nonempty). We see that this constraint is
  satisfied as long as $n^{2-4\delta}\Delta\geq (64)^2$. This
  follows from our assumption $n\Delta\geq C$ for some sufficiently
  large absolute constant $C$.

  \Lem{lem:nrb} then states that our polynomial $P'$ is a sum
  of at most $(2\alpha)^{\beta}$ elements of $P(\alpha,\beta)$,
  where $\alpha\EqDef s'$ and $\beta\EqDef (6r)^{b-1}$. The
  approximation error, using the first upper bound in
  \Eq{eq:approxerror}, is at most 
  \begin{align*}
    e^{-\beta \Gamma/6^{b-1}}
      = \exp{\left(-\beta e^{4\sqrt{\log n}}6^{-(b-1)}\right)}
      \leq \exp{\left(-\beta e^{\sqrt{\log(n)}}\right)}
  \end{align*}
  where we used the fact that $e^{3\sqrt{\log(n)}}\geq 6^{\lceil
  (\log n)^{1/4}\rceil}$ for $n\geq 2$.

  It remains to establish \Eq{eq:albet}. Using \Eq{eq:sprime} and
  plugging in our choices from \Eq{eq:schoice} we get 
  \begin{align*}
    \alpha\leq \frac{1}{\sqrt{2}} \Delta^{-1/4} 
      n^{1-(\log n)^{-1/4}+4(\log n)^{-1/2}} 
      (12)^{\lceil (\log n)^{1/4}\rceil}.
  \end{align*}
  Since $(12)^{\lceil (\log n)^{1/4}\rceil}=n^{O((\log n)^{-3/4})}$,
  we see from the above that for $n$ larger than some absolute
  constant we have $\alpha\leq n\Delta^{-1/4}$ as claimed (we ensure
  this by choosing $C$ sufficiently large).

  To bound $\beta$, we use the facts that $1\leq (n'/n)\leq
  n^{O\left((\log n)^{-3/4}\right)}$ (cf. Eq.~(\ref{eq:nprime})) and
  $t=\Theta(\Delta^{-1/2}n^{(\log n)^{-1/4}})$ which follows from
  our choices Eqs.~(\ref{eq:schoice},\ref{eq:brt}). Combining these
  bounds we get $\beta=6^{b-1}n'/t=\Delta^{1/2}n^{1-O((\log
  n)^{-1/4})}$ as claimed.
\end{proof}

\subsection{Application to 1D quantum spin systems}
\label{sec:1D} 

As a prototypical application of the results of the previous
section, here we specialize to the case of frustration-free
one-dimensional quantum spin systems with a local gap. 

Consider a 1D system of $n+1$ qudits of local dimension
$d\geq 2$. The Hilbert space is $\left(\mathbb{C}^d\right)^{\otimes
n+1}$ and the Hamiltonian is $H=\sum_{j=1}^{n} H_j$, where each
operator $H_j$ satisfies $0\leq H_j\leq I$ and acts nontrivially
only on qudits $j$ and $j+1$ (and as the identity on all other
qudits). The \textit{local gap} $\gamma$ is defined as the minimum
spectral gap of a subset of Hamiltonian terms 
\begin{align*}
  \gamma\EqDef \min\limits_{S\subseteq [n]} \; 
    \mathrm{gap}\big(\sum_{j\in S} H_j\big). 
\end{align*}
By definition, operators $\{H_j\}_{j=1}^{n}$ satisfy the gap
property \Eq{eq:locgap} with $\Delta=\gamma$. Below we show that the merge property is satisfied with
$\Delta=\gamma/80$ (a consequence of the ``detectability lemma"
\cite{AharonovALV08,AAV16}). Therefore we may substitute
$\Delta=\gamma/80$ in Theorems \ref{thm:1Dpoly} and
\ref{thm:almostlin} to obtain optimal approximations to the ground
state projector $\Pi$, as claimed in Theorem \ref{thm:informal1d}.

\begin{lemma}[\cite{AAV16}]
\label{lem:1Dmerge} 
  Suppose $H=\sum_{j=1}^{n}H_j$ is a 1D frustration-free qudit
  Hamiltonian with local gap $\gamma$. Then $\{H_j\}_{j=1}^{n}$
  satisfy the merge property \Eq{eq:merge} with $\Delta=\gamma/80$.
\end{lemma}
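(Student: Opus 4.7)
The proof strategy is to apply the detectability lemma of \cite{AAV16}, combined with a Chebyshev polynomial amplification, to the 1D frustration-free Hamiltonian $H_S$ equipped with the three-part partition $S = ABC$.

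\textbf{Step 1 (reduction to a spectral question).} Since $\ker H_S \subseteq \ker H_{AB}$ and $\ker H_S \subseteq \ker H_{BC}$, we have $\Pi_{AB}\Pi_S = \Pi_{BC}\Pi_S = \Pi_S$, and hence $\Pi_{AB}\Pi_{BC} - \Pi_S = \Pi_{AB}\Pi_{BC}(I - \Pi_S)$. Taking adjoints and squaring gives $\|\Pi_{AB}\Pi_{BC} - \Pi_S\|^2 = \|\Pi_{BC}\Pi_{AB}\Pi_{BC} - \Pi_S\|$. The Hermitian operator $R = \Pi_{BC}\Pi_{AB}\Pi_{BC}$ acts as the identity on $\mathrm{range}(\Pi_S)$ and has all its other eigenvalues strictly in $[0,1)$. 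By the Jordan/CS decomposition, the merge inequality is exactly the statement that the smallest nonzero principal angle between $\mathrm{range}(\Pi_{AB})$ and $\mathrm{range}(\Pi_{BC})$, taken outside of $\mathrm{range}(\Pi_S)$, is lower bounded by a function of $|B|\sqrt{\gamma}$.

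\textbf{Step 2 (detectability + Chebyshev amplification).} The AAV detectability lemma applied to the 1D Hamiltonian $H_S$ in its even/odd layered form gives a ``detection'' operator $D$ with $\|D(I - \Pi_S)\|^2 \leq (1 + \gamma/c_1)^{-1}$ for an absolute constant $c_1$; this is a contraction whose rate is linear in $\gamma$. To obtain the $\sqrt{\gamma}$ rate appearing in the merge exponent, I would amplify this contraction by applying a Chebyshev polynomial of degree $\Theta(|B|)$ to the positive Hermitian operator $D^{\dagger} D$, invoking the extremal property of Chebyshev polynomials exactly as in \Lem{lem:cheby}. The key identification is that $\Pi_{AB}$ and $\Pi_{BC}$ are \emph{full} kernel projectors and therefore effectively implement infinite-depth detectability iterates on the $AB$ and $BC$ sides; the single product $\Pi_{AB}\Pi_{BC}$ should dominate the Chebyshev-amplified iterate spread across the full width $|B|$ of the overlap region. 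Tracking numerical constants through the detectability lemma, the Chebyshev bound, and the layer counting produces the absolute constant $80$ appearing in the final exponent.

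The principal obstacle is the identification in the last step: showing rigorously that a single product $\Pi_{AB}\Pi_{BC}$ yields an exponent proportional to $|B|$, rather than just a constant contraction per application of a detectability operator. The naive interpretation of the detectability lemma yields only a single-shot constant contraction, and Chebyshev alone on $H_S$ or $H_B$ does not give the required $|B|\sqrt{\gamma}$ exponent either; it is the interaction between the ``macroscopic'' nature of the full kernel projectors $\Pi_{AB}, \Pi_{BC}$, the local gap on the intervening interval $B$, and the Chebyshev boost that produces the correct scaling.
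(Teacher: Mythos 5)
There is a genuine gap, and you have in fact located it yourself: your closing paragraph concedes that you cannot show how a \emph{single} product $\Pi_{AB}\Pi_{BC}$ yields an error exponent proportional to $|B|$, and that concession is precisely where the paper's proof does its work. The missing ingredient is an exact \emph{absorption} identity (Claim~6 of \cite{AAV16}): writing $\DL_S$ for the even/odd-ordered product of the local nullspace projectors $\Pi_j$, $j\in S$, one has
\begin{align*}
\Pi_{AB}\,(\DL_S^{\dagger}\DL_S)^{q}\,\Pi_{BC}=\Pi_{AB}\Pi_{BC}
\qquad\text{for all integers } 0\leq q\leq \tfrac{|B|}{8},
\end{align*}
because each application of $\DL_S^{\dagger}\DL_S$ only fails to be absorbed by $\Pi_{AB}$ or $\Pi_{BC}$ within a constant-width light cone around the middle region, which stays inside $B$ for $q\lesssim |B|/8$. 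Consequently $\Pi_{AB}\,f(I-\DL_S^{\dagger}\DL_S)\,\Pi_{BC}=\Pi_{AB}\Pi_{BC}$ \emph{exactly} for any polynomial $f$ of degree at most $|B|/8$ with $f(0)=1$. Taking $f=T_{\gamma/5,\,|B|/8}$, using the detectability-lemma gap bound $\mathrm{gap}(I-\DL_S^{\dagger}\DL_S)\geq\gamma/5$ (Theorem~\ref{thm:DLconv} with $g=2$), and invoking \Lem{lem:chebyop} together with $\Pi_{AB}\Pi_{S}\Pi_{BC}=\Pi_S$ gives $\|\Pi_{AB}\Pi_{BC}-\Pi_S\|\leq 2e^{-\frac{|B|}{4}\sqrt{\gamma/5}}=2e^{-|B|\sqrt{\gamma/80}}$, i.e.\ the merge property with $\Delta=\gamma/80$.

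Your other ingredients are fine but do not substitute for this step. Step~1 (the identity $\|\Pi_{AB}\Pi_{BC}-\Pi_S\|^2=\|\Pi_{BC}\Pi_{AB}\Pi_{BC}-\Pi_S\|$ and the principal-angle reformulation) is correct but unnecessary; the paper bounds the operator norm directly. More importantly, your Step~2 misplaces the role of the Chebyshev polynomial: it is not used to ``amplify'' the single-shot contraction $\|D(I-\Pi_S)\|^2\leq(1+\gamma/c_1)^{-1}$, and no such amplification is available because nothing in your argument lets you apply $D$ (or $D^{\dagger}D$) more than once between the two projectors. In the paper the Chebyshev polynomial is applied to $I-\DL_S^{\dagger}\DL_S$, whose gap is $\Omega(\gamma)$ by the detectability lemma, and its \emph{low degree} ($\leq|B|/8$) is exactly what allows it to be inserted between $\Pi_{AB}$ and $\Pi_{BC}$ at zero cost via the absorption identity; the $|B|\sqrt{\gamma}$ exponent then comes from the standard degree-versus-gap tradeoff of \Lem{lem:chebyop}, not from iterating a contraction. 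Without the absorption claim (or an equivalent light-cone/propagation argument), your outline does not close.
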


\begin{proof}[Proof sketch]
  Let $S\subseteq [n]$ be an interval partitioned as $S=ABC$. 
  Define the ``detectability lemma'' operator of the interval $S$ by
    \begin{align*}
    \DL_S \EqDef \prod_{j\in S\cap \{1,3,5,\ldots\}} 
      \Pi_j \prod_{j\in S\cap \{2,4,6,\ldots\}} \Pi_j
  \end{align*}
  where $\Pi_j$ is the projector onto the nullspace of $H_j$.
  Clearly the nullspace of $(I-\DL_S^{\dagger}\DL_S)$ is equal to
  that of $\sum_{j\in S} H_j$. Moreover, the detectability lemma, as summarized in Theorem \ref{thm:DLconv} (setting $g=2$) implies
  \begin{align*}
    \mathrm{gap}(I-\DL_S^{\dagger} \DL_S)\geq \gamma/5.
  \end{align*}
   One can show (cf. Claim~6 of \cRef{AAV16})
  \begin{align}
  \label{eq:absorb1D}
    \Pi_{AB} (\DL_S^{\dagger} \DL_S)^q\Pi_{BC}=\Pi_{AB} \Pi_{BC} 
      \qquad \text{for all integers}\qquad 0\leq q\leq \frac{|B|}{8}. 
  \end{align}
The above implies $\Pi_{AB}
  f(I-\DL_S^{\dagger}\DL_S)\Pi_{BC}=\Pi_{AB} \Pi_{BC}$ for any polynomial
  $f$ of degree at most $|B|/8$, such that $f(0)=1$. Using this and
  $\Pi_{AB}\Pi_{ABC}\Pi_{BC}=\Pi_{ABC}$ gives
  \begin{align*}
    \|\Pi_{AB}\Pi_{BC}-\Pi_{ABC}\| 
      = \|\Pi_{AB}\big(T_{\gamma/5, \frac{|B|}{8}}
        (I-\DL_S^{\dagger}\DL_S)-\Pi_{ABC}\big) \Pi_{BC}\|
      \leq 2e^{-\frac{|B|}{4}\sqrt{\gamma/5}},
  \end{align*}
  where in the last inequality we used \Lem{lem:chebyop}. 
\end{proof}

\section{2D Area law}
\label{sec:2d}

Here we consider a 2D locally gapped, frustration-free quantum spin
system along with a bipartition of the qubits into two regions. We
use the results of \Sec{sec:mergprop} to construct a polynomial
approximate ground state projector (AGSP) which has a kind of 1D
structure along the boundary of the bipartition. We show that this
AGSP has low enough error as a function of its Schmidt rank across
the bipartition, to establish the area law as stated in Theorem
\ref{thm:informalarea} using the method from
Refs.~\cite{AharonovALV08, AradLV12,AradKLV13}.

Consider a system of qudits of local dimension $d$ arranged at the
vertices of an $L\times (n+1) $ grid with $n+1$ rows and $L$
columns, as shown in \Fig{fig:set-up}. The Hilbert space is
$\left(\mathbb{C}^{d}\right)^{\otimes L(n+1)}$, and we index qudits
by their (column, row) coordinates $(i,j)\in [L]\times [n+1]$. We
consider a Hamiltonian which acts as a sum of local
projectors\footnote{This is without loss of generality. Consider a frustration-free
hamiltonian $H'=\sum_{i,j}h'_{i,j}$, where $cI \geq h'_{i,j}\geq 0$
are not projectors. Let $h_{i,j}$ be the projector onto the span of
$h'_{i,j}$, so that $c h_{i,j}\geq h'_{i,j}$. The local spectral gap
of $H_0$ is at least $\frac{1}{c}$ times the local spectral gap of
$H'$ and they have the same ground space.}
\begin{align*}
  H_0=\sum_{i=1}^{L-1}\sum_{j=1}^{n} h_{ij} \qquad h_{ij}^2=h_{ij}
\end{align*}
where $h_{ij}$ acts nontrivially only on the qudits in the set
$\{i,i+1\}\times \{j,j+1\}$. We assume that $H_0$ has a unique
ground state $\ket{\Omega}$ such that $H_0|\Omega\rangle=0$. Since $h_{ij}\geq 0$ , the latter condition is equivalent to the
frustration-free property $h_{ij}|\Omega\rangle=0$ for all $i,j$.
Our results depend on the local gap of $H_0$: 
\begin{align}
\label{eq:locgap2D}
  \gamma\EqDef \min\limits_{S\subseteq [L-1]\times [n]} \; 
    \mathrm{gap}\big(\sum_{\{i,j\}\in S} h_{ij}\big). 
      \qquad \quad[\textbf{Local gap}]
\end{align}
(recall  $\mathrm{gap}(M)$ denotes the
smallest nonzero eigenvalue of a positive semidefinite operator
$M$.) We note that for our purposes it would in fact be sufficient to
consider a local gap in which the minimization is restricted to
rectangular regions. 

We consider a bipartition of the lattice into left and right
regions, corresponding to a ``vertical cut'' between a given column $c$ and
$c+1$, as depicted in Fig.~\ref{fig:set-up}. In the
following we write $\mathrm{SR}(M)$ for the Schmidt rank of an
operator with respect to the cut. 

To bound the entanglement entropy of $|\Omega\rangle$, we use the
powerful method of approximate ground state projectors (AGSP)
developed in Refs.~\cite{Hastings07, AharonovALV08, AradLV12,
AradKLV13}. The following theorem is obtained by specializing
Corollary III.4 of \cRef{AradLV12} to the case of Hermitian $K$.

\begin{theorem}[Entanglement entropy from AGSP \cite{AradLV12}]
\label{thm:agsp} 
  Suppose $K$ is a Hermitian operator satisfying
  $K|\Omega\rangle=|\Omega\rangle$ and 
  \begin{align*}
    \|K-|\Omega\rangle\langle \Omega|\|\cdot \mathrm{SR}(K)
      \leq \frac{1}{2}.
  \end{align*}
  Then the entanglement entropy of $\ket{\Omega}$ across the cut is
  upper bounded by $O(1)\cdot\log \big( \mathrm{SR}(K)\big)$.
\end{theorem}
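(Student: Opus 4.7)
The plan is to bound the entanglement entropy of $|\Omega\rangle$ across the cut by controlling how fast the tail of its Schmidt spectrum decays. Let $|\Omega\rangle=\sum_{i\geq 1}\lambda_i |L_i\rangle|R_i\rangle$ be the Schmidt decomposition with $\lambda_1\geq\lambda_2\geq\cdots\geq 0$, and set $D=\mathrm{SR}(K)$, $\epsilon=\|K-|\Omega\rangle\langle\Omega|\|$, and $\mu_\ell\EqDef \sum_{i>\ell}\lambda_i^2$. Under the hypothesis $\epsilon D\leq 1/2$, the engine of the argument is the one-step geometric decay
\begin{align}
\mu_{D\ell}\leq \epsilon^2\,\mu_\ell \qquad \text{for every integer } \ell\geq 1. \label{eq:mudecay-prop}
\end{align}

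To prove \eqref{eq:mudecay-prop}, I would consider the rank-$\ell$ truncation $|\Omega_\ell\rangle=\sum_{i\leq \ell}\lambda_i|L_i\rangle|R_i\rangle$ and its tail $|\Omega_\ell^\perp\rangle=|\Omega\rangle-|\Omega_\ell\rangle$, which satisfies $\langle\Omega|\Omega_\ell^\perp\rangle=0$ and $\|\Omega_\ell^\perp\|^2=\mu_\ell$. Since $|\Omega_\ell\rangle$ has Schmidt rank at most $\ell$ and $K$ has operator Schmidt rank $D$, the vector $K|\Omega_\ell\rangle$ has Schmidt rank at most $D\ell$. Using $K|\Omega\rangle=|\Omega\rangle$ gives $K|\Omega_\ell\rangle=|\Omega\rangle-K|\Omega_\ell^\perp\rangle$, so $\||\Omega\rangle-K|\Omega_\ell\rangle\|=\|K|\Omega_\ell^\perp\rangle\|$. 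Hermiticity of $K$ together with $K|\Omega\rangle=|\Omega\rangle$ yields the algebraic identity $K^2=|\Omega\rangle\langle\Omega|+(K-|\Omega\rangle\langle\Omega|)^2$, and combined with $\langle\Omega|\Omega_\ell^\perp\rangle=0$ this gives $\|K|\Omega_\ell^\perp\rangle\|^2=\langle\Omega_\ell^\perp|(K-|\Omega\rangle\langle\Omega|)^2|\Omega_\ell^\perp\rangle\leq \epsilon^2\mu_\ell$. Finally, since $|\Omega_{D\ell}\rangle$ is the best $\ell_2$-approximation of $|\Omega\rangle$ by a vector of Schmidt rank at most $D\ell$, the Eckart--Young property gives $\sqrt{\mu_{D\ell}}\leq \||\Omega\rangle-K|\Omega_\ell\rangle\|\leq \epsilon\sqrt{\mu_\ell}$, proving \eqref{eq:mudecay-prop}.

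Iterating \eqref{eq:mudecay-prop} from $\mu_1\leq 1$ yields $\mu_{D^t}\leq \epsilon^{2t}$ for all $t\geq 0$. To convert tail decay into an entropy bound, I would partition the Schmidt indices into shells $S_t=\{i:D^{t-1}<i\leq D^t\}$ for $t\geq 1$. The mass on $S_t$ is $q_t\EqDef \sum_{i\in S_t}\lambda_i^2\leq \mu_{D^{t-1}}\leq \epsilon^{2(t-1)}$, and $|S_t|\leq D^t$. Concavity of $x\mapsto -x\log x$ gives the per-shell estimate $\sum_{i\in S_t}(-\lambda_i^2\log\lambda_i^2)\leq q_t\log(|S_t|/q_t)\leq q_t\cdot t\log D + q_t\log(1/q_t)$, and summing over $t$ using $\epsilon D\leq 1/2$ makes both $\sum_t t\,q_t$ and $\sum_t q_t\log(1/q_t)$ convergent geometric-type series, yielding the claimed $O(1)\cdot \log(\mathrm{SR}(K))$ bound on the entanglement entropy.

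The main obstacle is the sharp one-step decay \eqref{eq:mudecay-prop}: the factor $\mu_\ell$ on the right-hand side (rather than a constant) is what makes the iteration produce geometric decay of the Schmidt tail, and it rests entirely on the $K^2$ identity above, which is the single place where both $K|\Omega\rangle=|\Omega\rangle$ and Hermiticity of $K$ are used simultaneously. Everything else reduces to standard manipulations of Schmidt decompositions and concave entropy sums.
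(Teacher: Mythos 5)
The engine of your argument fails at the orthogonality claim. For the Schmidt truncation $|\Omega_\ell\rangle=\sum_{i\le\ell}\lambda_i|L_i\rangle|R_i\rangle$ the tail $|\Omega_\ell^\perp\rangle=\sum_{i>\ell}\lambda_i|L_i\rangle|R_i\rangle$ is orthogonal to $|\Omega_\ell\rangle$, but \emph{not} to $|\Omega\rangle$: in fact $\langle\Omega|\Omega_\ell^\perp\rangle=\sum_{i>\ell}\lambda_i^2=\mu_\ell$. Your identity $K^2=|\Omega\rangle\langle\Omega|+(K-|\Omega\rangle\langle\Omega|)^2$ is correct, but it then gives $\|K|\Omega_\ell^\perp\rangle\|^2=|\langle\Omega|\Omega_\ell^\perp\rangle|^2+\|(K-|\Omega\rangle\langle\Omega|)|\Omega_\ell^\perp\rangle\|^2\le \mu_\ell^2+\epsilon^2\mu_\ell$, so the recursion you can actually prove is $\mu_{D\ell}\le \mu_\ell(\mu_\ell+\epsilon^2)$, not $\mu_{D\ell}\le\epsilon^2\mu_\ell$. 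Starting from $\mu_1$ close to $1$ this gives essentially no decay; even if one separately knew $\mu_\ell\le 1-\tfrac{1}{2D}$, the corrected recursion only improves the tail by a factor $1-\Omega(1/D)$ per $D$-fold increase of the rank, which leads to a bound of order $D\log D$ rather than $O(\log D)$. A telltale sign of the problem is that your proof never really uses the hypothesis $\epsilon\cdot\mathrm{SR}(K)\le 1/2$ — only $\epsilon\le 1/2$ in the final convergent sums — and a statement of this strength under so weak a hypothesis should not be expected. Conceptually, applying $K$ to $|\Omega_\ell\rangle$ cannot help: the error $|\Omega_\ell^\perp\rangle$ has a component of size $\mu_\ell$ \emph{along} $|\Omega\rangle$, and $K$ leaves that component intact.

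The paper does not reprove this statement; it cites Corollary III.4 of \cite{AradLV12}, and the standard proof there supplies exactly the ingredient you are missing: a seed overlap. One first shows, using $\epsilon\cdot\mathrm{SR}(K)\le\frac12$, that some product state $|\phi\rangle$ has $|\langle\phi|\Omega\rangle|^2\ge \frac{1}{2D}$ with $D=\mathrm{SR}(K)$ (apply $K$ to the best product approximation: the Schmidt rank grows by at most $D$ while the component orthogonal to $|\Omega\rangle$ shrinks by $\epsilon$, and a rank-$D$ vector with overlap $\nu$ yields a product state with overlap $\nu/\sqrt{D}$). One then applies $K^t$ to this product state — not to truncations of $|\Omega\rangle$ — obtaining vectors of Schmidt rank at most $D^t$ whose normalized distance to $|\Omega\rangle$ is $O(\sqrt{D}\,\epsilon^{t})$; Eckart--Young then gives $\mu_{D^t}=O(D\,\epsilon^{2t})$, and your shell/concavity computation (which is fine as written) completes the $O(1)\cdot\log D$ bound. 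So the Eckart--Young step and the entropy summation in your proposal are sound; the genuine gap is the false one-step decay \eqref{eq:mudecay-prop}, and repairing it requires replacing the truncation-based iteration by the product-state bootstrapping argument, which is precisely where the hypothesis $\epsilon\cdot\mathrm{SR}(K)\le\frac12$ enters.
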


We use the results of \Sec{sec:mergprop} to construct a
suitable AGSP $K$. To this end we first construct a system of
operators $\{H_j\}_{j=1}^{n}$ which has the gap and merge properties
Eqs.~(\ref{eq:locgap},\ref{eq:merge}).  

We are going to focus our attention on a band of $w$ columns of
qudits centered around the cut $(c,c+1)$; see Figure
\ref{fig:Sab}(a). Here $w$ is an integer that we will choose later
to depend only on the cut length $n$. For now, suppose WLOG that the cut is not too close to the left or
right boundary of the lattice, i.e., $w\le c \le L-w$ (otherwise we
can ensure this by a padding argument). We reorganize the indices of
qudits by changing $(i,j)\rightarrow (i-(c-w/2),j)$ such that the
cut is between indices $(w/2,w/2+1)\times [n+1]$, and the region of
width $w$ is $(1,w)\times [n+1]$. Let $\Pi_{L,j}$ and $\Pi_{R,j}$
project onto the ground space of all local terms $h_{ij}$ with $i<1$
and $i\geq w$ respectively. For each row $1\leq j\leq n$ define
\begin{align}
\label{eq:truncham}
  H_j \EqDef \frac{1}{2w}\bigg(\sum_{1\leq i<w}h_{ij} 
    + (I-\Pi_{L,j}) + (I-\Pi_{R,j})\bigg),
\end{align}
see Figure \ref{fig:Sab}(a). The norm of this
operator is bounded as $\|H_{j}\|\leq \frac{1}{2w}(w+2)\leq 1$. The
following lemma is proved in Appendix \ref{sec:2Dgap} using the
detectability lemma machinery \cite{AharonovALV08, AAV16}. 
\begin{lemma}
\label{lem:2dgapmerge} 
  The operators $\{H_j\}_{j=1}^{n}$ satisfy the gap and merge
  properties Eqs.~(\ref{eq:locgap}, \ref{eq:merge}) with
  $\Delta=\Theta(\gamma/w)$.
\end{lemma}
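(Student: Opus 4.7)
The strategy is to adapt the detectability-lemma argument from the 1D proof of \Lem{lem:1Dmerge} to the 2D strip, incorporating the nonlocal projectors $\Pi_L, \Pi_R$ alongside the strip projectors $\{h_{ij}\}$. The key observation is that the null space of $H_S$ coincides with the ground space of the auxiliary sub-Hamiltonian
$$\hat H_S \EqDef \sum_{j\in S,\,1\le i<w} h_{ij} \;+\; \sum_{i<1,\,j'} h_{ij'} \;+\; \sum_{i\ge w,\,j'} h_{ij'},$$
which is a disjoint sub-sum of original terms of $H_0$, so $\mathrm{gap}(\hat H_S)\ge\gamma$ by the local gap \eqref{eq:locgap2D}. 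Additionally, $\Pi_L$ and $\Pi_R$ act on disjoint qudit regions (columns $\le 1$ and $\ge w$, for $w\ge 3$) and therefore commute; hence $\Pi_L\Pi_R$ is a projector containing the null space of $H_S$. A useful fact is that $\|\tilde H_S\|=O(1)$ where $\tilde H_S \EqDef \sum_{j\in S,\,1\le i<w}h_{ij}$, because these terms decompose into four classes by $(i\bmod 2,\,j\bmod 2)$ of pairwise-disjointly-supported projectors.

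For the gap property, given any unit vector $|\psi\rangle$ orthogonal to the null space of $H_S$, write $|\psi\rangle=|a\rangle+|b\rangle$ with $|a\rangle=\Pi_L\Pi_R|\psi\rangle$; since the null space is contained in $\mathrm{range}(\Pi_L\Pi_R)$, $|a\rangle$ is also orthogonal to it. On $\mathrm{range}(\Pi_L\Pi_R)$ the operators $H_L, H_R$ vanish, so $\hat H_S$ acts as $\tilde H_S$ and its gap transfers: $\langle a|\tilde H_S|a\rangle\ge\gamma\|a\|^2$. Since $\Pi_L, \Pi_R$ commute, $(I-\Pi_L)+(I-\Pi_R)\ge I-\Pi_L\Pi_R$, so $\langle\psi|(I-\Pi_L)+(I-\Pi_R)|\psi\rangle\ge\|b\|^2$. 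The cross-term $2\,\Re\langle a|\tilde H_S|b\rangle$ is controlled by AM--GM using $\|\tilde H_S\|=O(1)$, and the penalty coefficient $|S|\ge 1$ in $|S|(I-\Pi_L)+|S|(I-\Pi_R)$ suffices to absorb it. This yields $\mathrm{gap}(2wH_S)\ge\Omega(\gamma)$, i.e.\ $\mathrm{gap}(H_S)\ge\Omega(\gamma/w)$.

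For the merge property, construct the detectability operator
$$\DL_S \EqDef \Pi_L\Pi_R\prod_{c=1}^{4}\;\prod_{\substack{1\le i<w,\ j\in S\\ (i\bmod 2,\,j\bmod 2)=c}} (I-h_{ij}),$$
whose factors form $O(1)$ layers of pairwise-commuting projectors ($\Pi_L,\Pi_R$ in one layer; strip terms in four layers by $(i,j)\bmod 2$). The underlying sum of projectors $\tilde H_S + (I-\Pi_L)+(I-\Pi_R)$ has gap $\Omega(\gamma)$ by the same argument as for $H_S$, so the detectability-lemma converse yields $\mathrm{gap}(I-\DL_S^\dagger\DL_S)\ge\Omega(\gamma)$. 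Next, establish an absorption identity $\Pi_{AB}(\DL_S^\dagger\DL_S)^q\Pi_{BC}=\Pi_{AB}\Pi_{BC}$ for $0\le q\le|B|/c_0$ (a constant), in the style of Claim 6 of \cite{AAV16}: $\Pi_L,\Pi_R$ commute with $\Pi_{AB},\Pi_{BC}$ and are absorbed trivially; strip projectors at rows $j\in A$ (resp.\ $j\in C$) are absorbed into $\Pi_{AB}$ (resp.\ $\Pi_{BC}$); and projectors at rows in $B$ consume the buffer at a constant rate per round. Finally, the shifted Chebyshev polynomial $T_{\Omega(\gamma),|B|/c_1}(I-\DL_S^\dagger\DL_S)$, together with $\Pi_{AB}\Pi_S\Pi_{BC}=\Pi_S$ and \Lem{lem:chebyop}, gives
$$\|\Pi_{AB}\Pi_{BC}-\Pi_S\| = \|\Pi_{AB}\bigl(T_{\Omega(\gamma),|B|/c_1}(I-\DL_S^\dagger\DL_S)-\Pi_S\bigr)\Pi_{BC}\| \le 2 e^{-|B|\sqrt{\Omega(\gamma)}}\le 2 e^{-|B|\sqrt{\Delta}},$$
the final inequality holding since $\Delta=\Theta(\gamma/w)\le\Omega(\gamma)$.

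The main technical obstacle is the non-commutation between $\Pi_L,\Pi_R$ and the boundary strip terms $h_{1,j}, h_{w-1,j}$ that share qudits in columns $1$ and $w$: the decomposition $|\psi\rangle=|a\rangle+|b\rangle$ is not invariant under $\tilde H_S$, so cross terms must be carefully controlled using $\|\tilde H_S\|=O(1)$ together with AM--GM, with the penalty coefficient $|S|\ge 1$ carrying the argument through. The same issue forces $\Pi_L,\Pi_R$ into their own commuting layer in $\DL_S$. The $1/(2w)$ scaling in the definition of $H_j$ is essential for the correct $w$-dependence in $\Delta=\Theta(\gamma/w)$.
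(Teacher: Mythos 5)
Your merge-property half is essentially the paper's argument (commuting layers of projectors, an absorption identity in the style of Claim~6 of \cite{AAV16}, Theorem~\ref{thm:DLconv}, then the Chebyshev step), although the paper's version is cleaner: it builds $\DL_S$ from \emph{all} original terms $h_{ij}$, $i\in[L-1]$, $j\in S$, so that $\mathrm{gap}(I-\DL_S^\dagger\DL_S)=\Theta(\gamma)$ follows directly from the local-gap assumption, and when the projectors $\Pi_{L,j}\Pi_{R,j}$ do appear they are split into even/odd rows because $\Pi_{L,j}$ and $\Pi_{L,j+1}$ overlap on row $j+1$ and do not commute — they cannot sit in ``one layer'' as you assert.

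The gap-property half, however, has a genuine gap. First, there is no single pair $\Pi_L,\Pi_R$ in $H_S$: the Hamiltonian contains the per-row, mutually non-commuting projectors $(I-\Pi_{L,j})$, $(I-\Pi_{R,j})$, $j\in S$. Since $\Pi_L\leq \Pi_{L,j}$ (writing $\Pi_L$ for the projector onto $\bigcap_{j\in S}\mathrm{range}(\Pi_{L,j})$), one has $I-\Pi_{L,j}\leq I-\Pi_L$, so $\sum_{j\in S}(I-\Pi_{L,j})\leq |S|(I-\Pi_L)$ — the \emph{opposite} of the ``penalty coefficient $|S|$'' you invoke; lower-bounding $\sum_{j\in S}(I-\Pi_{L,j})$ by a multiple of $I-\Pi_L$ is itself a gap statement about a chain of non-commuting coarse projectors and yields at best an $O(1)$ (in fact $\gamma$-dependent) constant, not $|S|$. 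Second, even granting a single $\Pi_L\Pi_R$ with an $O(1)$ penalty, the AM--GM absorption fails quantitatively: with $|a\rangle=\Pi_L\Pi_R|\psi\rangle$, $|b\rangle=(I-\Pi_L\Pi_R)|\psi\rangle$, the boundary-column terms $h_{1,j},h_{w-1,j}$ give a cross term of size $\Theta(1)\,\|a\|\,\|b\|$, and positivity of $\gamma\|a\|^2+c\|b\|^2-\Theta(1)\|a\|\|b\|$ requires $c\gtrsim 1/\gamma$; with $c=O(1)$ the form is negative for $\|b\|$ of order $\gamma\ll\|b\|\ll1$, and with the (unjustified) penalty $|S|$ it still fails for short intervals $|S|=O(1/\gamma)$, which the gap property must cover since it quantifies over \emph{all} intervals, including singletons. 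This is exactly the obstruction the paper's proof is designed to avoid: instead of a quadratic-form estimate, it applies Theorem~\ref{thm:DLconv} to replace $2wH_S$ by $I-(\DL')^\dagger\DL'$, uses the exact absorption identities $h_{ij}\Pi_{L,j}=0$ (for $i<1$) and $h_{ij}\Pi_{R,j}=0$ (for $i\geq w$) to rewrite $\DL'$ in terms of the operator $\DL''$ built from all original $h_{ij}$ with $j\in S$, and then applies Theorem~\ref{thm:DLconv} a second time to conclude $\mathrm{gap}(2wH_S)=\Theta(\gamma)$ with no cross-term bookkeeping. As written, your argument does not establish the gap property, and since your merge step takes its gap input ``by the same argument as for $H_S$,'' the flaw propagates there as well.
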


\begin{figure}[h]
\begin{subfigure}[b]{0.4\textwidth}
\centering

\begin{tikzpicture}[xscale=0.35,yscale=0.5]

\draw [ultra thick, fill=blue!10!white] (-0.1,-0.1) rectangle (16.1,6.1);
\draw [fill=blue!30!white, dotted] (6,0) rectangle (10,6); 
\draw [fill=red!60!white, red!60!white] (0,2.8) rectangle (16,3.2); 
\node at (16.5, 3) {\footnotesize $j$};
\node at (3,3.5) {\footnotesize $\Pi_{L,j}$};
\node at (13,3.5) {\footnotesize $\Pi_{R,j}$};
\draw [dotted] (6,0) rectangle (10,6); 
\draw [ultra thick, blue] (8,-0.5) -- (8,6.5);
\draw [<->] (6,2.2) -- (10,2.2);
\node at (8.5,2.4) {\footnotesize $w$};

\end{tikzpicture}
\caption{}
\end{subfigure}
\hspace{2cm}
\begin{subfigure}[b]{0.4\textwidth}
\centering
\begin{tikzpicture}[xscale=0.35,yscale=0.5]
\draw [ultra thick, fill=blue!10!white] (-0.1,-0.1) rectangle (16.1,6.1);
\draw [fill=blue!30!white, dotted] (6,0) rectangle (10,6); 
\draw [dotted] (6,0) rectangle (10,6); 
\draw [fill=red!60!white, red!60!white] (0,2) rectangle (16,5);
\draw [thick] (0,3) -- (16, 3);
\draw [thick] (0,4) -- (16, 4);
\foreach \j in {0,...,28}
{
\draw (0.5*\j, 2) -- (2+0.5*\j,4);
\draw (0.5*\j, 5) -- (2+0.5*\j,3);
}
\foreach \j in {1,...,3}
{
\draw (0, 2+0.5*\j) -- (2-0.5*\j,4);
\draw (0, 5-0.5*\j) -- (2-0.5*\j,3);
\draw (14+0.5*\j, 2) -- (16,4-0.5*\j);
\draw (14+0.5*\j, 5) -- (16,3+0.5*\j);
}

\draw [ultra thick, blue] (8,-0.5) -- (8,6.5);
\node at (17, 2.5) {\footnotesize $A$};
\node at (17, 3.5) {\footnotesize $B$};
\node at (17, 4.5) {\footnotesize $C$};

\draw (18, 5) to [out=0, in=160] (18.5, 3.5) to [out=200, in=0] (18, 2);
\node at (19.5, 3.5) {$S$};
\end{tikzpicture}
  \caption{}
\end{subfigure}
\caption{\footnotesize (a) The vertical cut (blue line), region of width $w$ (blue), and support of the operator $H_j$ from Eq.~\eqref{eq:truncham} (red).  (b) For an interval $S\subseteq [n]$, the red region depicts the support of $H_S$. 
Given a partition $S=ABC$, the merge property asserts $\Pi_{AB}\Pi_{BC}\approxeq \Pi_S$ \label{fig:Sab} (the shading depicts $AB$ and $BC$)}
\end{figure}
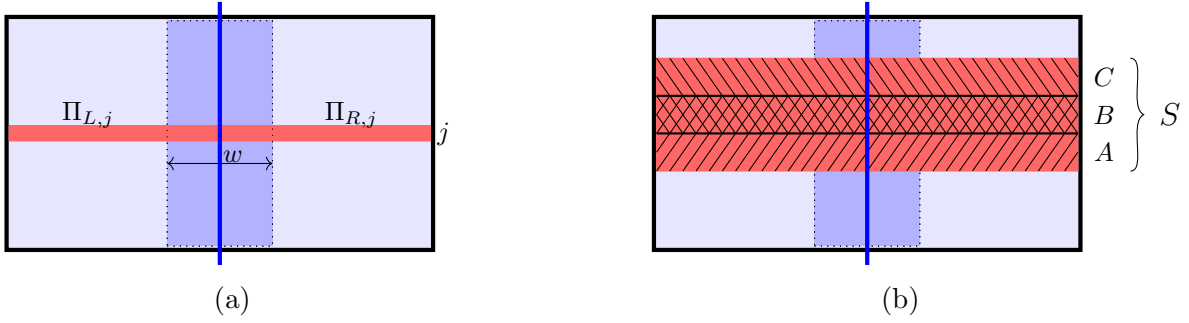

Therefore the operators $\{H_j\}_{j=1}^{n}$ satisfy the requirements
of \Thm{thm:almostlin}. The approximating polynomial from the latter
theorem is a sum of elementary polynomials from the set
$\mathcal{P}(a,b)$ from Definition \ref{def:pab}.  We shall use the
following lemma to bound the Schmidt rank of each of them. A key
feature of the bound given below is that the Schmidt rank is
amortized across the $w$ columns in the sense
that its exponential scaling with degree is $(\text{total
degree})/w$ instead of simply $(\text{total degree})$. This feature (in
a slightly different setting) was also key to the 1D area-law bound
from \cRef{AradKLV13}.
\begin{lemma}[Schmidt rank amortization]
\label{lem:srpab}
$\mathrm{SR}(Q)\leq (16a^4 d^4n)^{\frac{a}{w}+b+wn}$ for all $ Q\in \mathcal{P}(a, b)$.
\end{lemma}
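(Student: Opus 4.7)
The goal is to bound $\mathrm{SR}(Q)$ for $Q=H_{S_1}^{j_1}\cdots H_{S_k}^{j_k}\in\mathcal{P}(a,b)$ across the vertical cut between columns $w/2$ and $w/2+1$. The proof strategy is to expand $Q$ into atomic monomials, bound the Schmidt rank of each, and combine with a dimensional bound for the width-$w$ strip.

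First, I would decompose $H_j=\tfrac{1}{2w}\bigl[\sum_{i=1}^{w-1}h_{ij}+(I-\Pi_{L,j})+(I-\Pi_{R,j})\bigr]$ into $O(w)$ ``atoms'' (the local terms $h_{ij}$, the identity, and the projectors $\Pi_{L,j},\Pi_{R,j}$). Substituting into $Q$ via $H_{S_i}^{j_i}=\sum_{\vec{\ell}\in S_i^{j_i}}H_{\ell_1}\cdots H_{\ell_{j_i}}$ and then expanding each $H_\ell$ atomically expresses $Q$ as a linear combination of atomic monomials of total atomic degree at most $a$. Each atom is either one-sided relative to the cut (i.e.\ $\Pi_{L,\ell}$ and $h_{i\ell}$ with $i<w/2$ are supported in columns $\le w/2$, while $\Pi_{R,\ell}$ and $h_{i\ell}$ with $i>w/2$ are supported in columns $\ge w/2+1$) or \emph{crossing} (the single family $h_{w/2,\ell}$, each with Schmidt rank at most $d^4$).

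For each atomic monomial with $c$ crossing atoms, the one-sided factors can be absorbed into the Schmidt decomposition of the crossing factors using the fact that operators with disjoint one-sided supports commute with the opposite side of a Schmidt term; this yields the per-monomial bound $\mathrm{SR}\le d^{4c}$. Independently, for any element of $\mathcal{P}(a,b)$ the cross-cut entanglement is funneled through the width-$w$ strip (because the $\Pi_{L,\ell}$ and $\Pi_{R,\ell}$ contributions to the ``far'' support are one-sided), which yields the universal dimensional bound $\mathrm{SR}\le d^{O(wn)}$. Combining $\min(d^{4c},d^{O(wn)})$ with a count of atomic monomials grouped by $c$, and tracking the $k\le b$ factor structure, should produce the exponent $\tfrac{a}{w}+b+wn$ with base $16a^4d^4n$.

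The main obstacle is the amortization of the degree from $a$ down to $a/w$ in the exponent. A direct sum-of-Schmidt-ranks bound over atomic monomials has roughly $(wn)^a$ terms, each with $\mathrm{SR}\le d^{4a}$ in the adversarial case, which produces an exponent linear in $a$. The $a/w$ scaling must instead exploit that in each $H_\ell$ only one of the $O(w)$ atoms is a crossing, so a careful binomial accounting by crossing count $c$, combined with switching to the dimensional bound $d^{O(wn)}$ whenever $c$ is large, should collapse the $a$-dependence to $a/w$. The additive $b$ reflects an $O(1)$ boundary overhead per factor $H_{S_i}^{j_i}$ that cannot be absorbed via commutation across factor boundaries. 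Coordinating the three contributions — crossing-count amortization ($a/w$), factor boundaries ($b$), and strip dimension ($wn$) — into a single exponent of the claimed form is the principal technical task.
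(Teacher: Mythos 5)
There is a genuine gap, and it sits exactly where you flagged it: your scheme has no mechanism that produces the $a/w$ amortization, and the two ingredients you hope will supply it do not work. First, the counting problem is not the per-monomial Schmidt rank but the number of atomic monomials: expanding every $H_{S_i}$ into its $\Theta(wn)$ atoms produces on the order of $(wn)^a$ monomials, and subadditivity of Schmidt rank over that sum already forces an exponent linear in $a$ even in the best case (the crossing-free class alone has $\sim(wn)^a$ members of Schmidt rank $1$). Grouping by crossing count $c$ does not help unless you can bound the Schmidt rank of the \emph{sum} of the monomials inside a group, and your proposal offers no tool for that. Second, the ``universal dimensional bound $\mathrm{SR}\le d^{O(wn)}$'' you invoke as a cap is unjustified and is false in general: one-sidedness of $\Pi_{L,j},\Pi_{R,j}$ does not confine the cross-cut entanglement to the strip, because one-sided operators interleaved \emph{between} crossing factors can pump entanglement into the far regions. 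A toy example on four sites with the cut between sites $2$ and $3$ makes this concrete: $S_{23}\,(S_{12}\otimes S_{34})\,S_{23}=S_{13}\otimes S_{24}$ (products of swaps), whose Schmidt rank $d^4$ across the cut exceeds the ``interface'' operator-space dimension $d^2$ even though each crossing factor has rank $d^2$. With high degree $a$, elements of $\mathcal{P}(a,b)$ can likewise entangle the regions far from the strip across the cut, so no cap depending only on $wn$ is available (indeed, if it were, the lemma's exponent $\frac{a}{w}+b+wn$ would be unnecessary).

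The paper's proof works differently, and the difference is essential. It never expands $H_S$ into atoms; instead it attaches a formal variable $Z_i$ to each column $i\in[w]$ and grades $Q(Z)$ by its multidegree $\alpha$, of which there are only $(a+1)^w$ classes. For each class it \emph{chooses the cut}: by pigeonhole some column $r$ has $\alpha_r\le a/w$, and the Schmidt rank of the coefficient $Q_\alpha$ is bounded \emph{across the cut at $r$} via Lemma~\ref{lem:srterm}. That lemma exploits that the non-crossing part of $H_S(Z)$ splits as $A_{\mathrm{left}}+A_{\mathrm{right}}$ with commuting one-sided halves, so $A^p$ has Schmidt rank at most $p+1$ (binomial collapse) and the coefficient of $Z_r^\ell$ in $(H_S(Z))^k$ is a sum of only $(k+1)^{\ell+1}$ structured terms $A^{x_0}BA^{x_1}\cdots BA^{x_\ell}$ --- this is what keeps the count polynomial in $a$ rather than $(wn)^a$. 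The other variables are handled by Fourier interpolation (Claim~\ref{claim:poly}), and only at the very end is the cut moved from $r$ back to the true cut $c$, which is where the $d^{wn}$ factor enters --- as a cut-relocation cost, not as a cap. So the crossing-count intuition you describe is the right heuristic for why few crossings mean low rank, but the amortization comes from optimizing the cut location per multidegree class, not from a fixed-cut accounting; without that (or some substitute that controls sums of exponentially many monomials), the proof does not go through.
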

The proof of \Lem{lem:srpab} is provided in
Appendix~\ref{sec:sr}. We are now in a position to prove our main result:
\begin{theorem}
\label{thm:arealawformal}
  Suppose $\gamma=\Omega(1)$ and $d=O(1)$. The entanglement entropy
  of $|\Omega\rangle$ across the cut is at most $ n^{1+O\left((\log n)^{-1/5}\right)}$.
\end{theorem}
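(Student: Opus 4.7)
The plan is to assemble an approximate ground state projector (AGSP) $K$ of the form $K=P^q$, where $P$ comes from Theorem \ref{thm:almostlin} applied to the boundary-supported operators $\{H_j\}_{j=1}^{n}$ from Eq.~\eqref{eq:truncham}, and then invoke Theorem \ref{thm:agsp}. A width parameter $w$ and an integer power $q$ will be chosen, roughly like $w,q=\exp\bigl(\Theta((\log n)^{3/4})\bigr)$, to balance error against Schmidt rank.

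First, by Lemma \ref{lem:2dgapmerge}, the operators $\{H_j\}_{j=1}^{n}$ satisfy the gap and merge properties with $\Delta=\Theta(\gamma/w)=\Theta(1/w)$. For the target choice of $w$ we have $n\Delta=\omega(1)$, so Theorem \ref{thm:almostlin} provides a Hermitian polynomial $P$ with $P\Pi=\Pi$, approximation error $\|P-\Pi\|\le\exp(-\beta e^{\sqrt{\log n}})$, and a representation as a sum of at most $(2\alpha)^{\beta}$ elements of $\mathcal{P}(\alpha,\beta)$ with
\[
\alpha\le n\,w^{1/4},\qquad \beta\le w^{1/2}\,n^{\,1-\Omega((\log n)^{-1/4})}.
\]

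Next I would set $K\EqDef P^q$. Because $P\Pi=\Pi$ and $P=P^\dagger$, one checks $(P-\Pi)\Pi=\Pi(P-\Pi)=0$, hence $P^q=\Pi+(P-\Pi)^q$, giving $\|K-\Pi\|\le\|P-\Pi\|^q\le\exp(-q\beta e^{\sqrt{\log n}})$. The key to the Schmidt-rank bound is to expand $P^q$ not as a $q$-fold product of arbitrary operators, but rather as a sum of at most $(2\alpha)^{q\beta}$ products, each of which belongs to the family $\mathcal{P}(q\alpha,q\beta)$ because the $\mathcal{P}$-families multiply additively in both parameters. Applying Lemma \ref{lem:srpab} to each term,
\[
\mathrm{SR}(K)\;\le\;(2\alpha)^{q\beta}\,\bigl(16(q\alpha)^4 d^4 n\bigr)^{q\alpha/w+q\beta+wn}.
\]
The crucial amortization is that the $wn$ summand in the Schmidt-rank exponent is independent of $q$; this is what makes it possible to drive down the error by raising to a large power without paying proportionally in entanglement.

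Taking logarithms, and using $\log(q\alpha n)=O(\log n)$ for the choices of $q,w$ below,
\[
\log\mathrm{SR}(K)\;\lesssim\;\bigl(q\alpha/w+q\beta+wn\bigr)\log n.
\]
To apply Theorem \ref{thm:agsp} it suffices that $\log\mathrm{SR}(K)\le q\beta e^{\sqrt{\log n}}-\log 2$. Dividing by $q\beta$, I would ask for $\alpha/(w\beta)$, $1$, and $wn/(q\beta)$ each bounded by $e^{\sqrt{\log n}}/\log n$. Substituting the expressions for $\alpha,\beta,\Delta$ shows that the first inequality forces $w^{5/4}\gtrsim n^{\Omega((\log n)^{-1/4})}/e^{\sqrt{\log n}}$, i.e.\ $w=\exp\bigl(\Theta((\log n)^{3/4})\bigr)$; the third inequality is then satisfied by choosing $q\sim wn\log n/(\beta e^{\sqrt{\log n}})=\exp\bigl(\Theta((\log n)^{3/4})\bigr)$ as well. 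With those choices $\log\mathrm{SR}(K)=O(wn\log n)=n\cdot\exp\bigl(O((\log n)^{3/4})\bigr)$, which Theorem \ref{thm:agsp} converts into an entanglement entropy bound of $n^{1+O((\log n)^{-1/5})}$ (with room to spare in the exponent).

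The main obstacle is the delicate three-way balancing between $w$, $q$, and the degree/error tradeoff of $P$: raising $w$ improves the amortization term $wn$ relative to $\alpha/w$, but degrades $\Delta$ and hence $\beta$, which in turn forces $q$ to grow and risks blowing up the $q\alpha/w$ and $q\beta$ contributions. What makes the balance feasible is the doubly-exponential-looking factor $e^{\sqrt{\log n}}$ in the error exponent of Theorem \ref{thm:almostlin}, which provides precisely the slack needed to absorb the $\log n$ overheads from Lemma \ref{lem:srpab} while keeping the final Schmidt rank $n^{1+o(1)}$. Everything else (writing $P^q-\Pi=(P-\Pi)^q$, subadditivity of Schmidt rank under sums, multiplicativity of $\mathcal{P}(\alpha,\beta)$ under products) is routine bookkeeping.
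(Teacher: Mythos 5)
Your proposal is correct and follows essentially the same route as the paper: build $K=P^{q}$ from Theorem \ref{thm:almostlin} applied to the operators of Eq.~\eqref{eq:truncham} (with $\Delta=\Theta(\gamma/w)$ from Lemma \ref{lem:2dgapmerge}), bound $\mathrm{SR}(K)$ by expanding into $\mathcal{P}(q\alpha,q\beta)$ via Lemma \ref{lem:srpab}, and conclude with Theorem \ref{thm:agsp}. The only difference is the parameter bookkeeping: the paper takes $q=w^{2}$ and $w=\Theta\bigl(n^{(\log n)^{-1/5}}\bigr)$, whereas your choice $w,q=\exp\bigl(\Theta((\log n)^{3/4})\bigr)$ balances the same constraints and in fact yields the slightly sharper bound $n^{1+O((\log n)^{-1/4})}$, which implies the stated one.
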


\begin{proof}
  As a first step we apply \Thm{thm:almostlin} with the system of
  operators $\{H_j\}_{j=1}^{n}$ defined in \Eq{eq:truncham}. Lemma
  \ref{lem:2dgapmerge} states that we may set
  $\Delta=\Theta(\gamma/w)=\Theta(1/w)$. Let $P$ be the polynomial
  provided by the theorem. Our AGSP is defined as $K\EqDef P^{w^2}$,
  which is Hermitian (since $P$ is) and satisfies
  $K|\Omega\rangle=|\Omega\rangle$, since $\sum_{1\leq j\leq n} H_j$
  has unique ground state $|\Omega\rangle$. The error bound from the
  theorem gives
  \begin{align}
    \|K-|\Omega\rangle\langle \Omega\| 
      \leq e^{-\beta w^2e^{\sqrt{\log(n)}}}.
  \label{eq:shrink}
  \end{align}
  The theorem also implies that $K$ can be expressed as a linear
  combination of $(2\alpha)^{w^2\beta}$ polynomials from
  $\mathcal{P}(w^2\alpha, w^2\beta)$, where 
  \begin{align*}
    \alpha=O(nw^{1/4}) \qquad \text{and} \qquad \beta=w^{-1/2}n^{1-O\left((\log n)^{-1/4}\right)}.
  \end{align*}
  Thus, applying Lemma \ref{lem:srpab}
  \begin{align}
    \mathrm{SR}(K)\leq (2\alpha)^{w^2\beta} \max\limits_{Q\in \mathcal{P}(w^2\alpha, w^2\beta)} \mathrm{SR}(Q)\leq (16\alpha^4w^8 n d^4)^{w\alpha+2w^2\beta+wn}.
  \label{eq:srank}
  \end{align}
  As a first simplification, let us focus on the term in parentheses
  above. Our choice of $w$ below (\Eq{eq:wchoice}) satisfies $w\leq
  O(n)$ and therefore $16\alpha^4w^8 n d^4=O(\mathrm{poly}(n))$,
  where we used $d=O(1)$ and $\gamma=\Omega(1)$. Using this in
  \Eq{eq:srank} and combining with \Eq{eq:shrink} gives
  \begin{align}
    \|K-|\Omega\rangle\langle \Omega\|\cdot \mathrm{SR}(K)
      \leq \exp{\big(-\beta w^2e^{\sqrt{\log n}}
        +\big(w\alpha+\beta w^2+ wn\big)\cdot O(\log n)\big)}.
  \label{eq:ssprod1}
  \end{align}
  Taking $n$ large enough that $e^{\sqrt{\log n}}-O(\log n)\geq 1$ ,
  and substituting the values $\alpha,\beta$ gives
  \begin{align}
    \|K-|\Omega\rangle\langle \Omega\|\cdot \mathrm{SR}(K)
      &\leq \exp{\big(-\beta w^2+\big(w\alpha+wn\big)
        \cdot O(\log n)\big)}\\
    &\leq  \exp{\big(-n^{1-O\left((\log n)^{-1/4}\right)}w^{3/2}
      + O(w^{5/4}n\log n)\big)}
  \label{eq:ssprod2}
  \end{align}
  Now let us choose 
  \begin{align}
    w=\Theta(n^{(\log n)^{-1/5}}) 
  \label{eq:wchoice}
  \end{align}
  (here $1/5$ is somewhat arbitrary). With this choice, the RHS of
  \Eq{eq:ssprod2} can be made less than $1/2$ for $n$ sufficiently
  large. Applying \Thm{thm:agsp} we get that the entanglement
  entropy of $|\Omega\rangle$ across the cut is at most
  $O(1)\cdot\log(\mathrm{SR}(K))\leq n^{1+O((\log n)^{-1/5})}$.
\end{proof}


\section{Acknowledgments} 
We thank Dorit Aharonov, Fernando Brand\~{a}o, Lukasz Fidkowski, Aram Harrow, Tomotaka Kuwahara, Zeph Landau, Mehdi Soleimanifar and Umesh Vazirani for insightful discussions. AA acknowledges support through the NSF award QCIS-FF: Quantum Computing \& Information Science Faculty Fellow at Harvard University (NSF 2013303). Part of the work was done when AA was affiliated with the Simons Institute for the Theory of Computing and the Challenge Institute for Quantum Computation, where research was supported by the NSF QLCI program through grant number OMA-2016245. Part of this work was also done while AA participated in the Simons Institute for the Theory of Computing program on \emph{The Quantum Wave in Computing}. DG acknowledges the support of the Natural Sciences and Engineering Research Council of Canada, the Canadian Institute for Advanced Research, and IBM Research. IA acknowledges the support of the Israel Science Foundation (ISF) under the Individual Research Grant No.~1778/17 and the Israel Science Foundation (ISF), grant No.~2074/19.

\bibliographystyle{abbrv} 
{\footnotesize
\bibliography{references1} }


\appendix

\section{Gap and merge properties in 2D}
\label{sec:2Dgap}
Here we prove Lemma \ref{lem:2dgapmerge}, showing that the operators
$\{H_j\}_{j=1}^{n}$ satisfy the gap and merge properties
Eqs.~(\ref{eq:locgap}, \ref{eq:merge}) with
$\Delta=\Theta(\gamma/w)$. Crucial to our analysis is the following
theorem which encapsulates the detectability lemma
\cite{AharonovALV08, AAV16} and its converse \cite{Gao15}.
\begin{theorem}[Corollary 1 and Lemma 4 in \cite{AAV16}, \cite{Gao15}]
\label{thm:DLconv}
  Let $Q_1, Q_2, \ldots ,Q_m$ be a collection of Hermitian
  projectors such that each projector commutes with all but at most
  $g$ others. Define $\DL:= \prod_{i=1}^m\br{I-Q_i}$, where we fix
  some (arbitrary) ordering of terms in the product. Then
  \begin{align*}
    4\cdot\mathrm{gap}\br{\sum_{i=1}^m Q_i}
      \geq \mathrm{gap}\br{I-\DL^{\dagger}\DL}
      \geq \frac{1}{g^2+1}\cdot\mathrm{gap}\br{\sum_{i=1}^m Q_i}.
  \end{align*}
\end{theorem}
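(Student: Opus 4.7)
The theorem bundles two inequalities of opposite direction on $\mathrm{gap}(I - \DL^\dagger \DL)$ versus $\mathrm{gap}(\sum_i Q_i)$, so I would prove each direction separately, after a short preliminary matching their kernels. The kernel match rests on the telescoping identity
\begin{equation*}
  \|\psi\|^2 - \|\DL\,\psi\|^2 \;=\; \sum_{k=1}^m \|Q_k A_{k-1} \cdots A_1 \psi\|^2, \qquad A_i \EqDef I - Q_i,
\end{equation*}
which uses only $A_i^\dagger A_i = I - Q_i$ (from $Q_i^2 = Q_i$). Its left side vanishes iff every summand does, and a forward induction then forces $Q_k \psi = 0$ for all $k$. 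Hence $\ker(I - \DL^\dagger \DL) = \bigcap_i \ker Q_i = \ker(\sum_i Q_i)$, so both gaps are realized as Rayleigh quotients on the same subspace.

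For the lower bound (the detectability lemma), I would proper-color the non-commutation graph of $\{Q_i\}$, which has maximum degree $g$ and hence admits a $(g+1)$-coloring. This partitions the projectors into $c \le g+1$ commuting layers $L_1, \dots, L_c$; within each layer $\prod_{i \in L_j}(I - Q_i) = I - P_j$ is itself a projector, with $P_j = \bigvee_{i \in L_j} Q_i$. Re-running the telescope with the layer factors gives $\|\psi\|^2 - \|\DL\,\psi\|^2 \geq \sum_j \|P_j \psi_{j-1}\|^2$ where $\psi_{j-1} := (I-P_{j-1})\cdots(I-P_1)\psi$. Using $P_j \geq Q_i$ for $i \in L_j$ together with averaging over the $c$ layers yields $\sum_j \|P_j\psi\|^2 \gtrsim \gamma/c$; the passage from $\psi$ to $\psi_{j-1}$ is a bootstrap — the discrepancy telescopes into a sum of earlier $\|P_k\psi_{k-1}\|$ terms which, after Cauchy--Schwarz and rearrangement, can be absorbed back into the quantity being estimated, leaving the $\gamma/(g+1)^2$ bound (equivalently, the stated $\gamma/(g^2+1)$ up to the exact constant). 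I expect this bootstrap to be the main obstacle, since it is exactly where the quadratic dependence on $g$ enters.

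For the upper bound, I would invoke the Courant--Fischer principle: let $|\phi\rangle$ be a unit eigenvector of $\sum_i Q_i$ with eigenvalue $\gamma = \mathrm{gap}(\sum_i Q_i)$; since $\phi \perp \ker(\sum Q_i) = \ker(I - \DL^\dagger\DL)$, the telescope identity gives
\begin{equation*}
  \mathrm{gap}(I - \DL^\dagger\DL) \;\leq\; \langle\phi|(I - \DL^\dagger\DL)|\phi\rangle \;=\; \sum_{k=1}^m \|Q_k A_{k-1}\cdots A_1 \phi\|^2.
\end{equation*}
Splitting each term via $Q_k A_{k-1}\cdots A_1 \phi = Q_k \phi + Q_k \bigl((A_{k-1}\cdots A_1) - I\bigr)\phi$, the first piece contributes exactly $\sum_k \|Q_k \phi\|^2 = \langle\phi|\sum_k Q_k|\phi\rangle = \gamma$, using that the $Q_k$ are projectors. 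The correction $(A_{k-1}\cdots A_1 - I)\phi$ telescopes further into a sum of terms each carrying an extra $Q_j$-factor, and a careful Cauchy--Schwarz --- exploiting $\|A_\ell\| \leq 1$ and regrouping the double sum by matching $(k,j)$-indices --- absorbs the correction into another copy of $\sum_k \|Q_k\phi\|^2 = \gamma$, producing the claimed factor of $4$. The main technical step is controlling this absorption cleanly, since a naive triangle-inequality bound would introduce a spurious factor of $m$ and destroy the inequality.
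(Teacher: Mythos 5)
Note first that the paper does not prove this theorem at all: it is imported from \cite{AAV16} (the detectability lemma) and \cite{Gao15} (its converse), so there is no in-paper argument to compare against and your sketch has to stand on its own. Your preliminary step is correct and standard: the telescoping identity $\|\psi\|^2-\|\DL\psi\|^2=\sum_k\|Q_kA_{k-1}\cdots A_1\psi\|^2$ identifies $\ker(I-\DL^{\dagger}\DL)$ with $\bigcap_k\ker Q_k=\ker(\sum_k Q_k)$, and reducing both inequalities to Rayleigh-quotient bounds on the common orthocomplement is the right frame. The lower bound, however, has genuine gaps. (i) The theorem fixes an \emph{arbitrary} ordering of the factors of $\DL$, while your coloring argument implicitly assumes the product is ordered layer by layer; non-commuting factors cannot be regrouped, so your argument at best covers layer-respecting orderings. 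The proof in \cite{AAV16} handles arbitrary orderings by a different mechanism: for each $i$ one commutes $Q_i$ through the product until it meets $(I-Q_i)$, paying only for the at most $g$ non-commuting neighbours, and a Cauchy--Schwarz over those $\le g$ obstructing terms is where the $g^2$ enters --- not a proper coloring into $g+1$ layers. (ii) The step ``$P_j\geq Q_i$ for $i\in L_j$ together with averaging over the $c$ layers yields $\sum_j\|P_j\psi\|^2\gtrsim\gamma/c$'' does not follow from what you state: averaging within a layer costs a factor $1/|L_j|$, and $|L_j|$ scales with the total number of terms $m$, not with $g$, so this route gives a bound of order $\gamma c/m$ --- precisely the system-size dependence the detectability lemma exists to avoid. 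The subsequent ``bootstrap absorption'' that is supposed to recover $\gamma/(g+1)^2$ is asserted rather than argued, and it is the entire content of the lemma.

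For the converse (factor-$4$) direction, reducing to $\langle\phi|(I-\DL^{\dagger}\DL)|\phi\rangle\le 4\langle\phi|H|\phi\rangle$ on the orthocomplement is fine (Gao's union bound gives this for every state, hence an operator inequality suffices), but the absorption you describe is exactly the hard step and the mechanism you indicate does not work. Writing $E_{k}=A_k\cdots A_1$, the correction expands as $Q_k(E_{k-1}-I)\phi=-\sum_{j<k}Q_kA_{k-1}\cdots A_{j+1}Q_jE_{j-1}\phi$; bounding it by $\sum_{j<k}\|Q_jE_{j-1}\phi\|$ and then squaring reintroduces a factor of order $m$ no matter how the double sum is regrouped, because the inner sums are not termwise small --- this is the same factor-$m$ failure you flag for the naive triangle inequality, and pairing $(k,j)$ indices with $\|A_\ell\|\le 1$ does not remove it. Gao's factor-$4$ bound requires a genuinely different and more delicate argument; ``absorbs the correction into another copy of $\gamma$'' is at present a claim, not a proof. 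So both directions are missing their key idea, and the honest fix is to reproduce (or cite, as the paper does) the arguments of \cite{AAV16} and \cite{Gao15}.
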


\vspace{0.1in}

\noindent{\bf Merge property:}. The proof is very similar to that of
\Lem{lem:1Dmerge}.  Fix an interval $S\subseteq [n]$ and let $\Pi_S$
be the projector onto the ground space of $H_S$.The projectors
$\{h_{ij}\}_{i\in [L-1], j\in S}$ can be partitioned into four
subsets $\cG_1, \cG_2, \cG_3, \cG_4$ such that projectors within
each subset mutually commute. Let us define $\DL_S\EqDef\prod_{k\in
[4]}\left(\prod_{h_{ij}\in \cG_k}(I - h_{i,j})\right)$. Fix a
partition $S=ABC$ (Figure \ref{fig:Sab}(b)). Analogous to
\Eq{eq:absorb1D}, we have
\begin{align*}
  \Pi_{AB} (\DL_S^{\dagger} \DL_S)^q\Pi_{BC}
    =\Pi_{AB} \Pi_{BC} \qquad \text{for all integers}
      \qquad 0\leq q\leq \frac{|B|}{8},
\end{align*}
Theorem \ref{thm:DLconv} ensures that
$\mathrm{gap}(I-\DL^{\dagger}\DL)= \Theta(\gamma)$. Thus,
\begin{align*}
  \|\Pi_{AB}\Pi_{BC}-\Pi_S\|
    =\|\Pi_{AB}\big(T_{\Theta(\gamma), \frac{|B|}{32}}
      (I-\DL_S^{\dagger}\DL_S)-\Pi_S\big) \Pi_{BC}\|
      \leq 2e^{-|B|\Theta(\sqrt{\gamma})}.
\end{align*}

\noindent{\bf Gap property:} Next we show
$\mathrm{gap}(H_S)=\Theta(\gamma/w)$, where
\begin{align*}
  H_S= \sum_{j\in S} H_j = \frac{1}{2w}
    \bigg(\sum_{1\leq i <w, j\in S}h_{ij} 
      + \sum_{j\in S}(I-\Pi_{L,j})+\sum_{j\in S}(I-\Pi_{R,j})\bigg).
\end{align*}
To this end, define
\begin{align}
\label{eq:dlprimedef}
  \DL'\EqDef \bigg(\prod_{j\in S, j=\mathrm{even}}\Pi_{L,j}\Pi_{R,j}\bigg)\bigg(\prod_{1\leq i<w, j\in S}(I-h_{ij})\bigg)\bigg(\prod_{j\in S, j=\mathrm{odd}}\Pi_{L,j}\Pi_{R,j}\bigg).
\end{align}
The nullspace $\Pi_S$ of $H_S$ is the nullspace of
$I-(\DL')^{\dagger}\DL'$. Using Theorem \ref{thm:DLconv}, we find that
$\mathrm{gap}(2wH_S)=
\Theta\br{\mathrm{gap}\br{I-(\DL')^{\dagger}\DL'}}$. To establish
the gap property, it suffices to show that
$\mathrm{gap}\br{I-(\DL')^{\dagger}\DL'}=\Theta(\gamma)$. To show this we will use the following identities, which are a consequence of $h_{ij}\Pi_{L,j}=0$
($h_{ij}\Pi_{R,j}=0$) for $i<1$ ($i\geq w$):
\begin{align*}
  \prod_{j\in S, j=\mathrm{even}}\Pi_{L,j}\Pi_{R,j}
    &=\prod_{j\in S, j=\mathrm{even}}\Pi_{L,j}\Pi_{R,j}
      \bigg(\prod_{i<1\text{ or }i
   \geq w, j\in S, j=\mathrm{even}}(I-h_{ij})\bigg),\\
    \prod_{j\in S, j=\mathrm{odd}}\Pi_{L,j}\Pi_{R,j}
  &=\bigg(\prod_{i<1\text{ or }i
   \geq w, j\in S, j=\mathrm{odd}}(I-h_{ij})\bigg)
   \prod_{j\in S, j=\mathrm{odd}}\Pi_{L,j}\Pi_{R,j}.
\end{align*}
Substituting in Eq.~(\ref{eq:dlprimedef}), we find
\begin{align*}
 \DL' &= \big(\prod_{j\in S, j=\mathrm{even}}\Pi_{L,j}\Pi_{R,j}\big)
   \DL''\big(\prod_{j\in S, j=\mathrm{odd}}
   \Pi_{L,j}\Pi_{R,j}\big),\hspace{0.5cm}\mathrm{where}\\ \DL'' &\EqDef
   \bigg(\prod_{i<1\text{ or }i\geq w, j\in S, j=\mathrm{even}}
   (I-h_{ij})\bigg)\bigg(\prod_{1\leq i<w, j\in S}(I-h_{ij})\bigg)
   \bigg(\prod_{i<1\text{ or }i\geq w, j\in S,
   j=\mathrm{odd}}(I-h_{ij})\bigg).
\end{align*}
Now $\DL''$ is a product of all the projectors $I-h_{ij}$ with $j\in S$. Applying Theorem \ref{thm:DLconv} we see that
$\mathrm{gap}\br{I-(\DL'')^{\dagger}\DL''} =
\Theta\br{\mathrm{gap}(\sum_{i, j\in S} h_{ij})}=\Theta(\gamma)$.
Furthermore, $I-(\DL')^{\dagger}\DL'$ and $I-(\DL'')^{\dagger}\DL''$
have the same nullspace $\Pi_S$. Thus, 
$\mathrm{gap}\br{I-(\DL')^{\dagger}\DL'}\geq
\mathrm{gap}\br{I-(\DL'')^{\dagger}\DL''}=\Theta(\gamma).$ This
completes the proof.

\section{Schmidt rank amortization}
\label{sec:sr}

Here we prove \Lem{lem:srpab}. Let $Q\in \mathcal{P}(a,b)$ be
given. By definition, $Q=(H_{S_1})^{j_1}(H_{S_2})^{j_2}\ldots
(H_{S_k})^{j_k}$, where $k\leq b$ and $\sum_{p=1}^{k} j_p\leq a$.
Without loss of generality we assume $k=b$ and $\sum_{p=1}^{k}
j_p=a$ (the claimed upper bound in Lemma \ref{lem:srpab} is an
increasing function of $a,b$).

Following \cRef{AradKLV13}, we introduce complex variables $Z\EqDef
\{Z_i\}_{i\in [w]}\in \mathbb{C}^{w}$ and replace $h_{ij}\leftarrow
Z_{i} h_{ij}$ for all $i\in [w]$. That is, for each $j\in [n]$, we
define
\begin{align*}
  H_{j}(Z)\EqDef \frac{1}{2w}\big(I-\Pi_{L,j}+I-\Pi_{R,j}
    + \sum_{i\in [w]}h_{ij}Z_i \big).
\end{align*}
and for $S\subseteq [n]$ let $H_S(Z)=\sum_{j\in S}H_j(Z)$.
Similarly, define
\begin{align}
  Q(Z)\EqDef (H_{S_1}(Z))^{j_1}\cdot (H_{S_2}(Z))^{j_2}\cdot 
    \ldots \cdot (H_{S_b}(Z))^{j_b}.
\label{eq:qofz}
\end{align}
We view $Q(Z)$ as a multivariate polynomial in the components of
$Z\in \mathbb{C}^{w}$ with operator-valued coefficients. We are
interested in the entanglement of $Q(Z)$ for $Z=(1,1,\ldots, 1)$
across the given vertical cut $(c,c+1)\times [n+1]$. Below we use
the notation $\mathrm{SR}_r(M)$ to denote the Schmidt rank of an
operator $M$ across a vertical cut $(r,r+1)\times [n+1]$, omitting
the subscript if $r=c$, i.e., $\mathrm{SR}(M) \EqDef
\mathrm{SR}_c(M)$. We prove the following generalization of
\Lem{lem:srpab}.
\begin{lemma}[Generalization of Lemma \ref{lem:srpab}]
\label{lem:srpabz}
  For each $Z\in \mathbb{C}^w$  we have
  \begin{align}
    \mathrm{SR}(Q(Z))\leq (16a^4 d^4n)^{\frac{a}{w}+b+wn}.
  \label{eq:qzbound}
  \end{align}
\end{lemma}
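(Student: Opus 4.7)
The plan is to expand $Q(Z)$ as a polynomial in $Z_1, \ldots, Z_w$ and bound the Schmidt rank of each operator-valued coefficient. By subadditivity of Schmidt rank under scalar-weighted sums, $\mathrm{SR}(Q(Z)) = \mathrm{SR}\br{\sum_e Q_e Z^e} \leq \sum_e \mathrm{SR}(Q_e)$ uniformly in $Z \in \mathbb{C}^w$, so it suffices to bound $\sum_e \mathrm{SR}(Q_e)$. Each $H_j(Z)$ splits as a sum of four kinds of elementary terms: a scalar multiple of $I$, of $\Pi_{L,j}$, of $\Pi_{R,j}$, and of $h_{ij}Z_i$ for each column $i \in [w]$. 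Expanding $Q(Z) = \prod_{p=1}^{b}\br{\sum_{j\in S_p} H_j(Z)}^{j_p}$ yields a sum of ``paths''---length-$a$ sequences specifying, for each factor, a row $j$ (in the appropriate block $S_p$) and an elementary type---with each path contributing an operator product times a monomial $Z^e$, where $e_i$ counts the plaquette choices in column $i$.

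Next I would analyze the Schmidt rank of the operator product along a single path. With respect to the vertical cut at column $c = w/2$: plaquettes $h_{ij}$ with $i \neq c$ sit entirely on one side of the cut and contribute Schmidt rank $1$; the projectors $\Pi_{L,j}$ and $\Pi_{R,j}$ are supported only to the left (respectively right) of the band and also contribute Schmidt rank $1$; and plaquettes $h_{c,j}$ straddle the cut with Schmidt rank at most $d^4$. Therefore the product along a path has Schmidt rank at most $d^{4e_c}$. A useful absolute cap is $\mathrm{SR} \leq d^{O(wn)}$, since the only qudits that couple the two sides of the cut sit in the band, of which there are $O(wn)$ on each side.

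The crux is then combinatorial. One groups paths by their multi-index $e$, writes the number of paths producing a given $e$ as a multinomial coefficient times the number of row assignments (at most $n$ per factor, constrained to the appropriate block $S_p$) times the number of type choices (at most $w+3$ per factor), and evaluates in closed form using a generating-function identity such as $\sum_e \binom{a}{e_0, e_1, \ldots, e_w} \prod_i y_i^{e_i} = (y_0 + \cdots + y_w)^a$. The amortization factor $a/w$ in the final exponent arises because only one of the $w$ plaquette columns per factor incurs the $d^4$ Schmidt-rank cost: the per-factor generating-function weight is $(w + 2) + d^4$ rather than $(w+3)d^4$, and for the paths whose $e_c$ is large enough that the naive $d^{4e_c}$ estimate exceeds $d^{O(wn)}$, one replaces it by the absolute cap. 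The $b$-fold factor is tied to the block structure of $Q$: the distinct ``type transitions'' between consecutive blocks $(H_{S_p})^{j_p}$ and $(H_{S_{p+1}})^{j_{p+1}}$ contribute at most a $d^{O(n)}$ boundary factor each, and there are $b$ such boundaries.

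The main obstacle will be this combinatorial bookkeeping: recovering the specific exponent $\frac{a}{w}+b+wn$ and base $16 a^4 d^4 n$ in \eqref{eq:qzbound} requires a case distinction between paths with small and large $e_c$ (to extract the advertised amortization rather than an $a$-scaling), careful tracking of how the $|S_p| \leq n$ row-choice freedom within each of the $b$ blocks combines with the multinomial path counts, and verifying that the additional polynomial prefactors from Schmidt-rank subadditivity absorb into the $a^4 d^4 n$ base. The uniformity in $Z \in \mathbb{C}^w$ is automatic from the subadditivity bound $\mathrm{SR}(\sum_e Q_e Z^e) \leq \sum_e \mathrm{SR}(Q_e)$, since the Schmidt rank is unaffected by the scalar factor $Z^e$.
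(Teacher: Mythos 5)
Your per-path expansion and the bound $d^{4e_c}$ per path are fine, but the argument hinges on an ``absolute cap'' $\mathrm{SR}\leq d^{O(wn)}$ for paths with large $e_c$, and this step is a genuine gap. The justification offered (``the only qudits that couple the two sides of the cut sit in the band'') does not bound the Schmidt rank of a \emph{product}: each straddling plaquette $h_{c,j}$ can multiply the Schmidt rank by up to $d^4$, and the interleaved factors ($\Pi_{L,j}$, $\Pi_{R,j}$, band plaquettes) have large supports, so the rank can keep accumulating far beyond $d^{O(wn)}$ -- the only a priori ceiling is the trivial one, $d^{\Theta(Ln)}$. The paper's $d^{wn}$ factor arises from a different mechanism: for each monomial $Q_\alpha$ one \emph{chooses a new cut}, at the column $r$ minimizing the $Z_r$-degree (so $\alpha_r\leq a/w$ by pigeonhole), bounds $\mathrm{SR}_r(Q_\alpha)$ there, and only then pays $d^{wn}$ once to move the cut back to column $c$. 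That per-monomial choice of cut is the amortization device your sketch is missing; without it, the monomials with $e_c$ close to $a$ have no valid bound.

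Even if a cap were available, the per-path bookkeeping cannot reach the claimed exponent $\frac{a}{w}+b+wn$: the number of paths is of order $\big((w+3)n\big)^{a}$, so summing per-path Schmidt ranks yields a bound exponential in $a$ (with a polynomial base), which is strictly weaker than \eqref{eq:qzbound} whenever $a\gg w(b+wn)$. The paper avoids path counting altogether: across the favorable cut $r$ it writes $H_S(Z)=A+BZ_r$ with $A=A_{\mathrm{left}}+A_{\mathrm{right}}$ commuting across that cut, so that $\mathrm{SR}_r(A^p)\leq p+1$ (Lemma \ref{lem:srterm}), handles the coefficient extraction in $Z_r$ directly, and eliminates the remaining variables by polynomial interpolation (Claim \ref{claim:poly}) at a cost of only $(a+1)^{w-1}$ evaluations. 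Your suggested $d^{O(n)}$ ``block-boundary'' factors have no counterpart in a correct argument; the $+b$ in the exponent instead comes from the $(\ell+1)$ exponent in Lemma \ref{lem:srterm} applied once per block together with the $(2a)^b$ count of compositions $\ell_1+\dots+\ell_b=M$.
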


To prove this lemma, we use the following simple fact about
polynomials. For a polynomial $g(x)=\sum_{j=0}^{p} c_j x^p$ we will
use the notation $[x^\ell] g(x)\EqDef c_\ell$ to denote the
coefficient of $x^\ell$. This notation extends to multivariate
polynomials, e.g., $[x^2] (2x^2y-x^2)=2y-1$, or $[xy^2] (x+y)^3 =3$.
\begin{claim}
  Let $g(x)$ be a degree-$p$ polynomial. There exist
  $x_0,x_1,\ldots, x_{p}\in \mathbb{C}$ such that each coefficient
  $[x^c]g(x)$ can be expressed as a linear combination of
  $g(x_0),\ldots, g(x_{p})$ with complex coefficients.
  \label{claim:poly}
\end{claim}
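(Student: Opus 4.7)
The plan is to prove this via a standard polynomial interpolation / Vandermonde argument. I will choose any $p+1$ distinct complex numbers $x_0, x_1, \ldots, x_p$ (concretely, say $x_i = i$) and show that the map sending the coefficient vector of $g$ to the evaluation vector $(g(x_0), \ldots, g(x_p))$ is a linear isomorphism on $\mathbb{C}^{p+1}$; inverting it expresses each coefficient $[x^c]g(x)$ as a $\mathbb{C}$-linear combination of the values $g(x_0), \ldots, g(x_p)$.

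More concretely, write $g(x) = \sum_{j=0}^{p} c_j x^j$ with $c_j = [x^j]g(x)$. The evaluation relations $g(x_i) = \sum_{j=0}^{p} c_j x_i^{j}$ for $i=0,1,\ldots,p$ can be assembled into the matrix equation
\begin{equation*}
\begin{pmatrix} g(x_0) \\ g(x_1) \\ \vdots \\ g(x_p) \end{pmatrix}
= V \begin{pmatrix} c_0 \\ c_1 \\ \vdots \\ c_p \end{pmatrix},
\qquad V_{ij} = x_i^{j}.
\end{equation*}
The matrix $V$ is the Vandermonde matrix associated to $x_0,\ldots,x_p$, whose determinant $\prod_{0 \leq i < j \leq p}(x_j - x_i)$ is nonzero because the $x_i$ are distinct. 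Hence $V$ is invertible over $\mathbb{C}$, and letting $W = V^{-1}$ we obtain
\begin{equation*}
c_c = [x^c]g(x) = \sum_{i=0}^{p} W_{c i}\, g(x_i),
\end{equation*}
which is the required $\mathbb{C}$-linear combination.

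There is essentially no obstacle here: the statement is a restatement of the fact that Lagrange interpolation recovers a polynomial of degree at most $p$ from its values at $p+1$ distinct points. One could equivalently give the explicit formula by writing $g(x) = \sum_{i=0}^{p} g(x_i) L_i(x)$ with $L_i(x) = \prod_{j \neq i}\frac{x - x_j}{x_i - x_j}$ and then reading off $[x^c]$ of both sides, so that $W_{ci} = [x^c]L_i(x)$. Either route makes the claim immediate.
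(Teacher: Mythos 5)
Your argument is correct: evaluation at any $p+1$ distinct points is an invertible linear map on the coefficient space of degree-$\le p$ polynomials, so inverting the Vandermonde matrix (or reading off coefficients of the Lagrange basis) expresses each $[x^c]g(x)$ as a $\mathbb{C}$-linear combination of the values, which is exactly what the claim asks for. The paper takes a slightly different, more concrete route: it chooses the specific points $x_j=e^{2\pi i j/(p+1)}$, the $(p+1)$-st roots of unity, for which the Vandermonde matrix is the discrete Fourier matrix and its inverse is explicit, giving the closed-form identity $[x^c]g(x)=(p+1)^{-1}\sum_{j=0}^{p}e^{-2\pi i jc/(p+1)}g(x_j)$ in one line. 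Your version is more general (any distinct nodes work) but leaves the coefficients implicit as entries of $V^{-1}$ or as $[x^c]L_i(x)$; the paper's choice trades that generality for an immediate explicit formula, though for the way the claim is used later (only existence of some complex coefficients matters), either proof suffices.
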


\begin{proof}
  Take $x_j=e^{2\pi i j/(p+1)}$ for $0\leq j\leq p$. Then using the
  inverse discrete Fourier transform, we get
  $[x^c]g(x)=(p+1)^{-1}\sum_{j=0}^{p} e^{-2\pi i j c/(p+1)} g(x_j)$.
\end{proof}

We first bound the entanglement of a coefficient of a term in the
product \eqref{eq:qofz}.
\begin{lemma}
  For any $r\in [w]$, interval $S\subseteq [n]$ and integers
  $\ell\leq k\leq a$ we have 
  \begin{align*}
    \mathrm{SR}_r\big( [Z_r^\ell] (H_{S}(Z))^{k}\big)
      \leq (4a^2d^4 n)^{\ell+1}.
  \end{align*}
\label{lem:srterm}
\end{lemma}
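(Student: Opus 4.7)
The plan is to isolate the cut-crossing contributions in $H_S(Z)$ at the column-$r$ cut. Only the plaquettes $h_{r,j}$ straddle the cut, and in $H_S(Z)$ they always appear multiplied by $Z_r$. I would therefore decompose $H_S(Z) = A + Z_r B$, where $B \EqDef \frac{1}{2w}\sum_{j\in S} h_{r,j}$ gathers every crossing contribution and $A$ collects the rest: the identities, the projectors $\Pi_{L,j}, \Pi_{R,j}$, and the non-crossing terms $h_{i,j} Z_i$ for $i\neq r$. Each $h_{r,j}$ acts on four qudits with two on each side of the cut, so $\mathrm{SR}_r(h_{r,j}) \le d^4$ and hence $\mathrm{SR}_r(B) \le |S| d^4 \le n d^4$.

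The key amortization comes from the observation that every summand in $A$ is supported on exactly one side of the cut: $\Pi_{L,j}$ on columns $\le 1$, $\Pi_{R,j}$ on columns $\ge w$, and $h_{i,j}$ for $i\neq r$ on one side by geometry. Thus $A = A_L + A_R$ with $A_L,A_R$ acting on disjoint subsystems, giving $[A_L,A_R]=0$. The binomial expansion $A^m = \sum_{t=0}^{m}\binom{m}{t} A_L^{t} A_R^{m-t}$ then writes $A^m$ as a sum of $m+1$ cross-cut tensor products, each of Schmidt rank $1$. Hence $\mathrm{SR}_r(A^m) \le m+1$, replacing the naive exponential bound $2^m$ one would get from plain submultiplicativity. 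This polynomial, rather than exponential, growth in $m$ is the engine that makes the final estimate depend on $\ell$ and not on $k$.

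With these ingredients I would expand the noncommutative binomial $(A + Z_r B)^k$ and read off the coefficient of $Z_r^\ell$ as a sum over weak compositions of $k-\ell$ into $\ell+1$ nonnegative parts:
\begin{align*}
[Z_r^\ell](H_S(Z))^k = \sum_{\substack{m_0,\ldots,m_\ell \ge 0 \\ m_0+\cdots+m_\ell = k-\ell}} A^{m_0} B A^{m_1} B \cdots B A^{m_\ell}.
\end{align*}
Submultiplicativity bounds each word by $\prod_{i=0}^{\ell}(m_i+1)\cdot (nd^4)^{\ell} \le (k+1)^{\ell+1}(nd^4)^{\ell}$, and the number of weak compositions equals $\binom{k}{\ell} \le a^{\ell}$. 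Subadditivity and $k\le a$ then give $\mathrm{SR}_r\!\left([Z_r^\ell](H_S(Z))^k\right) \le a^{\ell}\cdot (2a)^{\ell+1}(nd^4)^{\ell} = 2a\,(2a^2 nd^4)^{\ell}$, which is dominated by $(4a^2 d^4 n)^{\ell+1}$ since the ratio of the latter to the former is at least $2\ell\cdot a\cdot nd^4 \ge 1$.

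The main obstacle I anticipate is securing the left-right splitting $A=A_L+A_R$ cleanly, i.e., verifying that the supports of $\Pi_{L,j}$ and $\Pi_{R,j}$ genuinely sit on opposite sides of the column-$r$ cut (straightforward for $1\le r\le w-1$, with a minor adjustment needed at the boundary $r=w$ where $\Pi_{R,j}$ may carry some support on column $w$ itself and contribute an $O(d^4)$ factor rather than $1$). Once the disjoint-support decomposition is in place, the rest of the proof is routine bookkeeping driven by the binomial expansion and the weak-composition count above.
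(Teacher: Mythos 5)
Your proposal is correct and follows essentially the same route as the paper's proof: the same split $H_S(Z)=A+Z_rB$ with the commuting one-sided pieces $A_{\mathrm{left}},A_{\mathrm{right}}$ giving $\mathrm{SR}_r(A^m)\le m+1$, the bound $\mathrm{SR}_r(B)\le d^4 n$, and the expansion of $[Z_r^\ell](A+Z_rB)^k$ over compositions of $k-\ell$; your count of $\binom{k}{\ell}$ compositions is in fact slightly sharper than the paper's cruder $(k+1)^{\ell+1}$, and the final comparison to $(4a^2d^4n)^{\ell+1}$ goes through (the ratio is $2^{\ell+1}a\,d^4n\ge 1$, a harmless slip in your stated expression). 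The boundary subtlety you flag at $r=w$, where $\Pi_{R,j}$ touches column $w$, is likewise not treated separately in the paper's proof and does not affect the overall argument, since the amortization step only needs cuts with $r$ in the interior of the band.
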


\begin{proof}
  Let $B=\frac{1}{2w}\sum_{j\in S} h_{rj} $ and write
  \begin{align}
    H_S(Z)=A+BZ_r \qquad \qquad 
  \label{eq:opsAB}
  \end{align}
  with $A=A_{\mathrm{left}}+ A_{\mathrm{right}}$, where
  \begin{align*}
    A_{\mathrm{left}}=\frac{1}{2w} \sum_{j\in S} 
      \big( I-\Pi_{L,j}+\sum_{i=1}^{r-1} h_{ij}Z_i\big) 
        \qquad A_{\mathrm{right}}=\frac{1}{2w} 
    \sum_{j\in S}\big( I-\Pi_{R,j} + \sum_{i=r+1}^{w} h_{ij}Z_i\big).
  \end{align*}
  Using $[A_{\mathrm{left}}, A_{\mathrm{right}}]=0$ gives
  $A^p=\sum_{q=0}^{p} {p \choose q}A_{\mathrm{left}}^{q}
  A_{\mathrm{right}}^{p-q}$. Since $\mathrm{SR}_r(A_{\mathrm{left},
  \mathrm{right}})=1$ we get
  \begin{align}
    \mathrm{SR}_r(A^p)\leq p+1.
  \label{eq:ranka}
  \end{align}
  Since $h_{rj}$ acts on two qudits on each side of the cut, we have
  $\mathrm{SR}_r(h_{rj})\leq d^4$ and therefore
  \begin{align}
    \mathrm{SR}_r(B)\leq d^4|S|\leq d^4 n.
  \label{eq:rankb}
  \end{align}
  Using \Eq{eq:opsAB} we can expand
   \begin{align}
    [Z_r^\ell] (H_S(Z))^k = \sum_{x_0+\ldots+ x_\ell=k-\ell}
      A^{x_0} B A^{x_1} B\ldots B A^{x_{\ell}} 
  \label{eq:coefsum}
  \end{align}
  Using the bounds Eqs.~(\ref{eq:ranka}, \ref{eq:rankb}) and the
  fact that $x_i\leq k$ for each $i$, we see that 
  \begin{align*}
    \mathrm{SR}_r(A^{x_0} B A^{x_1} B\ldots B A^{x_{\ell}})
      \leq  (k+1)^{\ell+1}(d^{4}n)^{\ell}.
  \end{align*}
  The sum in \Eq{eq:coefsum} contains at most $(k+1)^{\ell+1}$ terms
  (since $0\leq x_i\leq k$). Hence
  \begin{align*}
    \mathrm{SR}_r( [Z_r^\ell] (H_{S}(Z))^k)
      \leq (k+1)^{2\ell+2}\cdot (d^{4}n)^{\ell}
      \leq (4a^2d^4n)^{\ell+1},
  \end{align*}
  where in the last inequality we used the fact that $k\le a$ and
  $a\ge 1$, hence $(k+1)^2\le 4a^2$.
\end{proof}

We are now in position to prove \Lem{lem:srpabz}.
\begin{proof}[Proof of \Lem{lem:srpabz}]
  Let $\mathbb{N}\EqDef\{0,1,2,\ldots\}$ and $\mathcal{K}\EqDef
  \{\alpha\in \mathbb{N}^{w}: \sum_{i=1}^{w} \alpha_i=a\}$.  We have
  \begin{align*}
    Q(Z)=\sum_{\alpha\in \mathcal{K}} Q_{\alpha} \prod_{j=1}^{w} Z_j^{\alpha_j}  \qquad Q_{\alpha} \EqDef [Z_1^{\alpha_1}Z_2^{\alpha_2}\ldots Z_w^{\alpha_w}] Q(Z).
  \end{align*}
  Since $0\leq \alpha_i\leq a$ for each $i$, we have
  $|\mathcal{K}|\leq (a+1)^{w}\leq (2a)^w$, and
73A72EA3  \begin{align}
    \mathrm{SR}(Q(Z))\leq (2a)^{w} \max_{\alpha\in \mathcal{K}} \mathrm{SR}\left(Q_{\alpha}\right)
  \label{eq:srqz}
  \end{align}
  To upper bound the RHS, let $\alpha\in \mathcal{K}$ be given. Let
  $M\EqDef \min_{i\in [w]} \alpha_i$ and let $r\in [w]$ be such that
  $\alpha_r=M$. Since $\sum_{i=1}^{w} \alpha_i=a$, we have $M\leq
  a/w$. Below we show 
  \begin{align}
    \mathrm{SR}_r(Q_{\alpha})\leq (2a)^{w+b}(4a^2d^4n)^{M+b}.
  \label{eq:srr}
  \end{align}
  Since $|r-c|\leq w/2$ and each column of the lattice contains $n$
  qudits, \Eq{eq:srr} implies 
  \begin{align*}
    \mathrm{SR}(Q_{\alpha})\leq d^{wn} \mathrm{SR}_r(Q_{\alpha})
      \leq d^{wn}(2a)^{w+b}(4a^2d^4n)^{a/w+b}
      \leq (2a)^{w+b}(4a^2d^4n)^{a/w+b+wn}
  \end{align*}
  where we used $M\leq a/w$. \Eq{eq:qzbound} follows by plugging
  into \Eq{eq:srqz} and bounding $(4a^2)^w(2a)^b\leq
  (4a^2)^{a/w+b+wn}$.

  To establish \Eq{eq:srr}, define $G(Z)\EqDef[Z_r^{M}] Q(Z)$ (a
  function of $Z_i$ with $i\neq r$) and write 
  \begin{align}
    Q_{\alpha} = [Z_1^{\alpha_1}Z_2^{\alpha_2}\ldots Z_w^{\alpha_w}] Q(Z)
      = [Z_1^{\alpha_1}\ldots Z_{r-1}^{\alpha_{r-1}} Z_{r+1}^{\alpha_{r+1}} \ldots Z_w^{\alpha_w}] G(Z).
  \end{align}
  Applying Claim~\ref{claim:poly} $w-1$ times inductively and using
  $\alpha_i\leq a$, we see that $Q_{\alpha}$ is a linear combination
  of at most $(a+1)^{w-1}$ operators $ G(X)$ with $X\in
  \mathbb{C}^{w-1}$. Therefore
  \begin{align}
    \mathrm{SR}_r(Q_{\alpha})\leq (a+1)^{w-1} \max_{X\in \mathbb{C}^{w-1}}\mathrm{SR}_r( G(X)).
  \label{eq:maxx}
  \end{align}
  Finally, using \Eq{eq:qofz} we obtain
  \begin{align*}
    G(Z)=[Z_r^{M}] Q(Z) =\sum_{\ell_1+\ell_2+\ldots +\ell_b =M} 
      [Z_r^{\ell_1}] (H_{S_1}(Z))^{j_1}) [Z_r^{\ell_2}] (H_{S_2}(Z))^{j_2})\ldots [Z_r^{\ell_b}] (H_{S_b}(Z))^{j_b}).
  \end{align*}
  Since $0\leq \ell_i\leq M$, the number of terms in the above sum
  is at most $(M+1)^b\leq (2a)^b$. Using this fact and Lemma
  \ref{lem:srterm} we get
  \begin{align*}
    \mathrm{SR}_r( G(Z))\leq \sum_{\ell_1+\ell_2+\ldots +\ell_b =M} \prod_{p=1}^{b} \mathrm{SR}_r([Z_r^{\ell_p}] (H_{S_p}(Z))^{j_p}) )\leq (2a)^{b} (4a^2d^4n)^{M+b}.
  \end{align*}
  Substituting in \Eq{eq:maxx} and bounding $(a+1)^w\leq (2a)^w$
  completes the proof of \Eq{eq:srr}.
\end{proof}

\end{document}